


\documentclass[11pt, a4paper]{article} 

\usepackage{a4}
\usepackage{geometry}

\usepackage{latexsym}
\usepackage{amssymb}
\usepackage{amsfonts}
\usepackage{mathrsfs}
\usepackage{amsmath}
\usepackage{amsthm}
 \usepackage[colorlinks=true,linkcolor=black,citecolor=black]{hyperref}
\usepackage[pdftex]{graphicx}

\textwidth16cm
\textheight23cm
\topmargin-1cm
\oddsidemargin0cm
\evensidemargin0cm
\frenchspacing
\usepackage{geometry}





\newcommand{\bfi}{\begin{fig}}
\newcommand{\efi}{\end{fig}}

\newcommand{\btab}{\begin{tab}}
\newcommand{\etab}{\end{tab}}

\newcommand{\barr}{\begin{array}}
\newcommand{\earr}{\end{array}}

\newcommand{\beqq}{\begin{equation}}
\newcommand{\eeqq}{\end{equation}}

\newcommand{\beao}{\begin{eqnarray*}}
\newcommand{\eeao}{\end{eqnarray*}\noindent}

\newcommand{\beam}{\begin{eqnarray}}
\newcommand{\eeam}{\end{eqnarray}\noindent}

\newcommand{\bdis}{\begin{displaymath}}
\newcommand{\edis}{\end{displaymath}\noindent}

\def\bbr{{\Bbb R}}   
\def\bbn{{\Bbb N}}

\def\bbl{{\Bbb L}}


\newcommand{\eps}{{\varepsilon}}

\newcommand{\vp}{{\varphi}}

\newcommand{\wh}{\widehat}
\newcommand{\wt}{\widetilde}

\newcommand{\auf}{[\![}

\newcommand{\mal}{\stackrel{\mbox{\tiny$\bullet$}}{}}


\newtheorem{Satz}{Theorem}[section]

\newtheorem{Proposition}[Satz]{Proposition}
\newtheorem{Lemma}[Satz]{Lemma}
\newtheorem{Definition}[Satz]{Definition}

\newtheorem{Bemerkung}[Satz]{Remark}


\usepackage{stmaryrd}
\usepackage{amsmath,amsthm}
\usepackage{amssymb}
\usepackage[numbers,square]{natbib}

\newtheorem{thmD}[Satz]{Definition}
\newtheorem{thmB}[Satz]{Remark}
\newtheorem{thmP}[Satz]{Proposition}
\newtheorem{thmS}[Satz]{Theorem}
\newtheorem{thmL}[Satz]{Lemma}

\usepackage{graphicx}
\usepackage{subfigure}
\usepackage{framed}





\begin{document}
\bibliographystyle{apalike}
\numberwithin{equation}{section}

\title{Modeling capital gains taxes for trading strategies of infinite variation\thanks{The authors thank Christoph Czichowsky and Teemu Pennanen for fruitful discussions and an anonymous
referee for valuable comments.}} 
\author{Christoph K\"uhn\thanks{Institut f\"ur Mathematik, Goethe-Universit\"at Frankfurt, D-60054 Frankfurt a.M., Germany, e-mail: \{ckuehn, ulbricht\}@math.uni-frankfurt.de} 
\and Bj\"orn Ulbricht\footnotemark[2] }
\date{}
\maketitle
\sloppy

\begin{abstract}
In this article, we show that the payment flow of a linear tax on trading gains from a security with a semimartingale price process
can be constructed for all c\`agl\`ad and adapted trading strategies. It is characterized as the unique continuous extension of the tax payments for elementary strategies
w.r.t. the convergence ``uniformly in probability''. 
In this framework, we prove that under quite mild assumptions dividend payoffs have almost surely a negative effect on investor's after-tax wealth
if the riskless interest rate is always positive. 
In addition, we give an example for tax-efficient strategies for which the tax payment flow can be computed explicitly.
\end{abstract}

\begin{tabbing}
{\footnotesize Keywords:} capital gains taxes, semimartingales, local time, dividend policy\\ 

{\footnotesize JEL classification: G10, H20, } \\

{\footnotesize Mathematics Subject Classification (2010): 91G10, 91B60, 60G48, 60J55} 
\end{tabbing}
\section{Introduction}
\setcounter{equation}{0}

%

In this article, we want to answer the following question. Can tax payments on capital gains be modeled for continuous time trading strategies of the kind they generally
appear in mathematical finance~? Most of these strategies possess infinite variation, as, e.g., the optimal stock position in the Merton problem or the replicating portfolio 
of an option in the Black Scholes model. A straight forward construction of the tax payment flow, analogous to time-discrete models, would be based both on accumulated purchases and 
accumulated sales of assets. But, of course, these quantities explode if strategies are of infinite variation. 

For simplicity, we consider a {\em linear} taxing rule with tax rate~$\alpha\in (0,1)$, i.e., if an asset 
with stochastic price process~$S$ is purchased at time $t_1$ and sold at time $t_2$, the
trading gains $S_{t_2}-S_{t_1}$ are taxed at $\alpha (S_{t_2}-S_{t_1})$. 
Negative tax payments for losses, so-called tax 
credits, can be interpreted as a refund of former tax payments or a deduction against future tax payments. 

An important feature of the tax code is the fact that trading gains are not taxed before the asset 
is liquidated, i.e., the gain is realized. Thus, the investor can influence the timing of the tax payments, namely she holds a {\em deferral option}. 
Possible dividend payments are taxed immediately. 
A crucial observation is the following. If the investor buys, e.g., 100 General Motors stocks at time $t_1$, another 100 at time~$t_2$, and sells 100 at 
time~$t_3$, it matters {\em which} of the stocks she sells, as in general $\alpha\cdot 100 (S_{t_3}-S_{t_2})\not=\alpha\cdot 100 (S_{t_3}-S_{t_1})$. 
When the portfolio is liquidated at some date~$t_4$
the difference of the accumulated tax payments disappears because $\alpha\cdot 100 (S_{t_3}-S_{t_2}) + \alpha\cdot 100 (S_{t_4}-S_{t_1}) 
= \alpha\cdot 100 (S_{t_3}-S_{t_1}) + \alpha\cdot 100 (S_{t_4}-S_{t_2})$.
But, the order of sales still matters for discounted payments if the riskless interest rate does not vanish.
In the case of a positive riskless interest rate, it is more favorable to realize smaller trading gains first.
Moreover, if the stock falls below its purchasing price, it is worthwhile to sell it in order to realize the trading loss and rebuy 
it immediately, which is called a {\em wash sale}. These facts were already observed in \citet{dyb1}, see Properties~1 and 2 on page 6. 
For a rigorous proof of these seemingly obvious statements 
considering arbitrary dynamic trading strategies, see Appendix~\ref{25.9.2014.1} of the current paper. 
For investors, wash sales are a method to claim a capital loss without actually changing their position.
The regulation described above that leaves it up to the taxpayer to choose which trading gain to realize first when a stock position is reduced is called the {\em exact tax basis}.
An example is the U.S. tax law that allows investors to use a separate tax 
basis for each security. But, the U.S. tax law disallows loss deductions if the
same stock is repurchased within thirty days. However, this regulation can
easily be bypassed by purchasing a similar stock. 
There are also other tax codes, specifing the basis to which the price 
of a security has to be compared in order to evaluate the capital gains 
(or losses). In some countries, the 
basis is the average purchase price of all stocks of the same firm (e.g., 
in Canada) or the price of the stock which was  
bought first (``first-in-first-out'', a procedure followed, e.g., in Germany). Of course, the exact tax basis offers the
investor the maximal possible flexibility to make use of her tax-timing option. Economically, the exact tax basis seems to be the most reasonable one because 
highly correlated stocks of different firms are anyhow considered separately. 

Although in practice capital gains taxes may be the most relevant market friction, 
there is only little literature on capital gains taxes in advanced continuous time models. 
Ben Tahar, Soner, and Touzi~\cite{bentaharsonertouzi10,bentaharsonertouzi07} solve the Merton problem 
with proportional transaction costs and a tax based on the average of past purchasing prices. 
This approach has the advantage that the optimization problem is Markov with the one-dimensional tax basis 
as additional state variable.
Cadenillas and Pliska~\cite{cadenillas1999optimal} and Buescu, Cadenillas, and Pliska~\cite{buescu2007note} maximize the long-run growth rate of investor's wealth 
in a model with taxes and transaction costs. Here, after each portfolio regrouping, the investor has to pay capital gains taxes for her total portfolio.
Jouini, Koehl, and Touzi~\cite{jouni2,jouni1} consider the first-in-first-out priority rule with one nondecreasing asset price, but with a quite general tax code, 
and derive first-order conditions for the optimal consumption
problem. The problem consists of injecting cash from the income stream into the single asset and withdrawing it for consumption. Consequently, all admissible strategies are of finite
variation. \citet{dyb1} and \citet{dem1} model the exact tax basis, as in the current article, but in discrete time and relate the portfolio optimization problem to nonlinear programming.
 
Whereas in models with proportional transaction costs it is quite obvious that strategies of exploding variation lead to exploding costs and thus
to an immediate ruin for sure, capital gains taxes do not explode. Namely, taxes are not triggered by portfolio regroupings alone if there are no price changes. 
In addition, even if the investment strategy forces that gains from upward movements of the stock are realized, there is to some extent an offset by losses
due to tax credits. 
On the other hand, a straightforward generalization of the model by \cite{dyb1, dem1}
to continuous time is only available for finite variation strategies -- as not only the number of shares held in the portfolio enters in the self-financing condition, 
but it is based on both purchases 
and sells. In this article, we show how tax payments can nevertheless be constructed under the condition that stocks are semimartingales.\\ 

One application is to compare different dividend policies. 
As dividend payoffs, in contrast to (unrealized) book profits, have to be taxed immediately, capital gains taxes are also relevant for dividend policies. 
Among economists, there have been extensive discussions about optimal 
dividend policies. In the famous article by \citet{miller1961dividend}, their effect on the current stock price is considered, and their irrelevance 
for the firm valuation is shown in perfect markets (i.e., without taxes). A question arising from \cite{miller1961dividend} is: ``Why do firms pay dividends?''. 
The so-called {\em dividend puzzle}, at first appearing 
in \citet{black1976dividend}, states that there are no rational reasons for a firm to pay dividends. 
\citet{bernheim1990tax} solves this puzzle considering a model (with taxes) in which firms attempt to signal profitability by distributing cash to shareholders.
For a survey on these general, but mainly less formal, discussions on dividend policies we refer to the book of \citet{RePEc}.

The current article does {\em not} make any contribution to the solution of the dividend puzzle. Instead, we establish precise conditions under which the widely held view 
that dividends have a negative impact on investors' after-tax wealths (cf., e.g., \cite{black1976dividend}) can be proven in a model that allows for dynamic trading. 
If investment opportunities were restricted to a single asset with increasing price process, this property would be quite obvious. 
Indeed, let $r_t>0$ be the growth rate of the asset. By strict convexity of the exponential function, one has
\beam\label{12.9.2013.1}
1+ (1-\alpha)\left(\exp\left(\int_0^t r_s\,ds\right)-1\right)  > \exp\left((1-\alpha)\int_0^t r_s\,ds\right).
\eeam
The LHS of (\ref{12.9.2013.1}) can be interpreted as the value of a portfolio with initial capital~$1$ when capital gains are taxed at time~$t$ with factor~$\alpha$.  
The RHS corresponds to the same situation, but capital gains are already taxed at the time they occur. This tax regulation takes effect if the asset always has price~1 but pays out 
the continuous dividend~$r_t\,dt$ (the after-tax dividend~$(1-\alpha)r_t\,dt$ is then reinvested in the asset). 
However, considering dynamic portfolio regroupings and asset price processes that are not increasing with probability~$1$, a proof of the conjecture that the effect of dividends is always
negative, is, even in discrete time, much trickier than (\ref{12.9.2013.1}). 
We give a proof of this assertion in the continuous time framework provided in this article. 

Finally, to demonstrate the tractability of the model, we give an example for tax-efficient dynamic trading strategies for which the tax payment flow can be computed explicitly and is easy
to interpret.

The article is organized as follows. In Section~\ref{main_results}, we present the model and the first main result, Theorem~\ref{main_theorem}, showing how to construct tax payment 
processes for adapted, left-continuous trading strategies. The construction is based on automatic wash sales and the rule to sell shares with 
shorter residence time first. The optimality
of this procedure is proven in Appendix~\ref{25.9.2014.1} for the discrete time model of Dybvig and Koo~\cite{dyb1}.
In Section~\ref{section_book_profits}, basic properties of the book profits of a portfolio are discussed. 
They are used in the proof of Theorem~\ref{main_theorem} in Section~\ref{section_proofmain}. In Section~\ref{section_self}, the self-financing condition of the model is introduced.
In Section~\ref{section_comparison}, the second main result, Theorem~\ref{propV}, showing that the investor is always better off in a model with a stock which does not pay dividends
is stated and proved. 
Section~\ref{9.10.2014.1} is about tax-efficient strategies, and Section~\ref{2.8.2013.1} gives examples that show the necessity of some assumptions.

\section{Construction of the tax payment process}\label{main_results}

Throughout the article, we fix a terminal time $T\in\mathbb{R}_+$ and a filtered probability space~$(\Omega,\mathcal{F},(\mathcal{F}_t)_{t\in[0,T]},P)$ satisfying the usual conditions. 
Denote by $\mathcal{O}$\index{$\mathcal{O}$} (resp. by $\mathcal{P}$\index{$\mathcal{P}$}) the optional $\sigma$-algebra 
(resp. the predictable $\sigma$-algebra) on $\Omega\times[0,T]$. For optional processes~$X, X^n, n\in\mathbb{N}$, we write $X^n\stackrel{\textrm{up}}{\to} X$ iff $X^n$ converges 
uniformly in probability to $X$, i.e., $\sup_{t\in[0,T]}|X^n_t-X_t|$ converges to $0$ in probability. 
Equality of processes is understood up to evanescence.
A process~$X$ is called l\`agl\`ad iff all paths possess finite left and right limits (but they can have double jumps).
We set $\Delta^+ X:=X_+-X$ and $\Delta X:=\Delta^- X:=X-X_-$, where $X_{t+}:=\lim_{s\downarrow t} X_s$ and $X_{t-}:=\lim_{s\uparrow t} X_s$.
For a random variable $Y$, we set $Y^+:=\max(Y,0)$ and $Y^-:=\max(-Y,0)$. 

For an investor trading in finitely many different stocks, the total tax payment is just the sum of the tax payments considering only gains from one type of stock. Thus, 
it is sufficient to consider only one risky asset (sometimes called stock). 
Its price process is given by the semimartingale~$(S_t)_{t\in[0,T]}$ (thus the paths are c\`adl\`ag).
The stock pays out nonnegative dividends.
Accumulated dividends per share are modeled by the nondecreasing adapted c\`adl\`ag process $(D_t)_{t\in[0,T]}$. 
All capital gains (positive or negative) are taxed with the rate $\alpha\in(0,1)$.
But, whereas dividends are taxed immediately, trading gains arising from stock price movements are not taxed before they are realized. 
Denote by $\mathbb{L}$ the set of all left-continuous adapted processes possessing finite right limits. The investor's strategy is the number of identical stocks she holds,
and it is modeled by some $\varphi\in\mathbb{L}$ with $\varphi_0=0$ and $\varphi\ge 0$. Short-selling is forbidden as otherwise the investor can hold {\em one} long and {\em one} 
short position of the same stock at the same time, and this can lead to an arbitrage opportunity under a linear 
tax rule and a positive riskless interest rate (losses are immediately realized, and the corresponding gains are deferred, cf. Constantinides~\cite{constantinides.1983}). 
The assumption that $\varphi_0=0$ is solely for notational 
convenience (cf. (\ref{25.7.2013.3})). It does not rule out that the investor starts with a bulk trade $\varphi_{0+}>0$.
%

\begin{Bemerkung}
In general, the tax payment flow cannot be derived from the process~$\vp$ alone as payments depend on {\em which} shares the investor sells when $\vp$ is reduced
and on the occurrence of wash sales
that do not enter in $\vp$. Given some $\vp$, we work with a special procedure that dictates which of the shares to sell.  
In Appendix~\ref{25.9.2014.1}, for a nonnegative interest rate, the pathwise optimality of this procedure is proven 
in the discrete time model of Dybvig and Koo~\cite{dyb1} where arbitrary shares can be sold. 
We use that a payment obligation in the future is prefered to a payment obligation today. With this intuition in mind, the constructions in the 
current section are well-founded, but there are also good reasons to read Appendix~\ref{25.9.2014.1} first.
\end{Bemerkung}

\begin{Bemerkung}
It is important to note that the pathwise optimality of wash sales in the model of Dybvig and Koo that motivates our model, see also Theorem~\ref{theorem:main111},
is based on the absence of transaction
costs. With proportional transaction costs, there would be a trade-off between the aim to realize losses immediately and the aim to avoid transaction costs.
The (non-)optimality of wash sales would depend on the size of book losses and transaction costs, but also on (the probability law of) future asset price movements (e.g.,
if there is a reason to liquidate the asset anyhow shortly afterwards, a wash sale is less profitable). 
Thus, in the presence of transaction costs, one cannot reduce the strategy of \cite{dyb1} independently of investor's beliefs and preferences to a one-dimensional 
predictable process~$(\vp_t)_{t\in[0,T]}$
that only specifies the total number of shares in the portfolio.
This means that the tractability of our model is essentially based on the absence of transaction costs. 
Consequently, transaction costs cannot be used to rule out trading strategies of infinite variation.
\end{Bemerkung}

To construct the tax payment process, several mathematical objects have to be introduced. For every $t$, we sort the $\vp_t$ stocks by the time spending in the portfolio and 
label them by $x$: the larger $x$ the longer the residence time in the portfolio. 
We follow the above-mentioned procedure:
\beam\label{15.8.2014.1}
\mbox{\em ``latest purchased stocks are sold first''.} 
\eeam
With this procedure, the purchasing time of the $x$th stock is defined by 
\beam\label{13.8.2014.1}
\tau_{t,x}:=\left\{ 
\begin{array}{cl}
\sup M_{t,x}  & \textrm{if }M_{t,x}\not=\emptyset\\
 t & \textrm{otherwise}
\end{array}
\right.,\quad t\in [0,T], x\in\bbr_+,
\eeam
where $M_{t,x} := \{u\in\mathbb{R}_+\ |\  (u\le t\ \mbox{and}\ x-\varphi_t + \varphi_u \le 0)\ \mbox{or}\ (u< t\ \mbox{and}\ x-\varphi_t + \varphi_{u+} \le 0)\}$. 
By $\varphi_0=0$ and $\varphi\ge 0$, one has that 
\beam\label{19.9.2013.1}
M_{t,x}=\emptyset \Leftrightarrow x > \varphi_t
\eeam
and thus
\beam\label{25.7.2013.3}
\tau_{t,x} = 1_{(x\le \varphi_t)} \sup M_{t,x} + 1_{(x > \varphi_t)} t.
\eeam
The construction is illustrated in Figure~\ref{12.8.2014.1}.
\begin{figure}[ht]
    \subfigure{\includegraphics[width=0.98\textwidth, height=190px]{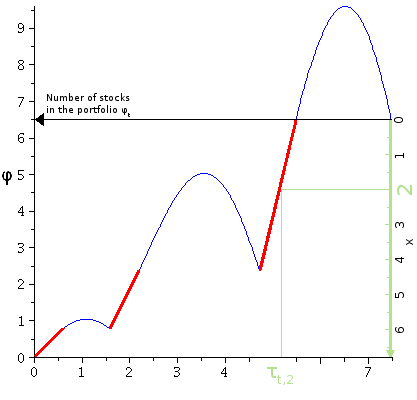}}
\caption{On the ordinate, the stocks that are in the portfolio at time~$t$ are sorted by descending label~$x$ (see the green axis). 
$\tau_{t,x}$, the purchasing time of stock~$x$, is the last time~$u$ before 
$t$ with $\vp_u=\vp_t-x$ (see the case $x=2$). 
The pieces that are marked in red symbolize the stocks (and their purchasing times) which are still in the portfolio at time~$t$.
%
If the position is reduced, stocks with lower residence time in the portfolio are sold first. 
}\label{12.8.2014.1}
\end{figure} 
Next, an automatic loss realization is modeled.
The trading gain of piece~$x$ is decomposed into
\beam\label{13.8.2014.2}
S_t - S_{\tau_{t,x}} = \underbrace{\inf_{\tau_{t,x}\leq u\leq t}S_u - S_{\tau_{t,x}}}_{\mbox{realized losses by wash sales}} 
+ \underbrace{S_t - \inf_{\tau_{t,x}\leq u\leq t}S_u}_{\mbox{unrealized book profits}}.
\eeam
This is motivated as follows: if a stock falls below its purchasing price, it is sold and rebought in order to declare a loss. 
Then, in the continuous time limit, the realized loss is the first summand on the
RHS of (\ref{13.8.2014.2}). The residual second summand are the unrealized book profits. 
\begin{thmD}[Book profits]\label{bookprofits}
Let $\varphi\in\mathbb{L}$ with $\varphi_0=0$ and $\varphi\ge 0$. The mapping $F:\Omega\times[0,T]\times\mathbb{R}_+\to\mathbb{R}_+$ with
 \begin{align}
    F_\omega(t,x):=S_t(\omega)-\inf_{\tau_{t,x}(\omega)\leq u\leq t}S_u(\omega),\label{ftx1}
 \end{align}
where $\tau_{t,x}$ is defined in (\ref{13.8.2014.1}), is called the {\em book profit function}.
\end{thmD}
A book profit is a gain that is demonstrated on paper, but not actually real yet.
By the wash sales and the fact that a newly bought share starts with book profit zero, a share with a longer stay in the portfolio
possesses a higher (or equal) book profit, i.e.,  $x\mapsto F_{\omega}(t,x)$ is nondecreasing.

Note that wash sales neither enter into the strategy~$\vp$ (implying that these transactions have no impact on the trading gains) 
nor in the purchasing times~$\tau_{t,x}$. The latter means that $\tau_{t,x}$ is the time at which the share possessing at time $t$ with label~$x$ is bought and kept in 
the portfolio afterwards at least up to time~$t$, {\em apart from later rebuys caused by wash sales}.

\begin{Bemerkung}
The book profit function~(\ref{ftx1}) that depends on the paths of the stock price and the total number of shares turns out to be the key object to construct tax payments 
for strategies of infinite variation and to find out tax-efficient strategies. 
\end{Bemerkung}

\begin{thmP}\label{properties_F}
$F(t,x)$ and $\tau_{t,x}$ fulfill the following properties: 
\begin{itemize}
 \item[(i)] The mapping $x\mapsto\tau_{t,x}$ is nonincreasing on $[0,\varphi_t]$.
 \item[(ii)] $F(t,x)=0$ for $x>\varphi_t$.
 \item[(iii)] $x\mapsto F(t,x)$ is nondecreasing on $[0,\varphi_t]$.
 \item[(iv)] $x\mapsto F(t,x)$ is left-continuous.
 \item[(v)] If $\varphi$ is an elementary strategy, then $\lim_{s\downarrow t}F(s,x)$ exists for all $t,x$.
\end{itemize}
\end{thmP}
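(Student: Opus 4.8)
\emph{Parts (i)--(iii).} These I would settle directly from the definitions. For (i), observe that for $0\le x_1\le x_2$ the defining inequalities $x-\vp_t+\vp_u\le 0$ and $x-\vp_t+\vp_{u+}\le 0$ only get harder to satisfy as $x$ grows, so $M_{t,x_2}\subseteq M_{t,x_1}$; since both sets are nonempty when $x_1,x_2\le\vp_t$ by (\ref{19.9.2013.1}), taking suprema gives $\tau_{t,x_1}\ge\tau_{t,x_2}$. For (ii), $x>\vp_t$ forces $\tau_{t,x}=t$ by (\ref{25.7.2013.3}), hence $F(t,x)=S_t-S_t=0$. For (iii), part (i) gives $[\tau_{t,x_1},t]\subseteq[\tau_{t,x_2},t]$ when $x_1\le x_2\le\vp_t$, so the infimum of $S$ over the first interval dominates that over the second, and monotonicity of $x\mapsto F(t,x)$ follows from (\ref{ftx1}).

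\emph{Part (iv), the main point.} Beyond $\vp_t$ the function $F(t,\cdot)$ is identically $0$ by (ii), so I fix $t$ and $x_0\in(0,\vp_t]$ and set $g(x):=\inf_{u\in[\tau_{t,x},t]}S_u$, so that $F(t,x)=S_t-g(x)$ on $[0,\vp_t]$. By (iii) the function $g$ is nonincreasing, so $g(x_0-)$ exists with $g(x_0-)\ge g(x_0)$, and I must prove the reverse inequality. The plan is to show first that $x\mapsto\tau_{t,x}$ has no upward jump from the left, i.e.
\[
\tau^\ast:=\lim_{x\uparrow x_0}\tau_{t,x}=\tau_{t,x_0}.
\]
For this I would take $x_n\uparrow x_0$ and near-maximizers $v_n\in M_{t,x_n}$ with $v_n\in(\tau_{t,x_n}-1/n,\tau_{t,x_n}]$, so $v_n\to\tau^\ast$; after passing to a subsequence along which the $v_n$ approach $\tau^\ast$ monotonically and along which one fixed alternative in the definition of $M_{t,\cdot}$ holds, I let $n\to\infty$. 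If the $v_n$ converge from the left, left-continuity of $\vp$ gives $\vp_{\tau^\ast}\le\vp_t-x_0$, so $\tau^\ast\in M_{t,x_0}$ via the first condition; if they converge from the right, the existence of right limits of $\vp$ gives $\vp_{\tau^\ast+}\le\vp_t-x_0$ with $\tau^\ast<t$, so $\tau^\ast\in M_{t,x_0}$ via the second condition. Either way $\tau^\ast\le\sup M_{t,x_0}=\tau_{t,x_0}$, and with (i) this is an equality. Once this is established, $\tau_{t,x}\downarrow\tau_{t,x_0}$ as $x\uparrow x_0$, so $\bigcup_{x<x_0}[\tau_{t,x},t]$ equals $[\tau_{t,x_0},t]$ or $(\tau_{t,x_0},t]$; in the latter case right-continuity of $S$ at $\tau_{t,x_0}$ shows that $\inf S$ over this union still equals $g(x_0)$. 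Hence $g(x_0-)=g(x_0)$ and $F(t,\cdot)$ is left-continuous.

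\emph{Part (v).} Here I would argue pathwise. Since $\vp$ is elementary, each path is piecewise constant in time with finitely many jumps, so for every $t$ there is $\delta>0$ with $\vp\equiv c:=\vp_{t+}$ on $(t,t+\delta]$. If $x=0$ then $F(s,0)\equiv 0$, and if $x>c$ then $x>\vp_s$ for $s\in(t,t+\delta]$, so $F(s,x)=0$ by (ii); in both cases the right limit is $0$. If $0<x\le c$, I would check from the definition of $M_{s,x}$ that no point of $(t,s]$ lies in $M_{s,x}$ (there $\vp$ and $\vp_+$ equal $c$, forcing $x\le 0$ in the defining inequalities), while $M_{s,x}\cap[0,t]$ is independent of $s\in(t,t+\delta]$; hence $\tau_{s,x}=:\tau^{\ast\ast}\le t$ is constant in $s\in(t,t+\delta]$. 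Then $F(s,x)=S_s-\min\bigl(\inf_{u\in[\tau^{\ast\ast},t]}S_u,\ \inf_{u\in(t,s]}S_u\bigr)$, and as $s\downarrow t$ right-continuity of $S$ gives $S_s\to S_t$ and $\inf_{u\in(t,s]}S_u\uparrow S_t\ge\inf_{u\in[\tau^{\ast\ast},t]}S_u$, so the right limit $S_t-\inf_{u\in[\tau^{\ast\ast},t]}S_u$ exists, as claimed.

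The hard part is clearly (iv), and within it the claim that $x\mapsto\tau_{t,x}$ has no jump upward from the left: this is exactly where the c\`agl\`ad regularity of $\vp$ is essential (it is also the reason $M_{t,x}$ is defined using both $\vp_u$ and $\vp_{u+}$), and it has to be combined with the right-continuity of $S$ to pass from an infimum over $(\tau_{t,x_0},t]$ to one over $[\tau_{t,x_0},t]$.
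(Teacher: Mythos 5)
Your proposal is correct and follows essentially the same route as the paper: (i)--(iii) directly from the definitions, (iv) by showing $\tau_{t,\cdot}$ has no left discontinuity (the paper derives $\tau_{t,x-\delta}\le\tau_{t,x}+\varepsilon$ from the strict positivity of $\inf_{u\in[\tau_{t,x}+\varepsilon,t]}(x-\varphi_t+\varphi_u)$, while you reach the same conclusion via near-maximizing sequences and the c\`agl\`ad regularity of $\varphi$) and then invoking right-continuity of $S$, and (v) by observing that $\tau_{s,x}$ is constant for $s$ in a right neighborhood of $t$ and again using right-continuity of $S$. The differences are only technical variants of the same key steps, so nothing further is needed.
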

The proof can be found at the beginning of Section~\ref{section_book_profits}. Of course, $F(t,x)$ is only used for $x\le \varphi_t$.
Possible states and developments of $F$ over time can be seen in Fig.~\ref{bild1}.

\begin{figure}[ht]
\subfigure[$S_{1}=103,~\varphi_{1}=9,~\varphi_{2}-\varphi_{1}=1$]{\includegraphics[width=0.49\textwidth, height=120px]{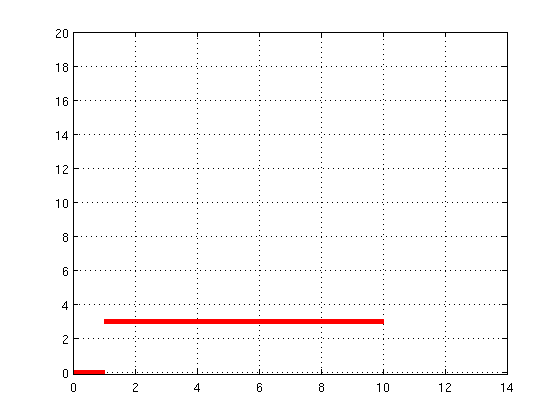}} 
    \subfigure[$S_{2}=104,~\varphi_{2}=10,~\varphi_{3}-\varphi_{2}=4$]{\includegraphics[width=0.49\textwidth, height=120px]{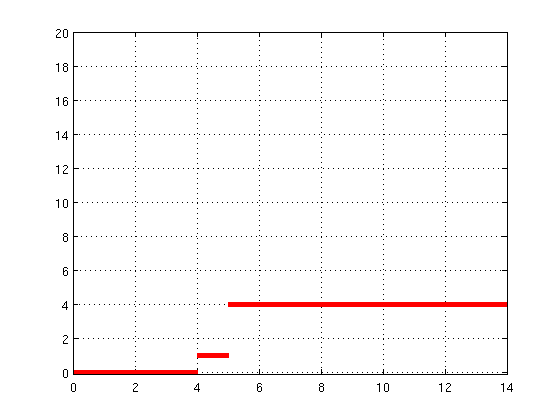}}\\
    \subfigure[$S_{3}=105,~\varphi_{3}=14,~\varphi_{4}-\varphi_{3}=-4$]{\includegraphics[width=0.49\textwidth, height=120px]{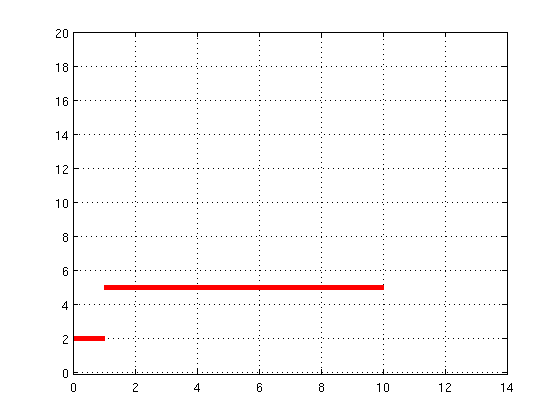}}
    \subfigure[$S_{4}=102,~\varphi_{4}=10,~\varphi_{5}-\varphi_{4}=0$]{\includegraphics[width=0.49\textwidth, height=120px]{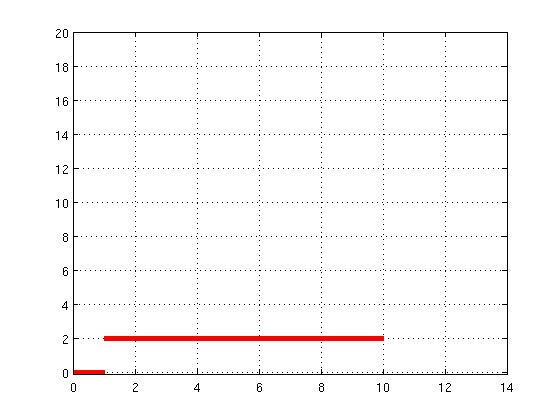}}     
\caption{An example how $x\mapsto F(t,x)$ can evolve in a 4-period model, i.e., $t\in\{0,1,2,3,4\}$. The stock price is given by $S=(S_0,\ldots,S_4)=(100,103,104,105,102)$,
and the investor chooses the strategy~$\varphi=(\varphi_1,\ldots,\varphi_5)=(9,10,14,10,10)$, following the standard notation in discrete time, i.e., $\vp_1$ shares are purchased at price
$S_0$ etc. On the abscissa there are the shares ordered by their book profits and on the ordinate the book profits~$F(t+,x)$, i.e., after the portfolio regrouping at time~$t$.
Observe that at time~$t=4$, i.e., in the fourth picture, one share (at the very left) is sold and bought back to realize a loss of one monetary 
unit (wash sale).}\label{bild1}
\end{figure} 

\begin{Bemerkung}
To ensure that the function~$x\mapsto F(t,x)$ is left-continuous, besides $\varphi_u$, also $\varphi_{u+}$ has to be considered in the definition of $M_{t,x}$.
It is convenient that $x\mapsto F(t,x)$ does not possess double jumps, but for the following construction of the tax payment process the values of $F$ at the (countably many) points of 
discontinuity do not matter. $F_\omega(t,\cdot)\mid_{(0,\varphi_t(\omega)]}$ can also be seen as the left-continuous inverse of the distribution function of the book profits over all shares that are in the 
portfolio at time $t$ (here, ``distribution function'' means the number of shares with book profits lower than or equal to a given bound).
\end{Bemerkung}

Whereas the book profit function in (\ref{ftx1}) is directly defined for all $\varphi\in\mathbb{L}$, it turns out that a straight forward construction 
of the tax payment process, analogous to time-discrete models, would be based on both the accumulated purchases and the accumulated sales (this is as both effects are quite different). 
Thus, in a first step, the tax payments are only defined for elementary 
strategies. Then, in Theorem~\ref{main_theorem} we show that it can be extended to all left-continuous adapted processes. However, this extension is not obvious and relies, 
among other things, on the
assumption that $S$ is a semimartingale (see Remark~\ref{non_semimartingale}). With the help of (\ref{ftx1}), a process $\Pi$ can be defined which reflects the accumulated tax payments 
up to time~$t$. 

\begin{thmD}[Accumulated tax payments for elementary strategies]\label{Pi_elementar}
Let $\varphi$ be a nonnegative elementary strategy s.t. $\varphi=\sum_{i=1}^{k}H_{i-1} 1_{\rrbracket \kappa_{i-1},\kappa_i\rrbracket},$ where 
$0=\kappa_0\le\kappa_1\le \ldots\le \kappa_k=T$ are stopping times and $H_{i-1}$ is $\mathcal{F}_{\kappa_{i-1}}-$measurable. 
Let $\tau$ and $F$ be as in Definition \ref{bookprofits}. Then, 
\begin{align}\label{16.10.2014.1}
 \Pi_t(\varphi):=&\alpha\sum_{i=1}^k 1_{(\kappa_{i-1}<t)}
 \int_0^{(H_{i-1}-H_{i-2})^-}F(\kappa_{i-1},x)\,dx\nonumber\\
 &+\alpha\sum_{i=1}^k 1_{(\kappa_{i-1}<t)}\int_0^{\varphi_t}\left(F(\kappa_{i-1}+,x)+\inf_{\kappa_{i-1}\leq u\leq t\wedge \kappa_i}(S_u-S_{\kappa_{i-1}})\right)\wedge 0 \,dx 
 +\alpha \int_0^t \varphi_u dD_u,
\end{align}
where $H_{-1}:=0$, is the {\em tax payment process} of the elementary strategy $\varphi$ (The limit $F(\kappa_{i-1}+,x):=\lim_{s\downarrow \kappa_{i-1}}F(s,x)$ exists by 
Proposition~\ref{properties_F}(v)).
\end{thmD}
$\Pi$ is obviously well-defined, i.e., it does not depend on the representation of $\varphi$.
\begin{thmB}\label{24.9.2013.1}
Let us explain the three components of $\Pi_t(\vp)$. $\alpha\sum_{i=1}^k 1_{(\kappa_{i-1}<t)}
 \int_0^{(H_{i-1}-H_{i-2})^-}F(\kappa_{i-1},x)\,dx$ are the tax payments that are triggered by selling stocks in order to follow the strategy~$\vp$. 
A downward jump of $\vp$ forces the investor to realize book profits. She takes the shares with the smallest label~$x$, which is in line with (\ref{15.8.2014.1}) and (\ref{13.8.2014.1}).
As $x\mapsto F(s,x)$ is nondecreasing, the sold shares possess the lowest book 
profits of all shares in the portfolio. By $F\ge 0$, this term is nonnegative.

$\alpha\sum_{i=1}^k 1_{(\kappa_{i-1}<t)}\int_0^{\varphi_t}\left(F(\kappa_{i-1}+,x)+\inf_{\kappa_{i-1}\leq u\leq t\wedge \kappa_i}(S_u-S_{\kappa_{i-1}})\right)\wedge 0 \,dx$
is always less or equal to zero. The $i$th summand models the tax credits due to realized losses by wash sales between the trading times $\kappa_{i-1}$ and $\kappa_i$. 
This equals minus the local time of $S$ at different levels (in the sense of Asmussen~\cite{asmussen.2003}, page 251). 
Namely, the book profit of piece~$x$ is the solution of a Skorokhod problem 
started at $F(\kappa_{i-1}+,x)$ in which the stock price movements are reflected at $0$ (however, this interpretation is only valid in between portfolio regroupings).
The local time we consider has jumps iff downward price jumps dominate previous book profits. It is different from the semimartingale local time, see  
(5.47) in Jacod~\cite{jacod.1979} for a definition. But, for $S$ being a continuous local martingale, the semimartingale local time of the reflected stock price 
is twice the local time in \cite{asmussen.2003}, see the appendix of Yor~\cite{yor.1979}.

$\alpha \int_0^t \varphi_u dD_u$ are taxes on dividends, which have to be paid immediately.
\end{thmB}

\begin{Bemerkung}
Given an elementary process~$\vp$ modeling the total number of shares in the portfolio, 
$\Pi_t(\vp)$ are the minimal accumulated tax payments up to time~$t$. This statement follows from Theorem~\ref{theorem:main111} together with Subsection~\ref{3.10.2014.1}.
\end{Bemerkung}

(\ref{16.10.2014.1}) can generally not be formulated for strategies of infinite variation.

\begin{Bemerkung}
It is quite natural that the tax payment process has double jumps. Namely, the stock price is right-continuous whereas the strategy is left-continuous, 
and the tax payments are triggered both by downward jumps of the stock (through wash sales) and by sales of stocks following the strategy $\varphi$.  
\end{Bemerkung}

\begin{thmS}\label{main_theorem}
Let $\varphi\in\mathbb{L}$ and $(\varphi^n)_{n\in\mathbb{N}}$ be a sequence of elementary strategies 
with $\varphi^n_0=0,\ \varphi^n\ge 0$, and $\varphi^n\stackrel{\textrm{up}}{\rightarrow}\varphi$. Then, the accumulated tax payments~$\Pi^n$ for $\varphi^n$ 
(as defined in Definition~\ref{Pi_elementar}) are optional processes with l\`agl\`ad paths. 
In addition, there exists an optional process~$\Pi$ possessing almost surely l\`agl\`ad paths such that 
$\Pi^n\stackrel{\textrm{up}}{\rightarrow}\Pi$. Different choices of up-approximating sequences of $\varphi$ lead to the same $\Pi$ up to evanescence.\vspace{0.3cm}\\
\indent Consequently, the mapping $\varphi\mapsto\Pi(\varphi)$ from Definition~\ref{Pi_elementar} possesses an up to evanescence unique extension
\beao
\{\varphi\in\mathbb{L}\ |\ \varphi_0=0,\ \varphi\ge 0\}\to\{X:\Omega\times[0,T]\to\mathbb{R}\ |\ X\ \mbox{is optional and l\`agl\`ad}\}
\eeao
which is continuous w.r.t. the convergence uniformly in probability. The extension, also called $\Pi$, possesses the jumps
\beam
\Delta   \Pi_t= \alpha\int_0^{\varphi_t} \left(\limsup_{s<t, s\to t} F(s,x)+\Delta S_t\right)\wedge 0\,dx +\alpha\varphi_t\Delta D_t
 \quad\mbox{and}\label{jumpPi-}
\eeam
\beam\label{jumpPi+}
\Delta^+ \Pi_t= \alpha\int_0^{(\Delta^+\varphi_t)^-}F(t,x)dx. 
\eeam
\end{thmS}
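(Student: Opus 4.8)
The plan is first to rewrite the tax payment process of an elementary strategy in a form in which the dependence on $\varphi$ is transparent. For $\varphi=\sum_{i=1}^k H_{i-1}1_{(\kappa_{i-1},\kappa_i]}$ as in Definition~\ref{Pi_elementar} I claim that (\ref{16.10.2014.1}) equals
\[
\Pi_t(\varphi)=\alpha\Big(\int_0^t\varphi_u\,dS_u-\int_0^{\varphi_t}F(t,x)\,dx+\int_0^t\varphi_u\,dD_u\Big),\qquad t\in[0,T],
\]
i.e.\ \emph{taxable trading gains $=$ cumulative price P\&L $-$ current unrealized book profit}, plus the dividend tax. I would prove this identity pathwise by comparing increments of the two sides: between two consecutive trading times $\kappa_{i-1}$ and $\kappa_i$ the number of shares is constant and, as recorded in Remark~\ref{24.9.2013.1}, $s\mapsto F(s,x)$ solves the Skorokhod reflection problem at $0$ started at $F(\kappa_{i-1}+,x)$, so that $F(t,x)-F(\kappa_{i-1}+,x)=(S_t-S_{\kappa_{i-1}})+\big((F(\kappa_{i-1}+,x)+\inf_{\kappa_{i-1}\le u\le t}(S_u-S_{\kappa_{i-1}}))\wedge0-F(\kappa_{i-1}+,x)\big)$, the second summand being exactly the (nonpositive) wash-sale integrand in (\ref{16.10.2014.1}); integrating in $x$ over the $\varphi_t$ shares held and summing the $\varphi_u\,dS_u$ increments over $i$ reproduces the first two terms of (\ref{16.10.2014.1}), while at each $\kappa_i$ the forced-sale term of (\ref{16.10.2014.1}) accounts precisely for the relabelling jump of $\int_0^{\varphi_t}F(t,x)\,dx$ when the youngest (lowest-book-profit) shares are removed — cf.\ the decomposition (\ref{13.8.2014.2}) and Proposition~\ref{properties_F}(i),(iii).

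\textbf{Passing to the limit.} Granting the representation above, the theorem reduces to the continuity in $\varphi$ of its three constituents with respect to the convergence uniformly in probability. First, $\varphi\mapsto\int_0^\cdot\varphi_u\,dS_u$ maps elementary strategies converging uniformly in probability to stochastic integrals converging uniformly in probability, because $S$ is a \emph{semimartingale} — this is the continuity of stochastic integration on c\`agl\`ad integrands and is the only place where the semimartingale hypothesis is used (see Remark~\ref{non_semimartingale} for why it is indispensable). Second, the dividend term obeys $\sup_t|\int_0^t(\varphi_u-\psi_u)\,dD_u|\le\|\varphi-\psi\|_\infty D_T$, hence is continuous. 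Third, for the book-profit term $B_t(\varphi):=\int_0^{\varphi_t}F^\varphi(t,x)\,dx$ I would use a sandwich estimate: with $\eps:=\|\varphi-\psi\|_\infty$, Proposition~\ref{properties_F}(i) shows that a uniform $\eps$-perturbation of $\varphi$ shifts the level sets defining $\tau_{t,x}$ by at most $\eps$, so by monotonicity of the running infimum in its window $F^\psi(t,x)$ lies between $F^\varphi(t,x-2\eps)$ and $F^\varphi(t,x+2\eps)$; since $x\mapsto F^\varphi(t,x)$ is nondecreasing (Proposition~\ref{properties_F}(iii)) this telescopes, after integrating in $x$ and using $|\varphi_t-\psi_t|\le\eps$ for the upper integration bound, to $\sup_t|B_t(\varphi)-B_t(\psi)|\le c\,\eps\,\sup_{u,v\le T}|S_u-S_v|$ with an absolute constant $c$ — crucially with \emph{no} dependence on the two partitions. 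As $\sup_{u,v\le T}|S_u-S_v|$ and $D_T$ are a.s.\ finite and $\sup_t|\varphi^n_t|$ is bounded in probability ($\varphi$ being l\`agl\`ad on a compact interval), a routine localization (stop when one of these first exceeds a level $C$, then let $C\to\infty$) upgrades the three continuity statements to: $\Pi(\varphi^n)$ is Cauchy uniformly in probability whenever $(\varphi^n)$ is. The space of optional l\`agl\`ad processes is complete for this convergence — pass to an a.s.\ uniformly convergent subsequence, a uniform limit of l\`agl\`ad paths is l\`agl\`ad, the limit is optional, and the full sequence then converges — so $\Pi$ exists, is optional and l\`agl\`ad; interleaving two approximating sequences shows it does not depend on the sequence, and the optionality and l\`agl\`ad property of each $\Pi^n$ are read off (\ref{16.10.2014.1}) directly. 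This is the asserted continuous extension.

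\textbf{The jumps.} From the closed form, $\Delta\Pi_t=\alpha(\varphi_t\Delta S_t-\Delta B_t(\varphi)+\varphi_t\Delta D_t)$ and $\Delta^+\Pi_t=-\alpha\,\Delta^+B_t(\varphi)$, since $\int_0^\cdot\varphi\,dS$ has left jump $\varphi_t\Delta S_t$ and no right jump. For elementary $\varphi$, $F(t,x)-F(t-,x)=\Delta S_t-\big(\inf_{\tau\le u\le t}S_u-\inf_{\tau\le u<t}S_u\big)$ together with $\inf_{\tau\le u\le t}S_u=\inf_{\tau\le u<t}S_u+(F(t-,x)+\Delta S_t)\wedge0$ gives, after integrating in $x$ and simplifying, $-\Delta B_t=\int_0^{\varphi_t}(F(t-,x)+\Delta S_t)\wedge0\,dx-\varphi_t\Delta S_t$, hence exactly (\ref{jumpPi-}); likewise the relabelling at a trading time $\kappa_i$ yields $\Delta^+B_{\kappa_i}=-\int_0^{(\Delta^+\varphi_{\kappa_i})^-}F(\kappa_i,x)\,dx$, i.e.\ (\ref{jumpPi+}). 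Passing to the limit — using that $\Delta$ and $\Delta^+$ are $2$-Lipschitz for the convergence uniformly in probability, that $\varphi^n\to\varphi$, and that $\limsup_{s<t,\,s\to t}F(s,x)$ is the right object replacing the elementary left limit $F(t-,x)$ in view of the left-continuity and $x$-monotonicity from Proposition~\ref{properties_F}(iii),(iv) — delivers (\ref{jumpPi-}) and (\ref{jumpPi+}) for general $\varphi\in\mathbb{L}$.

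\textbf{Main obstacle.} I expect the hard part to be the closed-form identity of the first step: converting the partition-dependent sum (\ref{16.10.2014.1}), with its separate forced-sale and wash-sale pieces, into the single expression $\alpha(\int\varphi\,dS-B+\int\varphi\,dD)$. This is precisely what decouples all estimates from the fineness of the approximating partitions — without it the termwise bounds deteriorate as the number of trading dates grows, and no limit argument goes through — and it is the reason the book-profit calculus of Section~\ref{section_book_profits} (monotonicity, one-sided limits, and the Skorokhod/local-time interpretation) must be developed beforehand. Once that identity is in hand, the rest is the standard localization-plus-completeness scheme together with the semimartingale continuity of stochastic integration.
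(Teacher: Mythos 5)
Your proposal is correct and follows essentially the same route as the paper: your closed-form identity is exactly Proposition~\ref{prop1} (proved, as you sketch, via the Skorokhod-type decomposition between trading dates and the relabelling jump at trading dates), your sandwich bound for the book-profit integral is Lemma~\ref{lemma44}, and the up-Cauchy/completeness argument, the uniqueness by interleaving, and the limit passage for (\ref{jumpPi+}) coincide with Steps 1 and 3 of the paper's proof. The only divergence is (\ref{jumpPi-}), which the paper derives directly for the limit strategy by proving existence of $\lim_{s<t,s\to t}F(s,x)$ at continuity points of the modified book-profit function and then using that Proposition~\ref{prop1} extends to all $\varphi\in\mathbb{L}$, whereas you pass to the limit from the elementary jump formula; your step identifying the limit of $\int_0^{\varphi^n_t}(F^n(t-,x)+\Delta S_t)\wedge 0\,dx$ with the $\limsup_{s<t}F(s,x)$ expression is the one place left as an assertion, but it can be completed by a Lemma~\ref{lemma44}-type sandwich for left limits together with the monotonicity of $x\mapsto F(s,x)$, so it is thin rather than wrong.
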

Note that any $\varphi\in\mathbb{L}$ with $\varphi\ge 0$ can be approximated uniformly in probability by a sequence of nonnegative elementary strategies 
(see, e.g., Theorem~II.10 in \cite{protter1}).
\begin{Definition}\label{6.8.2013.1}
For $\varphi\in\mathbb{L}$ with $\varphi\ge 0$, the tax payment process~$\Pi(\varphi)$ is defined as the limit process in Theorem~\ref{main_theorem}.
\end{Definition}

\begin{Proposition}\label{30.1.2015.1}
The accumulated tax payments are subadditive and positively homogeneous in the trading strategy, i.e., $\Pi(\vp^1+\vp^2)\le \Pi(\vp^1) + 
\Pi(\vp^2)$ and $\Pi(\lambda \vp^1)=\lambda\Pi(\vp^1)$ up to evanescence for all $\vp^1,\vp^2\in\bbl$ with $\vp^1,\vp^2\ge 0$ and $\lambda\in\bbr_+$. 
$\Pi$ is in general not additive. 
\end{Proposition}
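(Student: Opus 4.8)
The plan is to establish both relations first for elementary strategies, where $\Pi$ is given by the explicit formula~(\ref{16.10.2014.1}), and then to transfer them to all $\vp\in\bbl$ with $\vp\ge0$ and $\vp_0=0$ using that, by Theorem~\ref{main_theorem}, $\vp\mapsto\Pi(\vp)$ is the extension of its elementary restriction that is continuous for the convergence uniformly in probability. For arbitrary $\vp^1,\vp^2\in\bbl$ with $\vp^j\ge0$, choose nonnegative elementary strategies $\vp^{1,n},\vp^{2,n}$ vanishing at $0$ with $\vp^{j,n}\stackrel{\textrm{up}}{\to}\vp^j$ (possible by Theorem~II.10 in~\cite{protter1}, cf.\ the comment after Theorem~\ref{main_theorem}). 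Then $\vp^{1,n}+\vp^{2,n}$ and $\lambda\vp^{1,n}$ are again nonnegative elementary strategies vanishing at $0$, converging uniformly in probability to $\vp^1+\vp^2$ and $\lambda\vp^1$, so by Theorem~\ref{main_theorem} the processes $\Pi(\vp^{j,n})$, $\Pi(\vp^{1,n}+\vp^{2,n})$, $\Pi(\lambda\vp^{1,n})$ converge uniformly in probability to $\Pi(\vp^j)$, $\Pi(\vp^1+\vp^2)$, $\Pi(\lambda\vp^1)$. Since $\Pi(\lambda\vp^{1,n})-\lambda\Pi(\vp^{1,n})=0$ (established below for elementary strategies) converges uniformly in probability to $\Pi(\lambda\vp^1)-\lambda\Pi(\vp^1)$, the latter vanishes up to evanescence; since $\Pi(\vp^{1,n}+\vp^{2,n})-\Pi(\vp^{1,n})-\Pi(\vp^{2,n})\le0$ up to evanescence (established below) and converges uniformly in probability to $\Pi(\vp^1+\vp^2)-\Pi(\vp^1)-\Pi(\vp^2)$, the latter is $\le0$ up to evanescence (because $\sup_t(\,\cdot\,)^+$ of the limit is dominated in probability by $\sup_t|\,\cdot\,|$ of the differences, hence is $0$ a.s.). For $\lambda=0$ simply note $\Pi(0)=0$. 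Thus everything reduces to the elementary case.

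Positive homogeneity of an elementary $\vp$ with $\lambda>0$ is a change of variables. From the definition of $M_{t,x}$ following~(\ref{13.8.2014.1}) one reads off $M^{\lambda\vp}_{t,x}=M^{\vp}_{t,\,x/\lambda}$, hence $\tau^{\lambda\vp}_{t,x}=\tau^{\vp}_{t,\,x/\lambda}$, and thus by~(\ref{ftx1}) $F^{\lambda\vp}(t,x)=F^{\vp}(t,\,x/\lambda)$ and likewise $F^{\lambda\vp}(t+,x)=F^{\vp}(t+,\,x/\lambda)$. Writing $\lambda\vp=\sum_i\lambda H_{i-1}1_{\rrbracket\kappa_{i-1},\kappa_i\rrbracket}$ with $H_{-1}:=0$, substituting $x=\lambda y$ in each of the three integrals in~(\ref{16.10.2014.1}), and using $(\lambda H_{i-1}-\lambda H_{i-2})^-=\lambda(H_{i-1}-H_{i-2})^-$ and $(\lambda\vp)_t=\lambda\vp_t$, yields $\Pi_t(\lambda\vp)=\lambda\Pi_t(\vp)$ pathwise.

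The core step is subadditivity for elementary strategies, and this is the part I expect to be the main obstacle. I would prove it through the interpretation of $\Pi_t(\cdot)$, for an elementary total-share process, as the \emph{minimal} accumulated tax payment up to time $t$ among all admissible discrete trading policies with that total share process --- valid in the Dybvig--Koo framework~\cite{dyb1} by Theorem~\ref{theorem:main111} together with Subsection~\ref{3.10.2014.1} (cf.\ the remark on minimal tax payments following Definition~\ref{Pi_elementar}). Given nonnegative elementary $\vp^1,\vp^2$, pass to a common refinement of their jump times so that $\vp:=\vp^1+\vp^2$ is a nonnegative elementary strategy with $\vp_0=0$. For an investor whose total position equals $\vp$, consider the policy maintaining two separate sub-ledgers, the $j$-th always holding $\vp^j_t$ shares and operated by exactly the ``latest purchased sold first, with automatic wash sales'' rule~(\ref{15.8.2014.1})--(\ref{13.8.2014.2}) underlying $\Pi(\vp^j)$; when $\vp^j$ decreases while $\vp$ does not, this amounts to selling out of sub-ledger $j$ and simultaneously buying into the other, an admissible generalized wash sale. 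This policy is admissible and has total share process $\vp$, and by linearity of the tax its accumulated tax up to $t$ equals $\Pi_t(\vp^1)+\Pi_t(\vp^2)$ (in particular the dividend contributions add up, $\alpha\int_0^t\vp_u\,dD_u=\alpha\int_0^t\vp^1_u\,dD_u+\alpha\int_0^t\vp^2_u\,dD_u$); minimality of $\Pi_t(\vp)$ then gives $\Pi_t(\vp^1+\vp^2)\le\Pi_t(\vp^1)+\Pi_t(\vp^2)$ for all $t$, pathwise. A self-contained argument avoiding the appendix would instead work directly from~(\ref{16.10.2014.1}) and compare the book-profit function $F$ of $\vp^1+\vp^2$ with $F^1,F^2$ via the decomposition~(\ref{13.8.2014.2}); the difficulty there is that the shares of the two sub-ledgers do not merge into a single LIFO stack, so the bookkeeping of the purchasing-time maps $\tau_{\cdot,\cdot}$ and of which share is realized at each regrouping becomes delicate --- which is why I favor the optimality route.

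Finally, to see that $\Pi$ is in general not additive, take no dividends, fix times $0<s<r<T$, let $S$ be continuous and strictly increasing on $[0,r]$ and constant afterwards, and set $\vp^1_t:=1_{(0<t\le r)}$ and $\vp^2_t:=1_{(s<t\le T)}$. Since $S$ is nondecreasing on all relevant intervals, the wash-sale contributions in~(\ref{16.10.2014.1}) vanish, and a direct evaluation gives $\Pi_T(\vp^1)=\alpha(S_r-S_0)$ (the single share, bought at time $0$, is sold at time $r$) and $\Pi_T(\vp^2)=0$ (no sale ever occurs). For $\vp^1+\vp^2$, however, the rule~(\ref{15.8.2014.1}) sells at time $r$ the more recently purchased share, namely the one bought at time $s$, so $\Pi_T(\vp^1+\vp^2)=\alpha(S_r-S_s)<\alpha(S_r-S_0)=\Pi_T(\vp^1)+\Pi_T(\vp^2)$ because $S_s>S_0$.
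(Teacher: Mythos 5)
Your overall architecture matches the paper's: establish the properties for elementary strategies and push them to all of $\bbl$ via the up-continuity in Theorem~\ref{main_theorem} (that reduction step is fine), and your direct change-of-variables proof of positive homogeneity and your non-additivity example are both correct (the example is the same in spirit as the paper's, which takes $\vp^1=1_{(0,1]}$, $\vp^2=1_{(1,2]}$).

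The gap is in the subadditivity step for elementary strategies. You invoke ``$\Pi_t(\vp)$ is minimal among all admissible policies with total position $\vp$'' directly in continuous time, citing Theorem~\ref{theorem:main111} together with Subsection~\ref{3.10.2014.1} (via the remark after Definition~\ref{Pi_elementar}). But Theorem~\ref{theorem:main111} is a purely discrete-time statement, and Subsection~\ref{3.10.2014.1} only identifies the discrete model with the continuous-time one for the standard embedding in which $S$ is piecewise constant on a fixed grid. For a genuine c\`adl\`ag price and an elementary $\vp$, the quantity $\Pi_t(\vp)$ in (\ref{16.10.2014.1}) involves infima of $S$ over whole intervals (continuously monitored wash sales), and your two-sub-ledger policy likewise performs continuous wash sales in each ledger; neither object lives in any fixed Dybvig--Koo model, so the discrete optimality theorem does not directly compare them, and the minimality remark you lean on is exactly what needs the missing argument (its class of competitors is never formalized in continuous time). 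The paper closes this by an approximation: fix the jump stopping times of $\vp^1,\vp^2$, replace the continuous infima in (\ref{16.10.2014.1}) by infima along a finite grid, observe that the resulting processes coincide with the discrete-time taxes of Step~1 (where $N^1+N^2$ is admissible for $\vp^1+\vp^2$ and the tax is linear in $N$, giving both subadditivity and homogeneity there), and then let the mesh tend to zero using right-continuity of $S$ so that the grid infima converge pathwise to the continuous ones. Your proof needs this discretization-and-limit bridge (or an independent, formalized continuous-time optimality statement) to be complete; as written, the decisive inequality for elementary strategies is asserted rather than proven.
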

The proposition is proven in Subsection~\ref{1.2.1015.1}.

\section{Properties of the book profit function}\label{section_book_profits}

In this section, we state some properties of $F(t,x)$. These are needed in the next section for showing convergence of $\Pi^n$.

\begin{proof}[Proof of Proposition~\ref{properties_F}]
(i):  Let $y\le x\leq\varphi_t$. By (\ref{19.9.2013.1}), we have $M_{t,x}\not=\emptyset$. By the left-continuity of $\varphi$, $\sup M_{t,x}$ is attained, i.e., 
$x-\varphi_{t}+\varphi_{\tau_{t,x}}\leq 0$ or $x-\varphi_{t}+\varphi_{\tau_{t,x}+}\leq 0$. We conclude that 
$y-\varphi_{t}+\varphi_{\tau_{t,x}}\leq 0$ or $y-\varphi_{t}+\varphi_{\tau_{t,x}+}\leq 0$. Thus $\tau_{t,x}\leq \tau_{t,y}$.\vspace{0.1cm}\\
(ii): Follows immediately from (\ref{25.7.2013.3}).\vspace{0.1cm}\\
(iii): Due to $\tau_{t,y}\ge \tau_{t,x}$ for $y\le x\le\varphi_t$, one has that 
$F(t,x)-F(t,y)=\inf_{\tau_{t,y}\leq u\leq t}S_u-\inf_{\tau_{t,x}\leq u\leq t}S_u\geq 0$.\vspace{0.1cm}\\
(iv): By (ii), it is enough to show left-continuity at $x\in(0,\varphi_t]$. 
One has $x-\varphi_t+\varphi_u>0$ for all $u\in(\tau_{t,x},t]$ and $x-\varphi_t+\varphi_{u+}>0$ for all $u\in(\tau_{t,x},t)$. 
Because the infimum of a c\`agl\`ad process on a compact interval is attained in a right or a left limit, one has that 
\beao
\inf\left\{ x-\varphi_t+\varphi_u\ |\ u\in[\tau_{t,x}+\varepsilon,t]\right\}>0,\quad \forall\varepsilon>0.
\eeao
Therefore, there exists $\delta_0>0$ s.t. for all $\delta\in(0,\delta_0]$  
\beao
x-\delta-\varphi_t+\varphi_u>0\quad \forall u\in[\tau_{t,x}+\varepsilon,t]\quad\mbox{and}\quad x-\delta-\varphi_t+\varphi_{u+}>0\quad \forall u\in[\tau_{t,x}+\varepsilon,t).
\eeao
Thus, either $M_{t,x-\delta}=\emptyset$ or $0\le \sup M_{t,x-\delta} \le \tau_{t,x} + \eps$. If the first holds for some $\delta\in(0,\delta_0]$, it also holds for all smaller positive 
numbers and zero. In this case, left-continuity is obvious 
because $\tau_{t,y}=\tau_{t,x}=t$ for all $y$ in a left neighborhood of $x$. 
In the second case, one has $\tau_{t,x-\delta}-\tau_{t,x}\leq\varepsilon$ and,
by (i), $\tau_{t,x-\delta}\in[\tau_{t,x},\tau_{t,x}+\varepsilon]$ for all $\delta\in(0,\delta_0]$. 
By right-continuity of $S$ we are done.\vspace{0.1cm}\\
(v): Let $\varphi$ be an elementary strategy with representation as in Definition~\ref{Pi_elementar}.
Let $t\in[\kappa_{i-1},\kappa_i)$ and $s_1,s_2\in (t,\kappa_i]$, i.e., $\varphi_{s_1}=\varphi_{s_2}$. For $x=0$, one has $F(s_1,0)=F(s_2,0)=0$. For
$x\in(0,\varphi_{s_1}]$, one has $M_{s_1,x},M_{s_2,x}\subset [0,\kappa_{i-1}]$ which leads, again by 
$\varphi_{s_1}=\varphi_{s_2}$, to $M_{s_1,x}=M_{s_2,x}$.
By $x\le \varphi_{s_1}$ and (\ref{19.9.2013.1}), one has $M_{s_1,x}\not=\emptyset$ and arrives at $\tau_{s_1,x}=\tau_{s_2,x}\le \kappa_{i-1}$ and thus $F(s_1,x)=F(s_2,x)$.
For $x> \varphi_{s_1}=\varphi_{s_2}$ one has that $M_{s_1,x}=M_{s_2,x}=\emptyset$ and thus $F(s_1,x)=F(s_2,x)=0$. 
Consequently, the limit $\lim_{s\downarrow t}\tau_{s,x}=:\tau_{t+,x} $ exists for all $x\in\mathbb{R}_+$. 
\end{proof}

In the next lemma, we examine the behavior of the book profit function for two strategies whose paths are close together.
\begin{thmL}\label{lemma44}
Let $\varphi,\wt{\varphi}\in\mathbb{L}$ with $\varphi_0=\wt{\varphi}_0=0$ and $\varphi,\wt{\varphi}\ge 0$. $\wt{\tau}_{t,x}$, $\wt{F}$, and $\wt{M}_{t,x}$ 
denote the quantities from Definition~\ref{bookprofits} for $\wt{\varphi}$ instead of $\varphi$. Fix some $\omega\in\Omega$ and $t\in[0,T]$. If 
\beam\label{30.6.2013.1}
\sup_{0\leq u\leq t}|\varphi_u(\omega)-\wt{\varphi}_u(\omega)|\le \varepsilon,
\eeam
then 
\beam\label{Fn}
F_\omega(t,x)\leq \wt{F}_\omega(t,x+2\varepsilon)\quad \mbox{for all}\ x\leq\wt{\varphi}_t(\omega)-2\varepsilon\quad\mbox{and}
\eeam
\beam\label{30.6.2013.2}
\left|\int_0^{\varphi_t(\omega)}F_\omega(t,x)\,dx- \int_0^{\wt{\varphi}_t(\omega)} \wt{F}_\omega(t,x)\,dx\right|
\leq 3\varepsilon\left(\sup_{0\leq u\leq t} S_u(\omega)-\inf_{0\leq u\leq t} S_u(\omega)\right).
\eeam
\end{thmL}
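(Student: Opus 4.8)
The plan is to prove (\ref{Fn}) first, since it is the heart of the lemma (and the step I expect to be the main obstacle), and then to deduce (\ref{30.6.2013.2}) from it by an elementary change of variables together with a symmetry argument. Fix $\omega$ and suppress it from the notation. For (\ref{Fn}), the key observation is that $F(t,x)=S_t-\inf_{\tau_{t,x}\le u\le t}S_u$ is nonincreasing in the purchasing time $\tau_{t,x}$: a later purchasing time means a smaller interval $[\tau_{t,x},t]$, hence a larger infimum. So it suffices to show $\wt\tau_{t,x+2\varepsilon}\le\tau_{t,x}$ whenever $x\le\wt\varphi_t-2\varepsilon$ (note this forces $x\le\varphi_t$, so $\tau_{t,x}$ is defined); then $[\tau_{t,x},t]\subseteq[\wt\tau_{t,x+2\varepsilon},t]$, and (\ref{Fn}) follows. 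Since $\tau_{t,x}=\sup M_{t,x}$, it is enough to check that $\sigma:=\wt\tau_{t,x+2\varepsilon}$ belongs to $M_{t,x}$.

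To see this, observe that $x+2\varepsilon\le\wt\varphi_t$ gives $\wt M_{t,x+2\varepsilon}\neq\emptyset$ by (\ref{19.9.2013.1}), and — exactly as in the proof of Proposition~\ref{properties_F}(i) — the supremum defining $\sigma$ is attained, so either $\wt\varphi_\sigma\le\wt\varphi_t-x-2\varepsilon$, or $\sigma<t$ and $\wt\varphi_{\sigma+}\le\wt\varphi_t-x-2\varepsilon$. In the first case, (\ref{30.6.2013.1}) yields $\varphi_\sigma\le\wt\varphi_\sigma+\varepsilon$ and $\wt\varphi_t\le\varphi_t+\varepsilon$, whence $\varphi_\sigma\le\varphi_t-x$, i.e.\ $\sigma\in M_{t,x}$. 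In the second case one argues the same way after noting that (\ref{30.6.2013.1}) passes to the right limits, $|\varphi_{\sigma+}-\wt\varphi_{\sigma+}|\le\varepsilon$ (take $u_n\downarrow\sigma$ with $u_n<t$ and use that the right limits exist since $\varphi,\wt\varphi\in\mathbb{L}$), so $\varphi_{\sigma+}\le\varphi_t-x$ and again $\sigma\in M_{t,x}$. The $2\varepsilon$-shift is exactly what makes this work: one $\varepsilon$ is spent comparing $\varphi$ and $\wt\varphi$ at the purchasing time $\sigma$, the other comparing them at time $t$. Handling the "or'' clause involving $\varphi_{u+}$ cleanly, together with this right-limit refinement of the hypothesis, is the delicate part; the rest is bookkeeping.

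For (\ref{30.6.2013.2}), set $R:=\sup_{0\le u\le t}S_u-\inf_{0\le u\le t}S_u$ and recall $0\le F(t,\cdot),\wt F(t,\cdot)\le R$. If $\min(\varphi_t,\wt\varphi_t)<2\varepsilon$ the estimate is immediate, since then $\int_0^{\varphi_t}F(t,x)\,dx\le\varphi_t R<3\varepsilon R$ and likewise with $\wt\varphi$ in place of $\varphi$ (using $|\varphi_t-\wt\varphi_t|\le\varepsilon$), so both integrals are $<3\varepsilon R$. Otherwise, substituting $y=x+2\varepsilon$ gives $\int_0^{\varphi_t}F(t,x)\,dx=\int_{2\varepsilon}^{\varphi_t+2\varepsilon}F(t,y-2\varepsilon)\,dy$. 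On $[2\varepsilon,\wt\varphi_t]$ the integrand is $\le\wt F(t,y)$ by (\ref{Fn}), while the remaining part of the range has length $\varphi_t+2\varepsilon-\wt\varphi_t\le 3\varepsilon$ and there the integrand is $\le R$. Extending the first integral to $[0,\wt\varphi_t]$ via $\wt F(t,\cdot)\ge0$, we obtain $\int_0^{\varphi_t}F(t,x)\,dx\le\int_0^{\wt\varphi_t}\wt F(t,x)\,dx+3\varepsilon R$. Interchanging the roles of $\varphi$ and $\wt\varphi$ (the hypothesis and the bound are symmetric) gives the reverse inequality, and (\ref{30.6.2013.2}) follows.
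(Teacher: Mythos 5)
Your proof is correct and follows essentially the same route as the paper: the paper establishes $\wt{\tau}_{t,x+2\varepsilon}\le\tau_{t,x}$ via the inclusion $\wt{M}_{t,x+2\varepsilon}\subset M_{t,x}$ (the same $\varepsilon$-bookkeeping you do, just phrased as a set inclusion rather than showing the single point $\wt{\tau}_{t,x+2\varepsilon}$ lies in $M_{t,x}$ via attainment of the supremum), and then derives (\ref{30.6.2013.2}) by the identical shift-by-$2\varepsilon$ integral estimate with remainder at most $3\varepsilon$ times the price range, concluding by symmetry.
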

\begin{proof}
We fix some $\omega\in\Omega$ satisfying (\ref{30.6.2013.1}) and omit it in the rest of the proof. 
Let $x\le \wt{\varphi}_t - 2\eps$. By (\ref{30.6.2013.1}), one has $\wt{M}_{t,x+2\eps}\subset M_{t,x}$.
This gives $\sup \wt{M}_{t,x+2\eps}\le \sup M_{t,x}$. Furthermore, by (\ref{19.9.2013.1}), one has $\wt{M}_{t,x+2\eps}\not=\emptyset$ and  thus 
$\wt{\tau}_{t,x+2\eps}= \sup \wt{M}_{t,x+2\eps} \le \sup M_{t,x} \le \tau_{t,x}$, which implies
\beao
F(t,x)-\wt{F}(t,x+2\varepsilon) = \inf_{\wt{\tau}_{t,x+2\varepsilon}\leq u\leq t} S_u -\inf_{\tau_{t,x}\leq u\leq t} S_u\leq 0.
\eeao
As obviously $F(t,x)=S_t-\inf_{\tau_{t,x}\leq u\leq t} S_u\leq \sup_{0\leq u\leq t} S_u-\inf_{0\leq u\leq t} S_u$ for all $x\in\mathbb{R}_+$, (\ref{Fn}) implies
\begin{align*}
 \int_0^{\varphi_t}F(t,x)\,dx\leq &\int_0^{\left(\wt{\varphi}_t-2\varepsilon\right)\vee 0}\wt{F}(t,x+2\varepsilon)\,dx
 +\left(\varphi_t-\wt{\varphi}_t+2\varepsilon\right)\left(\sup_{0\leq u\leq t} S_u-\inf_{0\leq u\leq t} S_u\right)\\
			       \leq &\int_0^{\wt{\varphi}_t}\wt{F}(t,x)\,dx+3\varepsilon\left(\sup_{0\leq u\leq t} S_u-\inf_{0\leq u\leq t} S_u\right).
\end{align*}
By symmetry, we obtain (\ref{30.6.2013.2}).
\end{proof}

In the next section, we prove that $\Pi$ is an optional process. For this purpose, some measurability of $F$ has to be checked.

\begin{thmP}\label{prop44}
$F$ is $\mathcal{O}\otimes \mathcal{B}(\mathbb{R}_+)-\mathcal{B}(\mathbb{R}_+)-$measurable.
\end{thmP}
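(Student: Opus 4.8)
The plan is to deduce the measurability of $F$ from that of the purchasing-time map and of running infima of $S$. First I would observe that $F_\omega(t,x)=S_t(\omega)-\inf_{\tau_{t,x}(\omega)\le u\le t}S_u(\omega)$, so once $(\omega,t,x)\mapsto\tau_{t,x}(\omega)$ is shown to be $\mathcal{O}\otimes\B(\bbr_+)$-measurable (with values in $[0,T]$), it remains to handle $(\omega,r,t)\mapsto\inf_{r\le u\le t}S_u(\omega)$. Since $S$ is c\`adl\`ag, for $0\le r\le t$ one has $\inf_{r\le u\le t}S_u=S_t\wedge\inf_{q\in\bbq\cap[r,t)}S_q$, which exhibits this map as a countable infimum of $\F\otimes\B(\bbr_+)\otimes\B(\bbr_+)$-measurable maps, hence measurable. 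Composing with $(\omega,t,x)\mapsto(\omega,\tau_{t,x}(\omega),t)$ and subtracting the result from the optional process $S$ then yields that $F$ is $\mathcal{O}\otimes\B(\bbr_+)$-measurable, and $F\ge 0$ since $t\in[\tau_{t,x},t]$.

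So the real task is $\tau$. I would first establish the level-set identity: for fixed $\omega$, $t\in[0,T]$, $b\in[0,t]$ and $x\in[0,\vp_t(\omega)]$,
$$\tau_{t,x}(\omega)\ge b\quad\Longleftrightarrow\quad\inf_{b\le u\le t}\vp_u(\omega)\le\vp_t(\omega)-x ,$$
whereas for $x>\vp_t(\omega)$ one simply has $\tau_{t,x}(\omega)=t$ by (\ref{25.7.2013.3}). For ``$\Leftarrow$'' I would use that the infimum of the c\`agl\`ad path $\vp(\omega)$ over the compact interval $[b,t]$ is attained either as a value $\vp_c(\omega)$ or as a right limit $\vp_{c+}(\omega)$ with $c<t$ (the property already exploited in the proof of Proposition~\ref{properties_F}(iv)); in both cases the corresponding $c$ lies in $M_{t,x}(\omega)\cap[b,t]$, hence $\tau_{t,x}(\omega)=\sup M_{t,x}(\omega)\ge b$. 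For ``$\Rightarrow$'' I would use that $\sup M_{t,x}(\omega)$ is attained (by left-continuity of $\vp$, exactly as in the proof of Proposition~\ref{properties_F}(i)), so $c:=\tau_{t,x}(\omega)\ge b$ satisfies $\vp_c(\omega)\le\vp_t(\omega)-x$ or ($c<t$ and $\vp_{c+}(\omega)\le\vp_t(\omega)-x$), and each alternative forces $\inf_{b\le u\le t}\vp_u(\omega)\le\vp_t(\omega)-x$.

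Granting this, I would introduce for each $b\in\bbq_+$ the running infimum $m^b_t(\omega):=\inf_{b\wedge t\le u\le t}\vp_u(\omega)$, so that, as subsets of $\Omega\times[0,T]\times\bbr_+$ (identifying $\tau$ with $(\omega,t,x)\mapsto\tau_{t,x}(\omega)$),
$$\{\tau\ge b\}=\{t\ge b,\ x>\vp_t\}\ \cup\ \{t\ge b,\ x\le\vp_t,\ m^b_t-\vp_t+x\le 0\},$$
and $\{\tau>c\}=\bigcup_{b\in\bbq,\,b>c}\{\tau\ge b\}$ for $c\in[0,T)$ (the cases $c<0$ and $c\ge T$ being trivial since $0\le\tau\le T$). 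The key point is that $m^b$ is an optional process: because $\vp$ is left-continuous one has $\inf_{[b,t]}\vp=\inf_{[b,t)}\vp$ for $b<t$, whence $t\mapsto m^b_t$ is left-continuous; it is also adapted, hence predictable, hence optional. Consequently $(\omega,t,x)\mapsto m^b_t(\omega)-\vp_t(\omega)+x$ is $\mathcal{O}\otimes\B(\bbr_+)$-measurable (sum of an optional process and the Borel coordinate $x$), and $\{x\le\vp_t\},\{x>\vp_t\}\in\mathcal{O}\otimes\B(\bbr_+)$ since $\vp$ is optional. Hence every set displayed above lies in $\mathcal{O}\otimes\B(\bbr_+)$, so $\tau$, and therefore $F$, is $\mathcal{O}\otimes\B(\bbr_+)$-$\B(\bbr_+)$-measurable.

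I expect the level-set identity for $\tau$ to be the main obstacle: $\tau_{t,x}$ is a supremum over $M_{t,x}$, which involves both $\vp_u$ and the right limits $\vp_{u+}$ and which may a priori be approached only from the left, so one must argue---via the left-continuity of $\vp$ together with the attainment of infima of regulated paths on compact intervals---both that this supremum is attained and that it is governed by the running infimum $m^b$; the same two facts are exactly what make $m^b$ left-continuous, and hence optional.
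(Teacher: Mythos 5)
Your handling of $\tau$ is correct and is in fact a genuinely different route from the paper's: your exact level-set identity $\{\tau_{t,x}\ge b\}=\{m^b_t-\vp_t+x\le 0\}$ on $\{x\le\vp_t,\ t\ge b\}$, combined with the predictability of the running infimum $m^b$, replaces the paper's approximation of $\sup M_{t,q}$ by rational suprema with slack $1/n$, and it gives joint measurability in $(t,x)$ directly, whereas the paper works with a fixed rational label $q$ and recovers joint measurability from monotonicity and left-continuity of $x\mapsto F(t,x)$.

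The gap is in your final step, and it concerns precisely what the proposition is about. From the $\mathcal{F}\otimes\mathcal{B}(\bbr_+)\otimes\mathcal{B}(\bbr_+)$-measurability of $G(\omega,r,t):=\inf_{r\le u\le t}S_u(\omega)$ you cannot conclude, by composing with $(\omega,t,x)\mapsto(\omega,\tau_{t,x}(\omega),t)$, that $F$ is $\mathcal{O}\otimes\mathcal{B}(\bbr_+)$-measurable: that composition would require the projection $(\omega,t,x)\mapsto\omega$ to be $\mathcal{O}\otimes\mathcal{B}(\bbr_+)$--$\mathcal{F}$-measurable, i.e.\ $A\times[0,T]\times\bbr_+\in\mathcal{O}\otimes\mathcal{B}(\bbr_+)$ for every $A\in\mathcal{F}$; but the time-constant process $1_A$ is optional only if it is adapted, i.e.\ only for $A\in\mathcal{F}_0$. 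As written, your argument therefore only yields $\mathcal{F}\otimes\mathcal{B}([0,T])\otimes\mathcal{B}(\bbr_+)$-measurability of $F$, which is strictly weaker than optionality — and optionality is exactly what is needed later, so that $\int_0^\infty F(t,x)\,dx$ and hence $\Pi$ are optional processes. The repair is available with your own ingredients and is how the paper argues in its Step 2: write
\[
F(t,x)=0\vee\sup_{y\in\bbq}\,(S_t-S_y)\,1_{\{\tau_{t,x}<y<t\}},
\]
observe that for fixed $y$ the process $(\omega,t)\mapsto S_y(\omega)1_{\{y<t\}}$ is adapted and left-continuous, hence predictable, and that $\{\tau_{t,x}<y\}\in\mathcal{O}\otimes\mathcal{B}(\bbr_+)$ follows from your level-set computation; a countable supremum of $\mathcal{O}\otimes\mathcal{B}(\bbr_+)$-measurable functions, combined with the optional $S_t$, is again $\mathcal{O}\otimes\mathcal{B}(\bbr_+)$-measurable. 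The point your composition misses is that only the values $S_y$ with $y<t$ may enter, and they must enter in an adapted way.
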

\begin{proof}
Because $x\mapsto F_\omega(t,x)$ is left-continuous and on $[0,\varphi_t]$ also nondecreasing, one gets
\begin{align*}
F_\omega(t,x) = 1_{(x\le \varphi_t(\omega))} \sup_{q\in\mathbb{Q}_+}\left\{F_\omega(t,q)-1_{(x< q)}\infty\right\}.
\end{align*}
As $\{(\omega,t,x)\ |\ x\le \varphi_t(\omega)\}\in\mathcal{P}\otimes\mathcal{B}(\mathbb{R}_+)$, it remains to show that $(\omega,t)\mapsto F_\omega(t,q)$ is 
$\mathcal{O}-\mathcal{B}(\mathbb{R}_+)-$measurable for every fixed $q$.\\ 

{\em Step 1:} Let us show that $(\omega,t)\mapsto\tau_{t,q}(\omega)$ is $\mathcal{P}-\mathcal{B}(\mathbb{R}_+)-$measurable. Define the (random) sets 
\beao
M_{t,q}^n:=\{u\in[0,t]\ |\ q-\varphi_t+\varphi_u\le 1/n,~{u\in\mathbb{Q}}\},\quad n\in\mathbb{N}.
\eeao
By
\beao
\sup M_{\cdot,q}^n & = & \sup_{u\in\mathbb{Q}_+} u1_{\{(\omega,t)\ |\ q-\varphi_t(\omega) + \varphi_u(\omega)\le 1/n\ 
\mbox{\small and}\ u<t\}} 
\eeao
and the predictability of $\varphi$, the mapping $\sup M^n_{\cdot,q}: \Omega\times[0,T]\to\mathbb{R}_+,\ (\omega,t)\mapsto \sup M^n_{t,q}(\omega)$ is written as a pointwise
supremum over countably many predictable functions, and thus it is also predictable. Now, it is shown that 
\beam\label{25.7.2013.1}
\sup M_{t,q}^n 1_{(q\le \varphi_t)}\rightarrow\tau_{t,q}1_{(q\le \varphi_t)}\quad\mbox{pointwise for\ }n\to\infty.
\eeam
Let $n\in\mathbb{N}$,\ $u\in M_{t,q}$. There exists $v\in\mathbb{Q}$ arbitrary close to $u$ with $v\in M^n_{t,q}$ and thus
\beam\label{25.7.2013.2}
\sup M_{t,q} \le \sup M^n_{t,q},\quad\forall n\in\mathbb{N}.
\eeam

Assume that $q\le\varphi_t$, i.e., $\tau_{t,q}=\sup M_{t,q}$ by (\ref{25.7.2013.3}). First note that $q-\varphi_t+\varphi_u>0$ for all $u\in(\tau_{t,q},t]$ and 
$q-\varphi_t+\varphi_{u+}$ for all $u\in(\tau_{t,q},t)$. 
As the infimum of a c\`adl\`ag process is attained in the right or the left limit on a compact interval, one has that 
\beao
\inf\{q-\varphi_t+\varphi_u\ |\ u\in[\tau_{t,q}+\varepsilon,t]\}>0,\quad \forall \varepsilon>0.
\eeao
Therefore, there exists $N\in\mathbb{N}$ s.t. 
\beao
q-\frac1n-\varphi_t+\varphi_u>0\quad\forall u\in[\tau_{t,q}+\varepsilon,t],\ n\geq N.
\eeao
This implies $\sup M^n_{t,q} \le \tau_{t,q} + \eps = \sup M_{t,q} + \eps$ for all $n\ge N$. Together with (\ref{25.7.2013.2}) one obtains (\ref{25.7.2013.1}).
(\ref{25.7.2013.1}), the predictability of $\sup M^n_{\cdot,q}$, and (\ref{25.7.2013.3}) imply the predictability of $(\omega,t)\mapsto \tau_{t,q}(\omega)$.

{\em Step 2:} One has
\begin{align*}
F(t,q)	=& S_t -\inf_{\tau_{t,q}\leq u\leq t} S_u = 0\vee\sup_{y\in\mathbb{Q}}(S_t-S_y)1_{(\tau_{t,q}<y<t)}
\end{align*}
and by Step~1 $\{(\omega,t)\ |\ \tau_{t,q}(\omega)<y\}\in\mathcal{P}$. Because $S$ is optional, $F(\cdot,q)$ is also optional, which completes the proof.
\end{proof}

\section{Proof of Theorem~\ref{main_theorem}}\label{section_proofmain}

\begin{thmP}\label{prop1}
For any elementary strategy~$\varphi$, it holds that
\beam\label{27.7.2013.1}
\alpha \int_0^t\varphi_u dS_u+\alpha \int_0^t \varphi_u dD_u=\alpha\int_0^\infty F(t,x)dx+\Pi_t,\quad\forall t\in[0,T].
\eeam
\end{thmP}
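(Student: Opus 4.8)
The plan is to reduce (\ref{27.7.2013.1}) to a pathwise statement and prove it by a telescoping over the trading periods of $\varphi$. Since the dividend term $\alpha\int_0^t\varphi_u\,dD_u$ occurs verbatim on the left of (\ref{27.7.2013.1}) and, after substituting (\ref{16.10.2014.1}) for $\Pi_t$, also on the right, it cancels, and there remains the ``pure stock'' identity asserting that $\alpha\int_0^t\varphi_u\,dS_u$ equals $\alpha\int_0^{\varphi_t}F(t,x)\,dx$ plus the two sums of (\ref{16.10.2014.1}); the semimartingale property of $S$ plays no role here. Fixing $\omega$ and $t$, writing the representation of $\varphi$ with $0=\kappa_0<\kappa_1<\dots<\kappa_k=T$ (allowed, since the right-hand side of (\ref{16.10.2014.1}) is independent of the representation), and letting $m$ be the index with $\kappa_{m-1}<t\le\kappa_m$, elementarity of $\varphi$ gives the explicit telescoping sum
\[
\int_0^t\varphi_u\,dS_u=\sum_{i=1}^{m-1}H_{i-1}\,(S_{\kappa_i}-S_{\kappa_{i-1}})+H_{m-1}\,(S_t-S_{\kappa_{m-1}}),
\]
and the task is to recover exactly this sum, together with the two ``error'' families of (\ref{16.10.2014.1}), by expanding $\int_0^{\varphi_t}F(t,x)\,dx$ period by period.

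The engine is a Skorokhod-type identity for $F$ inside a single period. Fix $i$ and $t'\in(\kappa_{i-1},\kappa_i]$. From (\ref{25.7.2013.3}), (\ref{19.9.2013.1}) and left-continuity of $\varphi$ one obtains, for $x\in(0,\varphi_{t'}]$, that $\tau_{t',x}\le\kappa_{i-1}$ and that $\tau_{t',x}$ is constant in $t'$ along the period, hence equals $\tau_{\kappa_{i-1}+,x}$ (this is the computation already carried out in the proof of Proposition~\ref{properties_F}(v)); right-continuity of $S$ then also gives $F(\kappa_{i-1}+,x)=S_{\kappa_{i-1}}-\inf_{\tau_{\kappa_{i-1}+,x}\le u\le\kappa_{i-1}}S_u$. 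Splitting the infimum in (\ref{ftx1}) at $\kappa_{i-1}$ and using $\min(a,b)=a+(b-a)\wedge 0$ yields
\[
F(t',x)=F(\kappa_{i-1}+,x)+(S_{t'}-S_{\kappa_{i-1}})-\Big(F(\kappa_{i-1}+,x)+\inf_{\kappa_{i-1}\le u\le t'}(S_u-S_{\kappa_{i-1}})\Big)\wedge 0,
\]
whose last, nonnegative, summand is the realized-loss density appearing in the $i$-th term of the second sum of (\ref{16.10.2014.1}). Integrating this in $x$ over the shares held during $(\kappa_{i-1},\kappa_i]$ turns $\int F(t',x)\,dx$ into $\int F(\kappa_{i-1}+,x)\,dx$, plus $H_{i-1}(S_{t'}-S_{\kappa_{i-1}})$, minus the $i$-th wash-sale term, where Proposition~\ref{properties_F}(ii) is used to identify $\int_0^{\varphi_{t'}}F(t',x)\,dx$ with the integral over the held shares.

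It remains to cross the trading times and telescope. At $\kappa_{i-1}$ the position jumps from $H_{i-2}$ to $H_{i-1}$; by the labelling rule (\ref{15.8.2014.1}), if $H_{i-1}\ge H_{i-2}$ the new shares enter with book profit $0$, and if $H_{i-1}<H_{i-2}$ precisely the $(H_{i-1}-H_{i-2})^-$ shares of lowest book profit are sold while the survivors keep their book profits, so in both cases a change of variables gives
\[
\int_0^{\varphi_{\kappa_{i-1}+}}F(\kappa_{i-1}+,x)\,dx=\int_0^{\infty}F(\kappa_{i-1},x)\,dx-\int_0^{(H_{i-1}-H_{i-2})^-}F(\kappa_{i-1},x)\,dx,
\]
the subtracted integral being the $i$-th term of the first sum of (\ref{16.10.2014.1}); for $i=1$ this is trivial since $\varphi_0=0$ forces $F(0,\cdot)\equiv 0$. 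Now apply the reflection identity with $t'=t$, $i=m$, then with $t'=\kappa_i$, $i=m-1,\dots,1$, each time eliminating $\int F(\kappa_{i-1}+,x)\,dx$ by the last display and feeding $\int_0^{\infty}F(\kappa_i,x)\,dx$ back in; the intermediate book-profit integrals cancel telescopically, the stock increments reassemble into the displayed expression for $\int_0^t\varphi_u\,dS_u$, and what survives is exactly the two sums of (\ref{16.10.2014.1}). Restoring the factor $\alpha$ and the dividend term gives (\ref{27.7.2013.1}).

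The step I expect to cost the most care is the bookkeeping of the integration limits: the support $[0,\varphi_s]$ of $x\mapsto F(s,x)$ contracts or expands at every jump of $\varphi$, so one must check that the upper limits produced by the reflection identity and by the trading-time identity genuinely match across each buy/sell jump $\kappa_i$ (this is where the monotonicities in Proposition~\ref{properties_F}(i)--(iii) are used), and the last, incomplete, period $(\kappa_{m-1},t]$ must be handled separately from the complete ones. More routine points are the justification of $\tau_{t',x}\le\kappa_{i-1}$ and $\tau_{t',x}=\tau_{\kappa_{i-1}+,x}$, the passage to the right limit $F(\kappa_{i-1}+,\cdot)$ via right-continuity of $S$, and the degenerate cases ($t=0$, coinciding trading times).
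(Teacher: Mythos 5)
Your proposal is correct and follows essentially the same route as the paper: your within-period identity for $F(t',x)$ is precisely Lemma~\ref{1.9.2013.1}, and your change-of-variables formula at the trading times is precisely the identity (\ref{29.7.2013.1})/(\ref{Pi_right_jump}) used there. The paper assembles these by matching increments on each $(\kappa_{i-1},\kappa_i]$ and the right jumps at the $\kappa_{i-1}$ rather than by an explicit telescoping sum, but this is only a difference in presentation, not in substance.
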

 
This proposition is the key step to prove Theorem~\ref{main_theorem}. Namely, by the semimartingale property of $S$ and $D$ the integrals converge if $\varphi^n\to\varphi$, and with 
Lemma~\ref{lemma44} it can be shown that also the corresponding book profits $\int_0^\infty F(t,x)dx$ converge. 
For the latter one needs that $\varphi^n$ converges uniformly in probability and not only pointwise.  
To prove the proposition one needs the following lemma.
\begin{thmL}\label{1.9.2013.1}
Let $\varphi$ be an elementary strategy, s.t. $\varphi=\sum_{i=1}^{k}H_{i-1} 1_{\rrbracket \kappa_{i-1},\kappa_i\rrbracket},$ where 
$0=\kappa_0\le\kappa_1\le \ldots\le \kappa_k=T$ are stopping times and $H_{i-1}$ is $\mathcal{F}_{\kappa_{i-1}}-$measurable. For all $t\in(\kappa_{i-1},\kappa_i],\  x\in(0,\varphi_t]$, 
we have
\begin{align*}
 S_t-S_{\kappa_{i-1}}=\left(F(\kappa_{i-1}+,x)+\inf_{\kappa_{i-1}\leq u\leq t}(S_u-S_{\kappa_{i-1}})\right)\wedge 0+F(t,x)-F(\kappa_{i-1}+,x).
\end{align*}
\end{thmL}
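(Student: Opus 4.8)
The plan is to show first that on the interval $(\kappa_{i-1},\kappa_i]$ the purchasing time $\tau_{\cdot,x}$ is frozen, equal to some $\sigma\le\kappa_{i-1}$, and then to rewrite both sides of the asserted identity as expressions in $S_t$, $S_{\kappa_{i-1}}$ and the two infima of $S$ over $[\sigma,\kappa_{i-1}]$ and $[\kappa_{i-1},t]$, where the identity reduces to an elementary case distinction. So I would fix $i$ with $\kappa_{i-1}<\kappa_i$ (for $\kappa_{i-1}=\kappa_i$ the set $(\kappa_{i-1},\kappa_i]$ is empty and nothing is claimed), $t\in(\kappa_{i-1},\kappa_i]$ and $x\in(0,\varphi_t]$, and note $\varphi_s=H_{i-1}=\varphi_t$ for all $s\in(\kappa_{i-1},\kappa_i]$. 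As in the proof of Proposition~\ref{properties_F}(v): since $x>0$, no $u\in(\kappa_{i-1},t]$ satisfies $x-\varphi_t+\varphi_u\le0$ or $x-\varphi_t+\varphi_{u+}\le0$, hence $M_{s,x}\subset[0,\kappa_{i-1}]$ for every $s\in(\kappa_{i-1},\kappa_i]$; on $[0,\kappa_{i-1}]$ the defining conditions of $M_{s,x}$ no longer depend on $s$, so $M_{s,x}$ is independent of $s\in(\kappa_{i-1},\kappa_i]$, and it is nonempty by~(\ref{19.9.2013.1}) since $x\le\varphi_s$. Thus $\tau_{s,x}=:\sigma\le\kappa_{i-1}$ does not depend on such $s$, and in particular $F(\kappa_{i-1}+,x)=\lim_{s\downarrow\kappa_{i-1}}F(s,x)$ exists.

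Next I would introduce the abbreviations $a:=S_{\kappa_{i-1}}$, $m:=\inf_{\kappa_{i-1}\le u\le t}S_u$ and $n:=\inf_{\sigma\le u\le\kappa_{i-1}}S_u$, so that $n\le a$ because $\kappa_{i-1}\in[\sigma,\kappa_{i-1}]$, and record
\[
F(t,x)=S_t-\inf_{\sigma\le u\le t}S_u=S_t-(m\wedge n),\qquad \inf_{\kappa_{i-1}\le u\le t}(S_u-S_{\kappa_{i-1}})=m-a .
\]
For $F(\kappa_{i-1}+,x)$ I would use $F(s,x)=S_s-\inf_{\sigma\le u\le s}S_u$ for $s\in(\kappa_{i-1},\kappa_i)$ and let $s\downarrow\kappa_{i-1}$: by right-continuity of $S$ one has $S_s\to a$ and $\inf_{\kappa_{i-1}\le u\le s}S_u\to a\ge n$, hence $\inf_{\sigma\le u\le s}S_u=n\wedge\inf_{\kappa_{i-1}\le u\le s}S_u\to n$, so that $F(\kappa_{i-1}+,x)=a-n$.

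Finally I would substitute these into the right-hand side of the claim,
\[
\bigl((a-n)+(m-a)\bigr)\wedge0+\bigl(S_t-(m\wedge n)\bigr)-(a-n)=(m-n)\wedge0+S_t-(m\wedge n)-a+n ,
\]
and distinguish the two cases: if $m\ge n$ then $(m-n)\wedge0=0$ and $m\wedge n=n$, so the expression equals $S_t-a$; if $m<n$ then $(m-n)\wedge0=m-n$ and $m\wedge n=m$, so it again equals $S_t-a=S_t-S_{\kappa_{i-1}}$. The only delicate points are the $s$-independence of $M_{s,x}$ on $(\kappa_{i-1},\kappa_i]$ and the evaluation of $F(\kappa_{i-1}+,x)$ — specifically that the running infimum $\inf_{\sigma\le u\le s}S_u$ does not fall below $\inf_{\sigma\le u\le\kappa_{i-1}}S_u$ as $s\downarrow\kappa_{i-1}$, which is exactly where the absence of a right jump of the c\`adl\`ag process $S$ at $\kappa_{i-1}$ enters; everything else is bookkeeping.
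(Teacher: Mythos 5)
Your proposal is correct and follows essentially the same route as the paper: you freeze $\tau_{s,x}$ on $(\kappa_{i-1},\kappa_i]$ exactly as in the paper's first step, identify $F(\kappa_{i-1}+,x)=S_{\kappa_{i-1}}-\inf_{\sigma\le u\le\kappa_{i-1}}S_u$ via right-continuity, and your case distinction $m\ge n$ versus $m<n$ is the paper's case distinction behind its identity (\ref{3.7.2013.2}), obtained by splitting $[\tau_{t,x},t]$ at $\kappa_{i-1}$. The only difference is presentational (your explicit $a,m,n$ bookkeeping versus the paper stating the intermediate identity first), so no substantive gap.
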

\begin{proof}
Let $t_1,t_2\in (\kappa_{i-1},\kappa_i]$, i.e., $\varphi_{t_1}=\varphi_{t_2}$. As $x>0$, one has $M_{t_1,x},M_{t_2,x}\subset [0,\kappa_{i-1}]$, which leads, again by 
$\varphi_{t_1}=\varphi_{t_2}$, to $M_{t_1,x}=M_{t_2,x}$.
By $x\le \varphi_{t_1}$, we have $0\in M_{t_1,x}\not=\emptyset$ and arrive at 
\beam\label{3.7.2013.1}
\tau_{t_1,x}=\tau_{t_2,x}\le \kappa_{i-1}.
\eeam
By (\ref{3.7.2013.1}), the limit $\lim_{s\downarrow \kappa_{i-1}}\tau_{s,x}=:\tau_{\kappa_{i-1}+,x} $ exists and coincides with $\tau_{t,x}$,\ $t\in(\kappa_{i-1},\kappa_i]$. 
This leads to
\beam\label{3.7.2013.2}
 \left(-\inf_{\tau_{\kappa_{i-1}+,x}\leq u\leq \kappa_{i-1}}S_u+\inf_{\kappa_{i-1}\leq u\leq t}S_u\right)\wedge 0 
&=& \left(-\inf_{\tau_{t,x}\leq u\leq \kappa_{i-1}}S_u+\inf_{\kappa_{i-1}\leq u\leq t}S_u\right)\wedge 0\nonumber\\
&=& -\inf_{\tau_{t,x}\leq u\leq \kappa_{i-1}}S_u+\inf_{\tau_{t,x}\leq u\leq t}S_u\nonumber\\
&=& -\inf_{\tau_{\kappa_{i-1}+,x}\leq u\leq \kappa_{i-1}}S_u+\inf_{\tau_{t,x}\leq u\leq t}S_u,
\eeam
where for the second equality we use that, by (\ref{3.7.2013.1}), $[\tau_{t,x},t] = [\tau_{t,x},\kappa_{i-1}]\cup [\kappa_{i-1},t]$, and we distinguish the cases 
$\inf_{\tau_{t,x}\leq u\leq \kappa_{i-1}}S_u \ge \inf_{\kappa_{i-1}\leq u\leq t}S_u$ and $\inf_{\tau_{t,x}\leq u\leq \kappa_{i-1}}S_u < \inf_{\kappa_{i-1}\leq u\leq t}S_u$.
Using (\ref{3.7.2013.2}), the right-continuity of $S$, and the definition of $F$, it can immediately be seen that the LHS of (\ref{3.7.2013.2}) equals
$$\left(F(\kappa_{i-1}+,x)+\inf_{\kappa_{i-1}\leq u\leq t}(S_u-S_{\kappa_{i-1}})\right)\wedge 0,$$
and the RHS of (\ref{3.7.2013.2}) equals
$$F(\kappa_{i-1},x)-F(t,x)+S_t-S_{\kappa_{i-1}}.$$
So we are done.
\end{proof}

\begin{proof}[Proof of Proposition~\ref{prop1}]
Let $\varphi$ be as in Lemma~\ref{1.9.2013.1}. First, we consider increments of (\ref{27.7.2013.1}) on $(\kappa_{i-1},\kappa_i]$,\ $i\in\{1,\ldots,k\}$. 
Let $t_1,t_2\in(\kappa_{i-1},\kappa_i]$. Because $\varphi_{t_1}=\varphi_{t_2}$ on $(\kappa_{i-1},\kappa_i]$, one has by definition of $\Pi$

\begin{align*}
\Pi_{t_2}-\Pi_{t_1}=&\alpha\int_0^{\varphi_{t_1}}\left(F(\kappa_{i-1}+,x)+\inf_{\kappa_{i-1}\leq u\leq t_2\wedge \kappa_i}(S_u-S_{\kappa_{i-1}})\right)\wedge0dx\\
		    &-\alpha\int_0^{\varphi_{t_1}}\left(F(\kappa_{i-1}+,x)+\inf_{\kappa_{i-1}\leq u\leq t_1\wedge \kappa_i}(S_u-S_{\kappa_{i-1}})\right)\wedge0dx
		    +\alpha\varphi_{t_1}(D_{t_2}-D_{t_1}).
\end{align*}
By Lemma~\ref{1.9.2013.1}, one arrives at
\begin{align*}
\Pi_{t_2}-\Pi_{t_1}=&\alpha\int_0^{\varphi_{t_1}}\left(S_{t_2}-S_{\kappa_{i-1}} -F(t_2,x)+F(\kappa_{i-1}+,x)\right)dx\\
		    & -\alpha\int_0^{\varphi_{t_1}}\left(S_{t_1}-S_{\kappa_{i-1}} -F(t_1,x)+F(\kappa_{i-1}+,x)\right)dx
		    +\alpha \varphi_{t_1}(D_{t_2}-D_{t_1})\\
		   =&\alpha\varphi_{t_1}\left(S_{t_2}+D_{t_2}-S_{t_1}-D_{t_1}\right) -\alpha\int_0^\infty (F(t_2,x)-F(t_1,x))dx\\
		   =&\alpha\int_0^{t_2}\varphi_sd(S+D)_s-\alpha\int_0^{t_1}\varphi_sd(S+D)_s -\alpha\int_0^\infty F(t_2,x)dx + \alpha\int_0^\infty F(t_1,x)dx,
\end{align*}
where in the last equality we use that $\varphi_s=\varphi_{t_1}$ for all $s\in(t_1,t_2]$.
This means that (\ref{27.7.2013.1}) holds true for all increments on $(\kappa_{i-i},\kappa_i]$. As it obviously holds for $t=0$, it remains to show that the right jumps of the
processes $t\mapsto \int_0^\infty F(t,x)dx$ and $\Pi$ at $\kappa_{i-1}$ sum up to $0$ as the LHS of (\ref{27.7.2013.1}) is right-continuous. By similar arguments as in the proof of
Proposition~\ref{properties_F}(v), one obtains 
\beam\label{29.7.2013.2}
\tau_{\kappa_{i-1}+,x}=\tau_{\kappa_{i-1},x-(\varphi_{\kappa_{i-1}+}-\varphi_{\kappa_{i-1}})}\quad \forall x\in\mathbb{R}_+\quad\mbox{with 
the convention $\tau_{\kappa_{i-1},y}:=t\ \forall y<0$}.
\eeam
With the convention $F(\kappa_{i-1},y)=0$ for $y<0$, one obtains
\beam\label{29.7.2013.1}
\lim_{t\downarrow \kappa_{i-1}}\int_0^{\varphi_t}F(t,x)dx & = & \int_0^{\varphi_{\kappa_{i-1}+}}\left(S_{\kappa_{i-1}} 
- \inf_{\tau_{\kappa_{i-1}+,x}\le u\le \kappa_{i-1}}S_u\right)dx\nonumber\\
& \stackrel{(\ref{29.7.2013.2})}{=} & \int_0^{\varphi_{\kappa_{i-1}+}}F(\kappa_{i-1},x-\Delta^+\varphi_{\kappa_{i-1}})dx\nonumber\\
& = & \int_{-\Delta^+\varphi_{\kappa_{i-1}}}^{\varphi_{\kappa_{i-1}+}-\Delta^+\varphi_{\kappa_{i-1}}}F(\kappa_{i-1},x)dx\nonumber\\
& = & \int_0^{\varphi_{\kappa_{i-1}}}F(\kappa_{i-1},x)dx-\int_0^{-\Delta^+\varphi_{\kappa_{i-1}}}F(\kappa_{i-1},x)dx\nonumber\\
& = & \int_0^{\varphi_{\kappa_{i-1}}}F(\kappa_{i-1},x)dx-\int_0^{\left(\Delta^+\varphi_{\kappa_{i-1}}\right)^-}F(\kappa_{i-1},x)dx,
\eeam
where the first equality follows from the definition of $F$ using that $S$ is right-continuous and $\tau_{t,x}=\tau_{\kappa_{i-1}+,x}$ for all $t\in(\kappa_{i-1},\kappa_i]$ and $x>0$.
(\ref{29.7.2013.1}) means that 
\begin{align}
- \Delta^+\Pi_{\kappa_{i-1}}=\Delta^+\left(\int_0^{\varphi_{\kappa_{i-1}}}F(\kappa_{i-1},x)dx\right) 
=  -\int_0^{\left(\Delta^+\varphi_{\kappa_{i-1}}\right)^-}F(\kappa_{i-1},x)dx,\label{Pi_right_jump}
\end{align}
 and we are done.
\end{proof}

\begin{proof}[Proof of Theorem \ref{main_theorem}]
{\em Step 1:} Let $(\varphi^n)_{n\in\mathbb{N}}$ be a sequence of nonnegative elementary strategies with $\varphi^n_0=0$ and
$\varphi^n\stackrel{\textrm{up}}{\rightarrow}\varphi$.
>From Proposition~\ref{prop44} one knows that $(\omega,t,x)\mapsto F^n_{\omega}(t,x)$ is $\mathcal{O}\otimes\mathcal{B}(\mathbb{R}_+)-\mathcal{B}(\mathbb{R}_+)$-measurable. 
So, $(\omega,t)\mapsto\int_0^\infty F^n_{\omega}(t,x)dx$ is $\mathcal{O}-\mathcal{B}(\mathbb{R}_+)-$measurable. 
Together with Proposition~\ref{prop1} and the fact that $\varphi^n\mal S$ and $\varphi^n\mal D$ are optional, this implies that $\Pi^n$ is also optional. 

In the next step, it is shown that $(\Pi^n)_{n\in\mathbb{N}}$ is an up-Cauchy sequence. Again by Proposition~\ref{prop1}, it is enough to show that 
$(\varphi^n\mal S)_{n\in\mathbb{N}}$, $(\varphi^n\mal D)_{n\in\mathbb{N}}$, and $\left(\int_0^\infty F^n(\cdot,x)dx\right)_{n\in\mathbb{N}}$ are up-Cauchy sequences. 
Because $\varphi^n\stackrel{\textrm{up}}{\rightarrow}\varphi$ and $S$, $D$ are semimartingales, it is known, e.g., from Theorem~II.11 in \cite{protter1}, 
that $(\varphi^n \mal S)_{n\in\mathbb{N}}$, $(\varphi^n\mal D)_{n\in\mathbb{N}}$ are up-Cauchy sequences. So, it remains to consider $\int_0^\infty F^n(t,x)dx$. \vspace{0.3cm}\\
\indent Let $\varepsilon>0$. As $S$ possesses c\`adl\`ag paths, there exists $K\in\mathbb{R}_+$ s.t. 
$$P\left(\sup_{0\le t\le T}S_t - \inf_{0\le t\le T} S_t\geq K\right)\leq \frac\varepsilon 2.$$
As $\varphi^n\stackrel{\mbox{\small up}}{\rightarrow}\varphi$, there exists $N_{\varepsilon}\in\mathbb{N}$ s.t.
$$P\left(\sup_{0<t\leq T}|\varphi_t^n-\varphi_t^m|> \frac{\varepsilon}{3K}\right)\leq \frac{\varepsilon}2,\quad \forall n,m\geq N_{\varepsilon}.$$
By Lemma \ref{lemma44}, we have 
\beao
\left\{\sup_{0\le t\leq T}\left|\int_0^\infty F^n(t,x)-F^m(t,x)dx\right|>\frac{\varepsilon}{K}(\sup_{0\le t\le T}S_t-\inf_{0\le t\le T} S_t)\right\}
\subset
\left\{\sup_{0<t\leq T}|\varphi_t^n-\varphi_t^m| > {\frac{\varepsilon}{3K}}\right\},
\eeao
and one gets
\begin{align*}
 &P\left(\sup_{0\le t\leq T}\left|\int_0^\infty F^m(t,x)-F^n(t,x)\right|>\varepsilon\right)\\
\leq & P\left((\sup_{0\le t\le T} S_t-\inf_{0\le t\le T}S_t)\frac\varepsilon K>\varepsilon\right)+ P\left(\sup_{0<t\leq T} |\varphi^n_t-\varphi^m_t|\geq \frac\varepsilon{3K}\right)
\leq \frac{\varepsilon}2 + \frac{\varepsilon}2 = \varepsilon\quad\forall n,m\ge N_\eps.
\end{align*}
So, $(\Pi^n)_{n\in\mathbb{N}}$ is an up-Cauchy sequence. Because the space of l\`agl\`ad functions (also called ``regulated functions'') 
mapping from $[0,T]$ to $\mathbb{R}$ is complete w.r.t. the supremum norm, 
there exists an optional l\`agl\`ad process~$\Pi$ s.t.
$\Pi^n\stackrel{\mbox{\small up}}{\to}\Pi$ (optionality follows from pointwise convergence up to evanescence of a suitable subsequence and the usual conditions).\vspace{0.3cm}\\

{\em Step 2:} Let us now show (\ref{jumpPi-}). Let $t\in(0,T]$, $x_0\in(0,\varphi_t)$ and assume that 
\beao
x\mapsto \wt{F}(t,x):= S_{t-} - \inf_{\tau_{t,x}\le u<t} S_u
\eeao
is continuous at $x_0$. $\wt{F}(t,\cdot)$ is the time-$t$ book profit function under the modified stock price process $\wt{S}_u:=1_{(u<t)}S_u + 1_{(u\ge t)}S_{t-}$ (this 
modification removes the impact of $\Delta S_t$ on the book profits).

Let $\varepsilon\in(0,\varphi_t-x_0)$. By the left-continuity of $\varphi$ and by $\tau_{t,x_0+\varepsilon}\le \tau_{t,x_0}<t$, one has for $s$ smaller but close enough to $t$ that
\beam\label{22.8.2014.1}
|\vp_s - \vp_t|\le \eps\quad\mbox{and}\quad s>\tau_{t,x_0+\eps}.
\eeam
For $s$ satisfying (\ref{22.8.2014.1}), one has
that $M_{s,x_0}\not=\emptyset$, $M_{t,x_0+\eps}\cap[0,s]\not=\emptyset$, and the two implications
\beao
u\in M_{s,x_0} \Rightarrow  u\in M_{t,x_0-\eps},\qquad u\in M_{t,x_0+\eps}\cap[0,s] \Rightarrow  u\in M_{s,x_0}
\eeao
hold; see (\ref{13.8.2014.1}) for the definition of $M$. This implies
\beao
\tau_{t,x_0-\varepsilon} \ge \tau_{s,x_0} \ge \tau_{t,x_0+\varepsilon}.
\eeao
It follows that 
\beao
\inf_{\tau_{t,x_0+\varepsilon}\le u<t} S_u \le \inf_{\tau_{s,x_0}\le u<t} S_u \le \inf_{\tau_{t,x_0-\varepsilon}\le u<t} S_u.
\eeao
By the continuity of $\wt{F}(t,\cdot)$ in $x_0$, the left and the right bound are close together for $\varepsilon$ small. 
We conclude that $\lim_{s<t,s\to t}F(s,x_0)=:F(t-,x_0)$ exists and 
\beam\label{27.9.2013.1}
F(t-,x_0)=S_{t-}-\inf_{\tau_{t,x_0}\leq u<t} S_u
\eeam
(For elementary strategies, one has that $\tau_{s,x}=\tau_{t,x}$ for $s$ smaller but close to $t$, and therefore the limit $F(t-,x)$ exists for all $x\in\bbr_+$). 
By (\ref{27.9.2013.1}) and a distinction of the cases $S_t<\inf_{\tau_{t,x_0}\leq u<t} S_u$ and $S_t\ge \inf_{\tau_{t,x_0}\leq u<t} S_u$, one obtains 
$F(t,x_0) = 0 \vee (F(t-,x_0)+\Delta S_t)$ and thus
\beao
\Delta F(t,x_0) = (-F(t-,x_0))\vee \Delta S_t = \Delta S_t  + (-F(t-,x_0)-\Delta S_t)\vee 0.
\eeao 
By monotonicity, the mapping~$x\mapsto \wt{F}(t,x)$  has at most countably
many discontinuities, so that $\lim_{s<t,s\to t} \int_0^\infty F(s,x)\,dx$ exists and  
 \beam\label{11.9.2013.3}
\Delta \int_0^\infty F(t,x)\,dx = \Delta \int_0^{\varphi_t} F(t,x)\,dx = \varphi_t \Delta S_t + \int_0^{\varphi_t}(-\limsup_{s<t,s\to t}F(s,x)-\Delta S_t)\vee 0\,dx 
\eeam
(interchanging integral and limit is possible as $F$ and $S$ are bounded for $\omega$ fixed). 
By construction of $\Pi$, Proposition~\ref{prop1} holds for all $\varphi\in\bbl$. Together with (\ref{11.9.2013.3}) and $\Delta (\varphi\mal (S+D))=\varphi\Delta(S+D)$, 
this implies (\ref{jumpPi-}).
 
{\em Step 3:} It remains to prove (\ref{jumpPi+}). For the approximating elementary trading strategies~$\varphi^n$, it follows immediately from Definition~\ref{Pi_elementar}. 
As $(\Delta^+ \varphi^n)^-$ converges to $(\Delta^+ \varphi)^-$ uniformly in probability,  
\begin{align}
\int_0^{(\Delta^+\varphi_t^{n})^-}F^{n}(t,x)dx\stackrel{\mbox{up}}{\longrightarrow}\int_0^{(\Delta^+\varphi_t)^-}F(t,x)dx\label{gleichung1}
\end{align}
follows by the same arguments as in the proof of Lemma~\ref{lemma44}. Putting everything together the assertion follows.
\end{proof}

\section{Self-financing condition}\label{section_self}
To prepare Section \ref{section_comparison}, we introduce the self-financing condition of the model which is a natural generalization of the standard continuous time 
self-financing condition without taxes.

Besides the risky stock with price process~$S$  and dividend process~$D$, the market consists of a so-called bank account. 
Formally, the bank account can be seen as a security with price
process~$1$ and {\em dividend process}
\begin{align}
B_t=\int_0^t r_s\,ds,\quad t\in[0,T],\label{bond_dividend}
\end{align}
where the locally riskless interest rate $r$ is a predictable, {\em nonnegative}, and integrable process. This simplifies the analysis as increments of $B$ are taxed immediately, 
and one needs not consider unrealized book profits of the bank account (as for the risky stock).
\begin{thmD}[Wealth process and self-financing condition]
Let $X$ be an optional process  modeling the number of monetary units in the bank account, and
$\varphi\in\mathbb{L}$ models the number of stocks the investor holds in her portfolio. 
The {\em wealth process} $V$ of the strategy~$(X,\varphi)$ is defined as
\beam\label{4.8.2013.1}
V=V(X,\varphi):=X + \varphi S.
\eeam
A strategy $(X,\varphi)$ is called self-financing with initial wealth $v_0$ iff
\beam\label{4.8.2013.2}
V=v_0+(1-\alpha)X\mal B + \varphi\mal D+\varphi\mal S-\Pi
\eeam
with $\Pi$ from Definition~\ref{6.8.2013.1}. 
\end{thmD}
\begin{Bemerkung}
 As $B$ is continuous, it is sufficient to assume that $X$ is optional instead of predictable. Thus, the after-tax dividend $(1-\alpha)\varphi_t\Delta D_t$ of the stock can be included 
in the number of monetary units~$X_t$. 
Note that an immediate reinvestment of the payoff in the stock would only affect $\varphi_{t+}$, but not $\varphi_t$.
\end{Bemerkung}
\begin{Bemerkung}\label{10.9.2013.1}
For any $\varphi\in\mathbb{L}$, $v_0\in\bbr$, there exists a unique optional process~$X$ s.t. $(X,\varphi)$ is self-financing. 
Indeed, plugging (\ref{4.8.2013.1}) into (\ref{4.8.2013.2}) yields
\begin{align}\label{SDE_X}
 X=v_0+(1-\alpha)X\mal B + \varphi\mal D+\varphi\mal S-\Pi-\varphi S.
\end{align}
Now, an optional process $X$ solves (\ref{SDE_X}) iff $X$ is l\`agl\`ad, the c\`adl\`ag process~$X_+$ solves the SDE
$$Z= v_0 + (1-\alpha)Z_-\mal B + \varphi\mal D+\varphi\mal S-\Pi_+-\varphi_+S$$
(which has a unique solution~$Z$, cf., e.g., Theorem V.7 in \cite{protter1}), and $X=Z-\Delta^+ X=Z+\Delta^+ \Pi+S\Delta^+\varphi$. 
\end{Bemerkung}
(\ref{4.8.2013.2}) means that increments of the wealth process solely result from trading gains and tax payments. An alternative condition is to assume 
that portfolio regroupings do not 
involve costs. The latter condition may be more intuitive, but it has the drawback that it can only be stated for strategies that can be used as integrators
(thus, trading strategies that are no semimartingales would be excluded although they could economically make sense).
Let $\varphi$ and $\Pi$ be as in Definition~\ref{Pi_elementar}. The alternative self-financing condition reads
\begin{align}
X_t=v_0-\sum_{i=1}^k 1_{(\kappa_{i-1}< t)} S_{\kappa_{i-1}}(\varphi_{\kappa_{i-1}+}-\varphi_{\kappa_{i-1}})+\int_0^t(1-\alpha)X_s r_s ds -\Pi_t+\varphi\mal D_t.\label{f3}
\end{align}

It is an easy exercise to prove equivalence of (\ref{f3}) and (\ref{4.8.2013.2}) for elementary strategies.

\section{Comparison of different dividend policies}\label{section_comparison}

In this section, we investigate the effect of different dividend policies on the investor's after-tax wealth. 
In particular, we show that under the mild condition that the dividend policy has no effect on the {\em stochastic return process}, the effect of dividends is always negative.
This assumption is formalized by the following definition.

\begin{Definition}\label{25.9.2013.1}
Let $R$ be a semimartingale with $\Delta R\ge -1$ and $s_0\in\bbr_+$. Then, for any nondecreasing c\`adl\`ag process~$D$, define $S^D$ as the unique solution of 
\beam\label{9.9.2013.1}
S^D = s_0 + S^D_-\mal R - D.
\eeam
We call $D$ admissible iff $S^D\ge 0$, i.e., we only consider dividend payoffs that do not exceed the stock price. 
$R$ is the {\em return process} modeling the stochastic 
profit per invested capital. 
\end{Definition}
Observe that for any admissible $D$ the stock price $S^D$ stays at zero once the process or its left limit hit it. Note that by $\Delta R\ge -1$, $D=0$, which corresponds to the model 
without dividends, is admissible. 
Alternatively, one can start with an arbitrary nondecreasing process~$\wt{D}$ with
\beam\label{1.8.2013.1}
\Delta \wt{D} \le 1+ \Delta R
\eeam
modeling accumulated dividends as multiples of the current stock price and consider the SDE
\beam\label{4.8.2013.3}
S = s_0 + S_-\mal (R - \wt{D}).
\eeam
Then, $S^D=S$ for $D := S_-\mal \wt{D}$, and, by (\ref{1.8.2013.1}), the stock price is nonnegative. But, as for an arbitrary admissible dividend process~$D$ the integral
$\frac1{S^D_-}1_{\{S^D_->0\}}\mal D$ may explode, Definition~\ref{25.9.2013.1} is slightly more general.\\ 

\begin{Bemerkung}\label{21.9.2013.1}
(\ref{9.9.2013.1}) says that one has the same $R$ for all processes~$D$, i.e., there holds a scaling invariance of the stochastic investment opportunities. 
The negative effect of dividends on the after-tax wealth is essentially based on this property. It is, e.g., not satisfied in the Bachelier model with dividends.

Note that we do not assume that dividend payoffs are accompanied by downward jumps of the same size of the stock price. Such a behavior can be explained by no-arbitrage 
arguments if dividends are predictable. However, the framework also allows for a spontaneous dividend payment~$\Delta D_t$, e.g., if $\Delta R_t$ is large.
\end{Bemerkung}


Recall that we consider a market model with two investment opportunities: a risky stock with price process~$S^D$ and dividend process~$D$
(interrelated by Condition (\ref{9.9.2013.1})) and a locally riskless bank account. The latter is an asset with price process~$1$ and the nondecreasing dividend process~$B$ 
from (\ref{bond_dividend}).  
We denote the model by $((S^D,D),(1,B))$. Now, we compare the situation of an arbitrary admissible dividend process~$D$ with the situation of no dividends. In  the latter model,
we use the subscript~$0$, i.e., $S^0$, $\Pi^0$, $V^0$, etc. The following theorem is the main result of this section.
\begin{thmS}\label{propV}
 Let $(X^D,\varphi^D)$ be a self-financing strategy with initial wealth~$v_0$ in the model with dividends $((S^D,D),(1,B))$, and let $V^D$ be the corresponding wealth process. 
 Then, there exists a self-financing strategy $(X^0,\varphi^0)$ with initial wealth~$v_0$ in the model without dividends $((S^0,0),(1,B))$, where $V^0$ is the 
 corresponding wealth process, s.t. $V^D\leq V^0$.
\end{thmS}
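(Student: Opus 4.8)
The plan is to construct the strategy $(X^0,\varphi^0)$ explicitly from $(X^D,\varphi^D)$ by keeping the \emph{same number of shares}, $\varphi^0:=\varphi^D$, and by defining $X^0$ through the self-financing condition (\ref{SDE_X}) in the dividend-free model (which has a unique solution by Remark~\ref{10.9.2013.1}). The point is then to compare the two wealth processes path by path. The first step is to relate the two stock prices. Writing $S^D$ and $S^0$ as solutions of (\ref{9.9.2013.1}) with the \emph{same} return process $R$, one has $S^0=\mathcal{E}(R)s_0$ and $S^D = \mathcal{E}(R)\left(s_0 - \int_0^\cdot \mathcal{E}(R)_{s-}^{-1}\,dD_s\right)$ by the stochastic-exponential/variation-of-constants formula; in particular $0\le S^D\le S^0$, and the ``missing'' value $S^0-S^D = \mathcal{E}(R)\int_0^\cdot \mathcal{E}(R)_{s-}^{-1}dD_s$ is exactly the capitalized stream of past dividends. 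Intuitively, in the dividend-free model the investor has all the dividends still sitting inside the share price as \emph{unrealized} book profit, whereas in the dividend model those same amounts were paid out, taxed at rate $\alpha$, and only the after-tax part $(1-\alpha)$ could be reinvested in the bank account at rate $r\ge 0$.

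The second step is to set up the comparison of $V^D=X^D+\varphi^D S^D$ and $V^0=X^0+\varphi^D S^0$. Subtracting the two self-financing equations (\ref{4.8.2013.2}), the stock-trading-gain terms $\varphi^D\mal S^D$ vs.\ $\varphi^D\mal S^0$ and the bank-account terms must be handled together with the two tax processes $\Pi^D$ and $\Pi^0$. I would introduce the difference process $W:=V^0-V^D$ and show it satisfies a linear equation of the form $W = (1-\alpha)W_-\mal B + (\text{nonnegative driver})$, so that by $r\ge 0$ and the comparison/positivity of linear SDEs driven by $\mathcal{E}((1-\alpha)B)>0$, one gets $W\ge 0$. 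The nonnegative driver should consist of: (i) the after-tax dividend stream that, in the $D$-model, leaves the ``share compartment'' and is only partially recovered, against (ii) the difference of the two tax flows. Here the book-profit function is the right bookkeeping device: the term $\int_0^{\varphi_t}F(t,x)\,dx$ appearing in Proposition~\ref{prop1} is larger in the dividend-free model precisely because the stock carries more unrealized gain, so $\Pi^0$ lags behind $\Pi^D$ in a way that, together with the immediate dividend tax $\alpha\varphi\mal D$ in $\Pi^D$, produces a sign.

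Concretely, I expect to use Proposition~\ref{prop1} (extended to all $\varphi\in\mathbb{L}$, as noted in Step~2 of the proof of Theorem~\ref{main_theorem}) to eliminate $\Pi$ in favour of $\alpha\,\varphi\mal(S+D)-\alpha\int_0^\infty F(\cdot,x)\,dx$. Plugging this into (\ref{4.8.2013.2}) rewrites the wealth process as
\beao
V = v_0 + (1-\alpha)X\mal B + (1-\alpha)\varphi\mal(S+D) + \alpha\int_0^\infty F(\cdot,x)\,dx,
\eeao
so $V^0-V^D$ becomes $(1-\alpha)(X^0-X^D)\mal B + (1-\alpha)\varphi^D\mal\big((S^0+0)-(S^D+D)\big) + \alpha\big(\int F^0 - \int F^D\big)$. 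The first bracket in the stock term is nonnegative increments: by the variation-of-constants identity, $d(S^0 - S^D - D) = (S^0-S^D)_-\,dR$ up to the $\mathcal{E}(R)$-weighting, i.e.\ $S^0-S^D-D = (S^0-S^D)_-\mal R$, whose ``drift'' is controlled; and the book-profit difference is nonnegative because $\tau^0_{t,x}=\tau^D_{t,x}$ (the purchasing times depend only on $\varphi^D$, not on the price) while $S^0_u-\inf_{[\tau,t]}S^0 \ge S^D_u-\inf_{[\tau,t]}S^D$ does \emph{not} hold pointwise — so the genuinely delicate point is comparing the two reflected (Skorokhod) processes. This is where I expect the main obstacle: one must show that the extra book profit in the no-dividend model dominates the after-tax dividends that were siphoned off, uniformly over the reflection at $0$, and in particular that wash-sale tax credits in the $D$-model do not overcompensate. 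I would handle this by the Skorokhod-problem representation of $F(t,\cdot)$ mentioned after Remark~\ref{24.9.2013.1}: the map $y\mapsto (\text{reflected path started at }y)$ is $1$-Lipschitz and monotone, so a larger driving price and a larger starting book profit both only increase $F$, which — after assembling the pieces — yields $W\ge 0$ via Gronwall/linear-SDE positivity. Finally one checks $(X^0,\varphi^0)$ is admissible in the required sense and concludes $V^D\le V^0$.
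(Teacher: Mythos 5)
Your construction breaks down at the very first step: you take $\varphi^0:=\varphi^D$, i.e.\ the same \emph{number of shares} in the dividend-free model. With this choice the difference of the stock terms in the two self-financing equations is $\varphi^D\mal S^0-\varphi^D\mal(S^D+D)=\varphi^D(S^0_--S^D_-)\mal R$, a stochastic integral against the return process with no sign, and no amount of bookkeeping with $F$ can repair this: the capitalized past dividends $S^0-S^D$ remain exposed to stock risk in your $\varphi^0$-portfolio, whereas in the $D$-model they have been paid out and (after tax) sit safely in the bank. A concrete counterexample: $s_0=1$, $r=0$, $\varphi^D=1_{(0,2]}$, a lump dividend $\Delta D_1=d\in(0,1)$ (so $S^D_1=1-d$, $S^0_1=1$), and then $\Delta R_2=-1$, so both stocks drop to $0$. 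One computes $V^D_2=v_0+(1-\alpha)(d-1)$ (trading gains $d-1$, wash-sale credit $\alpha$, dividend tax $\alpha d$), while your strategy gives $V^0_2=v_0-(1-\alpha)$, hence $V^D_2-V^0_2=(1-\alpha)d>0$, contradicting the claimed $V^D\le V^0$. So the statement is simply false for $\varphi^0=\varphi^D$; the ``delicate point'' you flag (comparing the reflected book-profit processes) is not where the argument fails, and the Lipschitz/monotonicity of the Skorokhod map only concerns the tax term, not the signless trading-gain term. Also, your intermediate claim that both the stock term and the book-profit difference come out nonnegative is unsubstantiated and, as the example shows, cannot both be salvaged.

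The paper's proof instead matches the \emph{monetary} stock exposure: $\varphi^0:=\varphi^D\frac{S^D_-}{S^0_-}1_{\{S^0_->0\}}$ (so the $D$-model strategy is recovered by reinvesting dividends in new shares, cf.\ Figure~\ref{ftx_dividends}). Using that $S^D/S^0$ is nonincreasing (Lemma~\ref{9.9.2013.5}, via Yoeurp--Yor), Lemma~\ref{9.9.2013.2} shows the trading gains coincide exactly, $\varphi^0\mal S^0=\varphi^D\mal(S^D+D)$, and that $\Pi^0\le\Pi^D$, the latter by comparing purchasing times ($\tau^D_{t,x}\ge\tau^0_{t,x\varphi^0_t/\varphi^D_t}$, which is \emph{not} an identity of the $\tau$'s as in your setup) and book profits through the scaling inequality $F^D(t,x)\le\frac{\varphi^0_t}{\varphi^D_t}F^0\bigl(t,\frac{\varphi^0_t}{\varphi^D_t}x\bigr)$, with Proposition~\ref{prop1} turning this into the tax comparison. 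The conclusion then follows by applying Gronwall's lemma (with $r\ge0$) to the c\`adl\`ag process $(X^D_+-X^0_+)\vee 0$ and handling the right jumps separately via (\ref{jumpPi+}). Your idea of eliminating $\Pi$ via Proposition~\ref{prop1} and closing with a Gronwall-type positivity argument is in the right spirit, but it only works after the exposure-matching choice of $\varphi^0$; as written, the proposal proves a false comparison.
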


\begin{Lemma}\label{9.9.2013.5}
The process 
\beao
\frac{S^D}{S^0}1_{\{S^0>0\}}
\eeao
is nonincreasing.
\end{Lemma}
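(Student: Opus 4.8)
The plan is to compare the two stock prices by showing that their ratio solves a simple equation. First I would recall that both $S^D$ and $S^0$ are driven by the \emph{same} return process $R$: by Definition~\ref{25.9.2013.1}, $S^D = s_0 + S^D_-\mal R - D$ and $S^0 = s_0 + S^0_-\mal R$. Since $D$ is nondecreasing and $\Delta R\ge -1$ (so that $S^0_->0$ forces $S^0>0$), the process $S^0$ is a nonnegative solution of a linear SDE and stays at $0$ once it or its left limit hits $0$; the same holds for $S^D$ by admissibility. So the quotient $Q:=\frac{S^D}{S^0}1_{\{S^0>0\}}$ is well-defined, c\`adl\`ag, and we only need to argue on the (optional) set $\{S^0_->0\}$, outside of which $Q$ is eventually frozen at its last value.

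The key step is an integration-by-parts / It\^o computation for $Q = S^D\cdot (1/S^0)$ on $\{S^0_->0\}$. Writing $1/S^0$ via the formula for the inverse (or stochastic logarithm) of the exponential-type semimartingale $S^0$, and using that $S^D$ and $S^0$ have the \emph{same} martingale and finite-variation parts coming from $S^0_-\mal R$ resp. $S^D_-\mal R$ — i.e. the ``$R$-parts'' of $d\log S^D$ and $d\log S^0$ cancel in the ratio — I expect the bracket term $d[S^D,1/S^0]$ and the drift contributions from $R$ to cancel, leaving only the contribution of $-D$. Concretely I would aim to show
\beao
Q_t = 1 - \int_0^t \frac{1}{S^0_{u-}}\,dD_u^{(\cdot)}\quad\text{on }\{S^0_->0\},
\eeao
for an appropriate nonnegative integrator built from $D$ (heuristically $dD_u/S^0_{u-}$, with a jump correction $\frac{S^D_{u-}}{S^0_{u-}}\cdot\frac{\Delta D_u}{S^0_u}$ or similar handled by writing the jump of $Q$ directly as $\Delta Q_u = Q_{u-}\big(\tfrac{\Delta S^D_u}{S^D_{u-}} - \tfrac{\Delta S^0_u}{S^0_{u-}}\big) + (\text{bracket jump})$ and checking it is $\le 0$). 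Once $Q$ is exhibited as $1$ minus a nondecreasing process (on the set where $S^0_->0$), it is manifestly nonincreasing there; and at the (single, terminal) time $S^0$ or $S^0_-$ reaches $0$, $Q$ simply stops, preserving monotonicity on all of $[0,T]$.

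The main obstacle I anticipate is bookkeeping with jumps: since $R$ may have jumps of size exactly $-1$ (where $S^0$ can jump to $0$) and $D$ may have simultaneous jumps not matched by $R$, one must be careful that $S^0_{u-}>0$ wherever the integrand $1/S^0_{u-}$ appears, that division by $S^0_u$ (as opposed to $S^0_{u-}$) never occurs at a bad time, and that the localization argument (stopping at $\inf\{u: S^0_u=0\text{ or }S^0_{u-}=0\}$) genuinely glues back together. A clean way to sidestep the inverse-SDE computation is instead to verify directly that $d(S^0 Q) = dS^D$ with $Q$ of the proposed form, i.e. to \emph{guess} $Q_t = 1 - \big(\tfrac{1}{S^0_-}1_{\{S^0_->0\}}\big)\mal \wt D_t$ with $\wt D$ nondecreasing and check $S^0Q = S^D$ by uniqueness of the linear SDE (\ref{9.9.2013.1}) for $S^D$ — this reduces everything to showing the candidate satisfies $S^0Q = s_0 + (S^0Q)_-\mal R - D$, which is a bracket computation using $S^0 = s_0 + S^0_-\mal R$. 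Either route gives that $Q$ is $1$ minus a nondecreasing process, hence nonincreasing, which is the claim.
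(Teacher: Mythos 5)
Your proposal is correct and follows essentially the same route as the paper: the quotient formula you are deriving by integration by parts (equivalently, by guess-and-verify with uniqueness of the linear SDE) is exactly the Yoeurp--Yor formula (\ref{23.9.2013.1}) that the paper cites, whose second factor is $1$ minus a nondecreasing process driven by $D$, with the jump correction you leave as ``or similar'' being $\frac{1}{S^0_{-}}\frac{\Delta D\,\Delta R}{1+\Delta R}$, i.e.\ $\Delta Q=-\Delta D/S^0\le 0$. The only cosmetic slip is that on $\{S^0=0\}$ the indicator sets $Q$ to $0$ rather than freezing it at its last value, but since $Q\ge 0$ this does not affect monotonicity, just as in the paper's remark that $S^D=S^0=0$ from the first time $\Delta R=-1$ onwards.
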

\begin{proof}
The case $s_0=0$ is obvious. Let $s_0>0$ and define $\tau:=\inf\{t\ge 0\ |\
\Delta R_t=-1\}$. By the formula of Yoeurp-Yor~\cite{yoeurp.yor.1977} (see also \cite{jaschke.2003}), one has
\beam\label{23.9.2013.1}
S^D = S^0 \left(1 - \frac1{S^0_-}\mal D + \sum_{0<s\le \cdot} \frac1{S^0_{s-}}\frac{\Delta D_s\Delta R_s}{1+\Delta R_s}\right)
\eeam
on the stochastic interval~$\auf 0,\tau\auf$. The second factor of the RHS of (\ref{23.9.2013.1}) is obviously a nonincreasing process. As
$S^D=S^0=0$ on $\auf \tau,\infty\auf$, we are done.
\end{proof}

The key step to prove Theorem~\ref{propV} is the following lemma.

\begin{Lemma}\label{9.9.2013.2}
Let $\varphi^D\in\mathbb{L}$ and $\varphi^0:=\varphi^D\frac{S^D_-}{S^0_-}1_{\{S^0_->0\}}$. Then, $\varphi^0\in\bbl$,
\begin{align}
\varphi^0\mal S^0 = \varphi^D\mal(S^D+D),\quad\mbox{and}\quad\Pi^0\leq\Pi^D.\label{ineq1}
\end{align}
\end{Lemma}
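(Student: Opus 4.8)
\textbf{Proof plan for Lemma~\ref{9.9.2013.2}.}
The plan is to verify the three assertions in order, reducing everything to pathwise statements about the book profit function~$F$. First I would show $\varphi^0\in\bbl$: the factor $\frac{S^D_-}{S^0_-}1_{\{S^0_->0\}}$ is left-continuous and adapted because both $S^D_-$ and $S^0_-$ are left limits of c\`adl\`ag semimartingales, and on $\{S^0_->0\}$ the denominator is bounded away from zero on compact intervals (since $S^0$ is c\`adl\`ag and, by Definition~\ref{25.9.2013.1}, stays at zero once it or its left limit hits it); hence $\varphi^0$ is a product of a process in $\bbl$ with a bounded left-continuous adapted process, and $\varphi^0_0=0$, $\varphi^0\ge0$.

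For the integral identity, I would use that $S^D=\frac{S^D_-}{S^0_-}1_{\{S^0_->0\}}\mal S^0 + \text{(dividend correction)}$. More concretely, from $S^D = s_0 + S^D_-\mal R - D$ and $S^0 = s_0 + S^0_-\mal R$ one gets (on $\{S^0_->0\}$, which carries all the mass) $dS^D = \frac{S^D_-}{S^0_-}\,dS^0 - dD$, i.e. $S^D + D - s_0 = \frac{S^D_-}{S^0_-}1_{\{S^0_->0\}}\mal S^0$. Integrating $\varphi^D$ against both sides and using associativity of the stochastic integral, $\varphi^D\mal(S^D+D) = \varphi^D\mal\left(\frac{S^D_-}{S^0_-}1_{\{S^0_->0\}}\mal S^0\right) = \left(\varphi^D\frac{S^D_-}{S^0_-}1_{\{S^0_->0\}}\right)\mal S^0 = \varphi^0\mal S^0$; the left-continuous integrand $\varphi^D\frac{S^D_-}{S^0_-}1_{\{S^0_->0\}}$ agrees with the predictable $\varphi^D\frac{S^D_-}{S^0_-}1_{\{S^0_->0\}}$ up to a set that does not affect the integral, so associativity applies.

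The substance is the inequality $\Pi^0\le\Pi^D$, and this is the step I expect to be the main obstacle. By Theorem~\ref{main_theorem} and continuity of $\Pi$ w.r.t. up-convergence, it suffices to compare the tax processes of approximating elementary strategies, or — cleaner — to compare them termwise via the structural decomposition. The dividend terms: in the model without dividends there is no dividend tax, so $\alpha\int_0^t\varphi^0_u\,dD^0_u=0\le\alpha\int_0^t\varphi^D_u\,dD_u$. For the remaining two terms it suffices to establish the pathwise pointwise domination of book profit functions
\beam\label{Fdominance}
\int_0^{\varphi^0_t} F^0(t,x)\,dx \;\le\; \int_0^{\varphi^D_t} F^D(t,x)\,dx\quad\text{for all }t,
\eeam
together with the matching inequality for the corresponding jump/increment terms, since Proposition~\ref{prop1} writes $\Pi = \alpha\,\varphi\mal(S+D) - \alpha\int_0^\infty F(\cdot,x)\,dx$ (for elementary, and by extension for all $\varphi\in\bbl$), and the $\varphi\mal(S+D)$ parts already coincide by the integral identity just proved. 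Thus $\Pi^D-\Pi^0 = \alpha\int_0^\infty F^0(t,x)\,dx - \alpha\int_0^\infty F^D(t,x)\,dx + 0 \ge 0$ would follow directly from \eqref{Fdominance}. To prove \eqref{Fdominance} I would argue pathwise: the key point is that Lemma~\ref{9.9.2013.5} makes $c_t:=\frac{S^D_t}{S^0_t}1_{\{S^0_t>0\}}$ nonincreasing, so (i) $\varphi^0_t = \varphi^D_t c_{t-} \le \varphi^D_t c_{0} $ need not hold, but the right relation is that the ``amount of stock in units of $S^0$'' shrinks over time relative to $S^D$-units, and (ii) the purchasing-time map: a share carrying label $x\le\varphi^0_t$ in the $\varphi^0$-portfolio was bought at $\tau^0_{t,x}$, and one checks $\tau^0_{t,x}\le$ the corresponding purchasing time in the $\varphi^D$-portfolio for a share of the same ``relative seniority'', because the reparametrization $x\mapsto$ (its $\varphi^D$-label) is monotone. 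Then $F^0(t,x) = S^0_t - \inf_{\tau^0_{t,x}\le u\le t}S^0_u$ must be bounded by the corresponding $F^D$ using $S^D_u = c_u S^0_u$ with $c_u$ nonincreasing: on the relevant interval $S^D_u/S^0_u$ is largest at the left endpoint, so $\inf S^D_u \le c_{\tau}\inf S^0_u$-type estimates let the book profit in $S^D$-units dominate that in $S^0$-units after the change of variables $dx \leftrightarrow c_{\tau}\,dy$. Carrying out this change-of-variables/monotone-rearrangement argument carefully — tracking how the ordering of shares by residence time is preserved under the map $\varphi^0\leftrightarrow\varphi^D$ and how wash-sale losses transform — is the technical heart; once \eqref{Fdominance} and its increment analogue are in hand, the lemma follows by assembling the three terms of $\Pi$.
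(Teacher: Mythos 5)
Your route is the paper's: prove the trading-gain identity by associativity of the stochastic integral, invoke Proposition~\ref{prop1} (which extends to all $\varphi\in\bbl$ by construction of $\Pi$) to get $\Pi^D_t-\Pi^0_t=\alpha\int_0^\infty F^0(t,x)\,dx-\alpha\int_0^\infty F^D(t,x)\,dx$, and then compare book profit functions. But at exactly the step you yourself call the heart, your sufficient condition is stated in the wrong direction: you ask for $\int_0^{\varphi^0_t}F^0(t,x)\,dx\le\int_0^{\varphi^D_t}F^D(t,x)\,dx$, and your sketch likewise aims to let ``the book profit in $S^D$-units dominate that in $S^0$-units'' and to bound $F^0$ by $F^D$. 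Plugged into the identity above, that would give $\Pi^D-\Pi^0\le 0$, the opposite of the lemma; your concluding line $\Pi^D-\Pi^0=\alpha\int F^0-\alpha\int F^D\ge 0$ silently uses the reverse inequality, so the argument is internally inconsistent. What must be shown is $\int_0^\infty F^D(t,x)\,dx\le\int_0^\infty F^0(t,x)\,dx$: the no-dividend portfolio carries the larger unrealized book profits (same trading gains, more deferral), which is also what your own purchasing-time observation $\tau^0\le\tau^D$ points to.

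Beyond the sign issue, the core comparison is only gestured at, and the naive pointwise bound $F^D(t,x)\le F^0(t,x)$ is false for $x\in(\varphi^0_t,\varphi^D_t]$, where $F^0(t,x)=0$. The paper's proof needs the rescaled bound $F^D(t,x)\le\frac{\varphi^0_t}{\varphi^D_t}F^0\bigl(t,\frac{\varphi^0_t}{\varphi^D_t}x\bigr)$, built from two precise ingredients: the purchasing-time comparison $\tau^D_{t,x}\ge\tau^0_{t,x\varphi^0_t/\varphi^D_t}$ (using that $\varphi^D/\varphi^0=S^0_-/S^D_-$ is nondecreasing, Lemma~\ref{9.9.2013.5}), and the estimate $S^D_t-\inf_{\tau^D_{t,x}\le u\le t}S^D_u\le\frac{S^D_{t-}}{S^0_{t-}}\bigl(S^0_t-\inf_{\tau^D_{t,x}\le u< t}S^0_u\bigr)\vee 0$, where restricting the ratio bound to $u<t$ and inserting the $\vee\,0$ are exactly the points at which a rough ``$c_u$ is nonincreasing'' argument would otherwise fail at a jump of $S$ at time $t$; integrating and substituting $y=\frac{\varphi^0_t}{\varphi^D_t}x$ then gives the correct direction. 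One further caveat: reducing to ``approximating elementary strategies on both sides'' is not available as stated, because the $\varphi^0$ associated with an elementary $\varphi^D$ is not elementary; the reduction has to go through Proposition~\ref{prop1} for general $\varphi\in\bbl$, as in your preferred variant. Your treatment of $\varphi^0\in\bbl$ and of the identity $\varphi^0\mal S^0=\varphi^D\mal(S^D+D)$ is essentially fine (the latter only needs $\{S^0_-=0\}\subset\{S^D_-=0\}$ to dispose of the indicator), but as it stands the proof of $\Pi^0\le\Pi^D$ has both a direction error and its key estimate missing.
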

This means that for an arbitrary strategy in the model with dividends, there exists a strategy in the model without dividends leading to the same trading gains 
in the risky stock but not exceeding accumulated tax payments. The money invested in the stock is the same for both strategies.
If price processes do not vanish, one can recover $\varphi^D$ from $\varphi^0$ by investing the dividend payoffs in new stocks. 
This is  illustrated in Figure~\ref{ftx_dividends}. 

\begin{figure}[ht]
    \subfigure[Book profit functions \textcolor{red}{$x\mapsto F^D(t_1-,x)$} and \textcolor{blue}{$x\mapsto F^D(t_1,x)$} modeling book profits
    immediately before resp. after the predictable dividend 
    payoff~$\Delta D_{t_1}=1000$ associated with $\Delta S_{t_1}=-1000$. One has $F^D(t_1-,x)+\Delta S_{t_1}<0$ iff $x<55$. This means that 55 stocks are sold and immediately repurchased 
    (wash sale).]{\includegraphics[width=0.98\textwidth, height=120px]{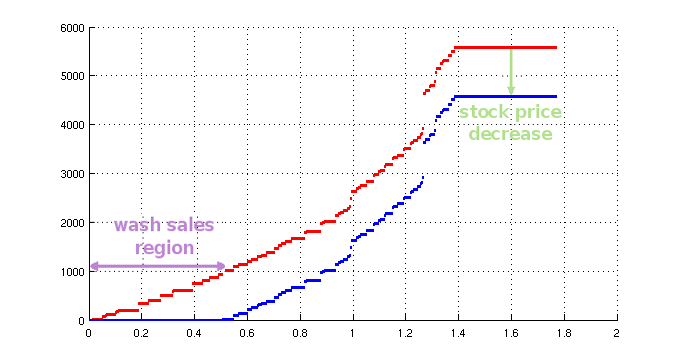}}\\
    \subfigure[Book profit function $x\mapsto F^D(t_1+,x)$  after portfolio regrouping. According to $\varphi^D$, the dividend payoff is invested in 20 new stocks which start 
    with zero book profits, and the function is shifted about 20 units to the right.]{\includegraphics[width=0.98\textwidth, height=120px]{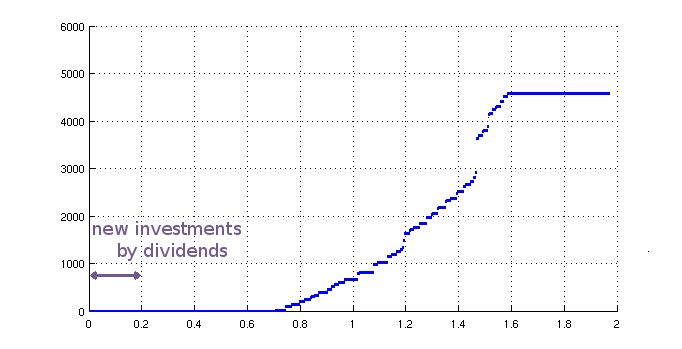}}
\caption{Reinvestment of dividends}\label{ftx_dividends}
\end{figure} 

\begin{proof}

{\em Step 1:} As $\varphi^0\le \varphi^D$, one obviously has $\varphi^0\in\bbl$. Because $S^D=S_-^D\mal R - D$ and $S^0=S^0_-\mal R$, one obtains
\beam\label{4}
\varphi^0\mal S^0 & = & \varphi^D\frac{S^D_-}{S_-^0}1_{\{S^0_->0\}}\mal (S^0_-\mal R) 
=\varphi^D S^D_- 1_{\{S^0_->0\}}\mal R = \varphi^D 1_{\{S^0_->0\}} \mal ( S^D_- \mal R)\nonumber\\
& = & \varphi^D\mal(S^D+D),
\eeam
where for the last equality we use that $\{S^0_-=0\}\subset \{S^D_-=0\}$ and the process  $S^D_- 1_{\{S^D_-=0\}} \mal R$ vanishes.

By construction of $\Pi$, Proposition \ref{prop1} holds for all strategies from $\mathbb{L}$, i.e.,
\begin{align*}
\alpha \int_0^\infty F^D(t,x)dx+\Pi_t^D=\alpha\varphi^D\mal(S^D+D)
\end{align*}
(and the same without dividends). Together with (\ref{4}), one obtains
\beam\label{9.9.2013.3}
\Pi_t^D-\Pi_t^0=\alpha\int_0^\infty F^0(t,x)dx-\alpha\int_0^\infty F^D(t,x)dx.
\eeam

{\em Step 2:} Let us show that for $\varphi^0_t>0$ (implying that $\varphi^D_t>0$ and $S^D_{t-}>0$)
\beam\label{9.9.2013.4}
\tau_{t,x}^D\geq \tau^0_{t,x\frac{\varphi_t^0}{\varphi_t^D}},\quad \forall x\in\bbr_+.
\eeam

First note that 
\begin{align}
M^0_{t,x\frac{\varphi_t^0}{\varphi_t^D}}=\emptyset\Leftrightarrow M^D_{t,x}=\emptyset\label{7}
\end{align}
(cf. Definition~\ref{bookprofits}). It is sufficient to consider $x$  s.t. both sets are not empty. One has
$$x-\varphi_t^D+\varphi_u^D>0\quad\forall u\in(\tau_{t,x}^D,t]\quad\mbox{and}\quad x-\varphi_t^D+\varphi_{u+}^D>0\quad\forall u\in(\tau_{t,x}^D,t).$$ 
We conclude
\begin{align*}
0<\frac{\varphi^0_t}{\varphi^D_t}\left(x-\varphi_t^D+\varphi_u^D\right)= \frac{\varphi^0_t}{\varphi^D_t}\left(x-\varphi_t^0\frac{S_{t-}^0}{S_{t-}^D}
+\varphi^0_u\frac{S_{u-}^0}{S_{u-}^D}\right)
\leq \frac{\varphi^0_t}{\varphi^D_t}x-\varphi_t^0 + \varphi^0_u\quad\forall u\in(\tau_{t,x}^D,t],
\end{align*}
where for the last inequality we use that $\frac{\varphi^D}{\varphi^0}=\frac{S^0_-}{S^D_-}$ is nondecreasing by Lemma~\ref{9.9.2013.5}. 
By $\frac{\varphi^D_+}{\varphi^0_+}=\frac{S^0}{S^D}$, one obtains analogously for $\varphi^0_{u+}$ that 
\begin{align*}
0<\frac{\varphi^0_t}{\varphi^D_t}\left(x-\varphi_t^D+\varphi_{u+}^D\right)
=\frac{\varphi^0_t}{\varphi^D_t}\left(x-\varphi_t^0\frac{S_{t-}^0}{S_{t-}^D}+\varphi_{u+}^0\frac{S_{u}^0}{S_{u}^D}\right)
\leq \frac{\varphi^0_t}{\varphi^D_t} x-\varphi_t^0+\varphi_{u+}^0\quad\forall u\in(\tau_{t,x}^D,t).
\end{align*}
As $M^0_{t,x\frac{\varphi_t^0}{\varphi_t^D}}\not=\emptyset$, it can be concluded that
$\tau^0_{t,x\frac{\varphi_t^0}{\varphi_t^D}}=\sup M^0_{t,x\frac{\varphi_t^0}{\varphi_t^D}} \le \tau^D_{t,x}$.

{\em Step 3:} For $\varphi^0_t>0$ (implying $S^0_{t-}>0$ and $\varphi^D_t>0$), we have that
\beam\label{aaa}
 F^D(t,x) & = & S^D_t-\inf_{\tau_{t,x}^D\leq u\leq t} S_u^D\nonumber\\
& \stackrel{\mbox{\small Lemma~\ref{9.9.2013.5}}}{\le} & \left( \frac{S_{t-}^D}{S_{t-}^0}S_{t}^0
-\inf_{\tau_{t,x}^D\leq u< t} \frac{S_u^D}{S_u^0}S_{u}^0\right)\vee 0\nonumber\\
& \stackrel{\mbox{\small Lemma~\ref{9.9.2013.5}}}{\le} & \frac{S_{t-}^D}{S_{t-}^0}\left(S_t^0
 -\inf_{\tau_{t,x}^D\leq u< t}S_u^0\right)\vee 0\nonumber\\
& \stackrel{\mbox{\small (\ref{9.9.2013.4}) }}{\le} & \frac{S_{t-}^D}{S_{t-}^0} \left(S_t^0
-\inf_{\tau_{t,{\varphi_t^0}/{\varphi_t^D}x}\leq u< t}S_u^0\right)\vee 0\nonumber\\
& = &  \frac{\varphi_t^0}{\varphi_t^D}F^0\left(t,\frac{\varphi_t^0}{\varphi_t^D}x\right).
\eeam
Observe that for the second inequality, we use that $S_{t-}^D/S_{t-}^0\le S_u^D/S_u^0$ for $u$ {\em strictly} smaller than $t$ (all considered prices are nonzero). 
For $\varphi^0_t>0$, it follows from (\ref{aaa}) that  
\beam\label{bbb}
\alpha\int_0^\infty F^0(t,x)dx-\alpha\int_0^\infty F^D(t,x)dx
\geq \alpha\int_0^\infty F^0(t,x)dx-\alpha\int_0^\infty \frac{\varphi_t^0}{\varphi_t^D}F^0(t,x\frac{\varphi_t^0}{\varphi_t^D})dx= 0.
\eeam
If $\varphi^0_t=0$, then either $\varphi^D_t=0$ or $S^D_{t-}=0$. Both equalities imply that $F^D(t,\cdot)=0$, and, consequently, the first difference in (\ref{bbb}) is nonnegative. 
Putting (\ref{9.9.2013.3}) and (\ref{bbb}) together yields the assertion.
\end{proof}

\begin{proof}[Proof of Theorem \ref{propV}]
Let $\varphi^D\in\bbl$. $\varphi^0$  is defined as in Lemma~\ref{9.9.2013.2} and $X^D,X^0$ are the unique positions in the bank account to meet the self-financing condition 
(cf. Remark~\ref{10.9.2013.1}).
Let us first examine the right limits $V^0_+$ and $V^D_+$. By the self-financing condition, one has
\begin{align*}
V_+^0=&v_0+(1-\alpha)X^0\mal B + \varphi^0\mal S^0 - \Pi_+^0\\
V_+^D=&v_0+(1-\alpha)X^D\mal B + \varphi^D\mal S^D + \varphi^D\mal D - \Pi_+^D.
\end{align*}
On the other hand, 
\beao
V^D_+ = X^D_+ + \varphi^D_+ S^D = X^D_+ + \varphi^0_+ S^0 = V^0_+ + X^D_+ - X^0_+
\eeao
Together with $\varphi^0\mal S = \varphi^D\mal (S_t^D+D)$, one arrives at
\beao
X^D_+ - X^0_+ = V^D_+ -V^0_+ = (1-\alpha)(X^D_+-X^0_+)\mal B - \Pi^D_+ + \Pi^0_+ \le (1-\alpha)(X^D-X^0)\mal B
\eeao
By Gronwall's lemma in the form of Lemma~2.1 in \cite{katzenberger.1991} applied to the nonnegative c\`adl\`ag process $(X^D_+ - X^0_+)\vee 0$ and the nondecreasing process~$B$ (here, 
one needs that $r\ge 0$), one obtains $X^D_+ \le X^0_+$ and thus 
\beam\label{11.9.2013.1}
V^D_+ \le V^0_+.
\eeam
Note that the lemma cannot be applied directly to $X^D$ and $X^0$ as these processes are not c\`adl\`ag. Thus, the right jumps of $V^D-V^0$ have to be analyzed. 
For $\varphi^D_t=0$, one also has $\varphi^0_t=0$, and the jump at time~$t$ vanishes. Otherwise, one argues that  

{\allowdisplaybreaks\begin{eqnarray}\label{11.9.2013.2}
\Delta^+ (V^D-V^0)_t	&=& \Delta^+ (\Pi^0 - \Pi^D)_t \nonumber\\
&\stackrel{(\ref{jumpPi+})}{=}   &\alpha\int_0^{\left(\Delta^+\varphi_t^0\right)^-}F^0(t,x)dx-\alpha\int_0^{\left(\Delta^+\varphi_t^D\right)^-}F^D(t,x)dx\nonumber\\
&\stackrel{(\ref{aaa})}{\geq}&\alpha\int_0^{\left(\Delta^+\varphi_t^0\right)^-}F^0(t,x)dx
-\alpha\int_0^{\left(\Delta^+\varphi_t^D\right)^-}\frac{\varphi_t^0}{\varphi_t^D}F^0\left(t,\frac{\varphi_t^0}{\varphi_t^D}x\right)dx\nonumber\\
&\stackrel{}{=}&\alpha\int_0^{\left(\Delta^+\varphi_t^0\right)^-}F^0(t,x)dx-\alpha\int_0^{\left(\frac{\varphi_t^0}{\varphi_t^D}\Delta^+\varphi_t^D\right)^-}F^0\left(t,x\right)dx\nonumber\\
&\stackrel{}{=}&\alpha\int_0^{\left(\Delta^+\varphi_t^0\right)^-}F^0(t,x)dx-\alpha\int_0^{\left(\frac{S_{t-}^D}{S_{t-}^0}\left(\varphi_{t+}^0\frac{S_{t}^0}{S_{t}^D}
-\varphi_t^0\frac{S_{t-}^0}{S_{t-}^D}\right)\right)^-}F^0\left(t,x\right)dx\nonumber\\
&\stackrel{}{\geq}&\alpha\int_0^{\left(\Delta^+\varphi_t^0\right)^-}F^0(t,x)dx-\alpha\int_0^{\left(\frac{S_{t-}^D}{S_{t-}^0}\left(\varphi_{t+}^0\frac{S_{t-}^0}{S_{t-}^D}
-\varphi_t^0\frac{S_{t-}^0}{S_{t-}^D}\right)\right)^-}F^0\left(t,x\right)dx\nonumber\\
&\stackrel{}{=}&0
\end{eqnarray}}
The last inequality uses that $S^0_t/S^D_t\ge S^0_{t-}/S^D_{t-}$ by Lemma~\ref{9.9.2013.5}. 
Putting (\ref{11.9.2013.1}) and (\ref{11.9.2013.2}) together, one obtains
$$V_t^D=V_{t+}^D-\Delta^+ V_{t}^D\leq V_{t+}^0-\Delta^+ V_{t}^D\leq V_{t+}^0-\Delta^+ V_{t}^0=V_t^0.$$ 
\end{proof}

\section{Tax-efficient strategies}\label{9.10.2014.1} 

Let $S\ge 0$ be a continuous semimartingale and $\varphi_t=g(S_t)$ for all $t>0$, where $g:\bbr_+\to\bbr_+$ is a {\em nondecreasing} and twice continuously differentiable function.  
This means that the ``initial'' position is $\vp_{0+}=g(S_0)$, and the investor increases (reduces) her position after an increase (decrease) of the stock price. 
Denote by $g^{-1}$ the right-continuous inverse of $g$, i.e.,
\beao
g^{-1}(y) := \sup\{ s\ |\ g(s)\le y\}.
\eeao

Let us show that the book profit function reads
\beam\label{22.12.2013.1}
F(t,x) := S_t - \inf_{\tau_{t,x}\le u\le t} S_u =\left\{
\begin{array}{cl}
S_t - g^{-1}(\varphi_t-x),  & x\le \varphi_t - \inf_{0< u\le t}\varphi_u\\
S_t - \inf_{0\le u\le t}S_u, & x> \varphi_t - \inf_{0< u\le t}\varphi_u
\end{array}
\right.
\quad\mbox{for all}\ t>0,
 \eeam
which means that the infinite-dimensional stochastic process~$F$ is a direct function of the two-dimensional stochastic process~$(S_t,\inf_{0\le u\le t}S_u)_{t\ge 0}$.
Note that $\inf_{0< u\le t}\varphi_u = g(\inf_{0\le u\le t}S_u)$.

To prove (\ref{22.12.2013.1}), first consider the case that $x\le \varphi_t - \inf_{0< u\le t}\varphi_u$. By definition of $\tau_{t,x}$, one has that
$g(S_u)=\varphi_u > \varphi_t-x$ for all $u\in(\tau_{t,x},t]$. Together with the monotonicity and the continuity of $g$,
this implies that $S_u> g^{-1}(\varphi_t-x)$.
On the other hand, we have that $\varphi_{\tau_{t,x}+}=\varphi_t-x$ and thus, by $g(S_{\tau_{t,x}})=\varphi_{\tau_{t,x}+}$,
$S_{\tau_{t,x}}\le \sup\{ s\ |\ g(s)\le \varphi_{\tau_{t,x}+}\}=g^{-1}(\varphi_t-x)$ (the right limit is only needed for the case that $\tau_{t,x}=0$, which is possible if
$x= \varphi_t - \inf_{0< u\le t}\varphi_u$). By continuity of the paths of $S$,
we conclude that $\inf_{\tau_{t,x}\le u\le t}S_u = g^{-1}(\varphi_t-x)$.

This means, the purchasing price of the stock with label~$x$ is
$S_{\tau_{t,x}}=g^{-1}(\varphi_t-x)$, and up to time~$t$, the price does not fall below it. 
Now, let $x>\varphi_t - \inf_{0< u\le t}\varphi_u$. One has $\tau_{t,x}=0$ which yields the assertion.

If $g'<0$, one still has that $S_{\tau_{t,x}}=g^{-1}(\vp_t-x)$ (of course, with $g^{-1}$ defined appropriately), but now, 
$\inf_{\tau_{t,x}\le u\le t} S_u = g^{-1}(\sup_{\tau_{t,x}< u\le t}\vp_u)$, and the infimum can be attained anywhere between $\tau_{t,x}$ and $t$, 
which implies that $F(t,\cdot)$ cannot be a direct function of
$(S_t,\inf_{0\le u\le t}S_u)$.

>From (\ref{22.12.2013.1}), it follows that
\beam\label{22.12.2013.2}
\int_0^{\varphi_t} F(t,x)\,dx & = & (\varphi_t  -\inf_{0< u\le t}\varphi_u)S_t -\int_0^{\varphi_t  - \inf_{0< u\le t}\varphi_u} g^{-1}(\varphi_t-x)\,dx
+\inf_{0< u\le t}\varphi_u(S_t - \inf_{0\le u\le t}S_u)\nonumber\\
& = & \varphi_t S_t -\inf_{0< u\le t}\varphi_u\inf_{0\le u\le t}S_u  - \int_{\inf_{0< u\le t}\varphi_u}^{\varphi_t}g^{-1}(x)\,dx.
\eeam
Using that $g'=0$ on $(S_u,g^{-1}(\varphi_u))$, integration by parts yields
\beam\label{22.12.2013.3}
\int_{\inf_{0< u\le t}\varphi_u}^{\varphi_t}g^{-1}(x)\,dx & = & \int_{g^{-1}(\inf_{0< u\le t}\varphi_u)}^{g^{-1}(\varphi_t)}yg'(y)\,dy\nonumber\\
& = & \int_{\inf_{0\le u\le t}S_u}^{S_t}yg'(y)\,dy\nonumber\\
& = & y g(y) \Big|_{\inf_{0\le u\le t}S_u}^{S_t} - \int_{\inf_{0\le u\le t}S_u}^{S_t}g(y)\,dy.
\eeam

Let $G$ be an antiderivative of $g$, i.e., $G'=g$.  Putting (\ref{22.12.2013.2}) and (\ref{22.12.2013.3}) together, we arrive at
\beao
\int_0^{\varphi_t} F(t,x)\,dx = G(S_t)-G\left(\inf_{0\le u\le t} S_u\right).
\eeao

For the trading gains, one has by It\^o's formula
\beam\label{10.10.2014.1}
g(S)\mal S_t = G(S_t) - G(S_0) - \frac12 g'(S)\mal [S,S]_t = G(S_t) - G(S_0) - \frac12 [g(S),S]_t,
\eeam
which yields
\beao
\Pi_t =  \alpha \int_0^t \varphi\,dS - \alpha \int_0^{\varphi_t} F(t,x)\,dx = \alpha \left( \underbrace{G\left(\inf_{0\le u\le t}S_u\right)}_{\mbox{\small nonincreasing in $t$}} - G(S_0)
- \frac12 \underbrace{[\varphi,S]_t}_{\mbox{\small nondecreasing in\ }t}\right).
\eeao

\begin{Bemerkung}
First note that all tax payments are nonpositive (of course, only up to the liquidation of the portfolio). This is because trading gains are never realized if $g'\ge 0$.
There are two components: payments triggered by wash sales when the stock price 
reaches its running infimum~$\inf_{0\le u\le t}S_t$, and there are all the time the taxes~$-0.5\alpha [\vp,S] = -0.5\alpha g'(S)\mal [S,S]$ triggered by loss realizations 
from ``recently'' purchased stocks.  

To explain this phenomenon, consider an approximating sequence of Cox-Ross-Rubinstein type models with finite price grids~$\{0,\sigma/\sqrt{n},2\sigma/\sqrt{n},\ldots\}$, $n\in\bbn$, and 
$S^n_{(k+1)/n} -S^n_{k/n} = \pm \sigma/\sqrt{n}$ each with probability~$1/2$. First, we look at the case that at time $k/n$ the stock price lies strictly above its minimum up to this time. 
Then, the investor holds exactly $g(S^n_{k/n})-g(S^n_{k/n} -\sigma/\sqrt{n})$ shares with book profit zero. 
Namely, these shares were purchased after the last time $\le k/n$ at which $S^n$ jumps from $S^n_{k/n}-\sigma/\sqrt{n}$ to $S^n_{k/n}$. 
All other shares which are in the portfolio at time $k/n$ were purchased earlier and have a higher book profit that cannot fall strictly below zero in the next period.
Therefore, the tax payment at time $(k+1)/n$ is given by
\beao
-\alpha\left(g(S^n_{k/n})-g\left(S^n_{k/n}-\frac{\sigma}{\sqrt{n}}\right)\right)(S_{(k+1)/n} - S_{k/n})^- \approx -\alpha \frac{g'(S^n_{k/n})\sigma^2}{n} 1_{\{S_{(k+1)/n} - S_{k/n}<0\}},
\eeao
i.e., if the price goes up, there are no tax payments, and if it goes down the shares that have zero book profit before are sold.
For $n\to \infty$, by the law of large numbers, half of the price movements go down, and one arrives at the accumulated tax payments $-0.5 \alpha \int g'(S_t)\sigma^2\,dt$
(note  that in the limit the fraction of periods at which the stock price attains its running minimum vanishes). 
Then, the general case with nonconstant $d[S,S]_t/dt$ follows by stochastic time changes applied to the approximating price processes. 
If $S^n_{k/n}=\min_{l\le k}S^n_{l/n}$, all shares have book profit zero and after a further
decrease they are wash-sold, which leads to the tax payment  $\alpha g(\min_{l\le k}S^n_{l/n})(\min_{l\le k+1}S^n_{l/n}-\min_{l\le k}S^n_{l/n})$. 
In the limit, the accumulated tax payments when the stock price coincides with its running minimum become $\alpha\left(G(\inf_{0\le u\le \cdot}S_u) - G(S_0)\right)$, where $G'=g$.
\end{Bemerkung}

In general, when building up a portfolio, an investor can generate negative tax payments, or at least off-set positive tax payments on dividends, by purchasing many new stocks and 
sell whose stocks which go down. This is accompanied with higher book profits of the shares that go up. 
Thus, as time goes by, it gets increasingly more difficult to avoid tax payments.

\section{Counterexamples}\label{2.8.2013.1}

In this section, we give examples that illustrate the problems with the construction of the tax payment process and show the necessity of some assumptions.


\begin{thmB}\label{non_semimartingale}
If the stock price process is not a semimartingale, different sequences of up-approximating elementary strategies of a left-continuous strategy~$\varphi$ can lead to different 
limits of the  actual tax payments~$\Pi^n$. Namely, if $S$ is not a semimartingale, there exists a sequence of nonnegative elementary strategies~$(\varphi^n)_{n\in\mathbb{N}}$ s.t.
\begin{eqnarray*}
||\varphi^n||_{\infty}\to 0,\quad E( 1 \wedge \sup_{t\in[0,T]} (\varphi^n\mal S_t)^- )\to 0,\quad n\to \infty, 
\end{eqnarray*}
but
\begin{eqnarray*}
E( 1 \wedge \sup_{t\in[0,T]} (\varphi^n\mal S_t)^+ )\not\to 0,\quad n\to \infty, 
\end{eqnarray*}
see Theorem~1.7 of \cite{bsv2011} (shifting the strategies by the constants $||\varphi^n||_\infty$ shows that they can be chosen nonnegative). 
By $||\varphi^n||_{\infty}\to 0$, the book profits vanish, i.e., $\int_0^\infty F^n(\cdot,x)\,dx \to 0$ uniformly in probability, but the trading gains do not tend to zero.  
Thus, by Proposition~\ref{prop1}, $(\Pi^n)_{n\in\bbn}$ does not tend to zero. On the other hand, the elementary strategy $\varphi=0$, the uniform limit of $(\vp^n)_{n\in\bbn}$, 
leads to zero tax payments. 
\end{thmB}

\begin{Bemerkung}
Tax payments are not continuous w.r.t. pointwise convergence of elementary strategies. Indeed, let $\varphi^n=1_{(0, 1/2]\cup (1/2+1/n,1]}$. $\varphi^n$ converges pointwise 
to $\varphi=1_{(0,1]}$ and $\varphi^n\mal S\to \varphi\mal S$ uniformly in probability. But, in contrast to $\varphi$, the strategy $\varphi^n$ realizes 
current book profits at time $1/2$. Thus, it is not possible to define the tax payment process 
as unique continuous extension w.r.t. pointwise convergence to the space of all predictable
locally bounded strategies as it is done for the stochastic integral, cf. Theorem~I.4.31 in \cite{jacod1}. 
It seems that the convergence ``uniformly in probability'' for trading strategies is taylor-made for modeling capital gains taxes. 
The strategy set $\bbl$ is still rich enough to cover almost all relevant strategies in applications. 
\end{Bemerkung}

\section{Conclusion}

The first purpose of this paper is to find a suitable set of continuous time trading strategies (specifying the number of identical shares that an investor holds in her portfolio) 
for which the payment flow of a linear tax on realized trading gains can be constructed. 
It turns out that this is the set of all adapted processes with left-continuous paths possessing finite right limits, i.e., the closure of elementary predictable processes 
w.r.t. the convergence ``uniformly in probability''. Then, the extension to trading strategies in different stocks is straightforward.
>From a theoretical point of view, it is appealing that tax payments can also be defined for strategies of infinite variation.
This is not obvious at all because a reduction of the stock position leads to tax payments whereas an increase has no immediate effect. This property may suggest that 
a construction of the tax payment flow must be based on a decomposition into an increasing and a decreasing part of the investment strategy.
 
In the discrete time model of \citet{dyb1}, we prove that it is optimal to realize 
trading losses immediately and, when the total number of stocks has to be reduced, to sell shares with lower book profits / later purchasing times first. 
Based on this result, for elementary strategies in a continuous time model, we introduce an automatic loss realization when shares fall below their (individual) 
purchasing prices as well as a rule that dictates to sell shares with later purchasing time first when the stock position has to be reduced. 
Following this procedure, the tax payment flow is already determined by the stochastic process modeling the total number of shares in the portfolio. 
For the extension to nonelementary strategies, the representation of the book profits of the shares in the portfolio plays a key role 
(although all shares have the same price, 
their book profits differ because of different purchasing times).

Secondly, we prove that under the condition that the dividend policy has a neutral effect on the stochastic return process, 
for every investment strategy in a firm with dividends, there exists a strategy investing in an ``identical'' firm without dividends
that leads to an almost surely higher or equal after-tax wealth.

Finally, we find out tax-efficient dynamic strategies. These try to defer tax payments as long as possible. Because profit-taking leads to early tax payments, a tax-efficient strategy 
reduces the position only after losses, i.e., there should be a positive dependence between the number of stocks in the portfolio and the stock price. 
If the position is a direct {\em nondecreasing} function of the stock price, the tax payment flow can be determined explicitly and is given, besides a local time component, by 
the tax rate times half the quadratic covariation of the strategy and the price process. 


In the paper, we consider the so-called {\em exact tax basis} 
which is economically the most reasonable one. 
For other tax bases, as the FIFO (``first-in-first-out'') or the average of the purchasing prices,
the main phenomena are similar, as, e.g., the suboptimality of dividends. But, the 
modeling is quite different. Especially, it is an open problem how to construct tax payment flows 
beyond strategies of finite variation. 


\appendix

\section{Appendix: The discrete time model of Dybvig/Koo}\label{25.9.2014.1}

In this section, we motivate the automatic loss realization as well as the rule to sell shares with lower book profits / shorter residence times first
(based on this procedure, the tax payment flow was introduced for continuous time portfolio rebalancings in Section~\ref{main_results}).
For this, we prove that in the discrete time model of Dybvig and Koo~\cite{dyb1}, 
this procedure leads for all paths to a higher or equal after-tax wealth than any other strategy 
(with the same total number of shares in the portfolio) if the riskless interest rate is nonnegative. Namely, the procedure minimizes the accumulated tax payments up to any time~$t$
(see Theorem~\ref{theorem:main111}). A similar assertion is already stated in \cite{dyb1} (see Properties~1 and 2 on page~6), but in less formal terms and, so far, a proof is only 
available for Property~1 in special cases (see Subsection~3.1 of Constantinides~\cite{constantinides.1983}).
The idea is that investors always prefer tax payment obligations in the future to tax payments today. 

Following the notation in \cite{dyb1}, $N_{s,t}$ denotes the number of stocks that are bought at time $s\in\{0,\ldots,T\}$, $T\in\bbn$, and kept in the portfolio at least after trading 
at time $t\in\{s,\ldots,T\}$. Especially, $N_{t,t}$ is the number of shares purchased at time~$t$, i.e., a position cannot be purchased and resold at the same time (on the 
other hand, a position can be sold and rebought at the same time). One has the constraint
\beam\label{18.9.2014.3}
 N_{t,t}\geq N_{t,t+1}\geq\ldots\geq N_{t,T}\geq0,~\mbox{for all}~t\in\{0,\ldots,T\},
\eeam
which contains a short-selling restriction. Following the standard notation in discrete time, we denote by
\beam\label{18.9.2014.2}
\varphi_{t+1}= \sum_{s=0}^t N_{s,t},\quad t=0,\ldots,T
\eeam
the number of stocks in the portfolio {\em after} trading at time~$t$. 
Accumulated tax payments up to time~$u$ are given by 
\beam\label{18.9.2014.1}
\Pi_u:= \alpha \sum_{t=1}^u \sum_{s=0}^{t-1}\left( N_{s,t-1}-N_{s,t}\right)\left(S_t-S_s\right),
\eeam
where $\sum_{t=u+1}^{u}\ldots =0$ throughout the section.
With $\Pi$ from (\ref{18.9.2014.1}), the self-financing condition is defined as in (\ref{4.8.2013.2}).

Of course, there are different strategies~$N=(N_{s,t})_{s=0,1,\ldots,T,\ t=s,s+1,\ldots,T}$ that lead to the same number~$\varphi$ of risky assets. 
Given some nonnegative process $\vp$, the rule of selling shares on which our model in Section~\ref{main_results} is based corresponds to the following strategy~$\wt{N}$, 
constructed by (forward) induction in $t$: $\wt{N}_{0,0}=\varphi_1$ and, given $\wt{N}_{s,t-1}$, $s=0,1,\ldots,t-1$, $\wt{N}_{s,t}$ is defined as 
\begin{align}
\wt{N}_{s,t}=&1_{\left\{S_t\geq S_s\right\}}\left(\wt{N}_{s,t-1}-\left(\left(\Delta\varphi_{t+1}\right)^-
-\sum_{j=s+1}^{t-1}\wt{N}_{j,{t-1}}\right)^+\right)^+,\ s\in\{0,\ldots,t-1\},\label{WS_strategy}\\
\wt{N}_{t,t}=& \Delta \vp_{t+1} + \sum_{s=0}^{t-1}(\wt{N}_{s,{t-1}}-\wt{N}_{s,t}),\label{WS_strategy_self}
\end{align}
where $\vp_{t+1}=\Delta \vp_{t+1}-\vp_t$.
Following (\ref{WS_strategy}), the investor first reduces her total position by $(\Delta \vp_{t+1})^-$, thereby selling the shares with the smallest residence time~$t-s$.
Then, remaining shares with negative book profits are sold. By (\ref{WS_strategy_self}), condition (\ref{18.9.2014.2}) is satisfied, and, by omitting the indicator functions in 
(\ref{WS_strategy}), one sees that $\wt{N}_{t,t}\ge 0$. Now, we can already formulate the main assertion of this section. In Subsection~\ref{3.10.2014.1}, 
the precise relation to the model introduced in Section~\ref{main_results} is established. 
\begin{Satz}\label{theorem:main111}
Let $(\varphi_t)_{t\in\{1,\ldots,T+1\}}\ge 0$ be a given position in the risky asset. Let $\wt{N}$ be the strategy defined in (\ref{WS_strategy})/(\ref{WS_strategy_self})
and $N$ be an arbitrary strategy satisfying (\ref{18.9.2014.3})/(\ref{18.9.2014.2}). Then, for the corresponding accumutated tax payments, one has that 

\begin{align}\label{TP_majorant}
 \wt{\Pi}_t \leq \Pi_t\quad\mbox{for all}~t\in\{0,\ldots,T\}.
\end{align}
\end{Satz}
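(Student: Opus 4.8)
The plan is to reduce the inequality $\wt{\Pi}_t\le\Pi_t$ between the two \emph{tax processes} to an inequality between the \emph{cost bases} of the shares held at each date, and to prove the latter by induction on $t$. For a strategy $N$ and a date $t\in\{0,\dots,T\}$ the $\varphi_{t+1}$ shares held after trading at time $t$ split into $N_{s,t}$ shares of vintage $s$, each carrying the basis $S_s$; write $b^t(N)=(b^t_{(1)}\le\dots\le b^t_{(\varphi_{t+1})})$ for the nondecreasing list of these bases, so that $\sum_i b^t_{(i)}(N)=\sum_{s=0}^t N_{s,t}S_s$. First I would record the conservation identity
\beqq\label{plan_identity}
\Pi_t(N)-\alpha\sum_{s=0}^t N_{s,t}S_s\ =\ \alpha\sum_{u=1}^t S_u(\varphi_u-\varphi_{u+1})-\alpha S_0\varphi_1,
\eeqq
obtained by a direct telescoping in (\ref{18.9.2014.1}): its right-hand side depends only on $\varphi$ and $S$, not on $N$. (Economically: liquidating the portfolio just after time $t$ produces the same total tax for every strategy with the given $\varphi$, since wash sales net out and the net purchases and sales are pinned down by $\varphi$.) Hence $\wt{\Pi}_t\le\Pi_t$ is \emph{equivalent} to $\sum_i b^t_{(i)}(\wt{N})\le\sum_i b^t_{(i)}(N)$, so it suffices to compare the bases.

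The core of the argument is then an induction on $t$ proving the stronger, pointwise statement $b^t_{(i)}(\wt{N})\le b^t_{(i)}(N)$ for all $i$; at $t=0$ both lists equal $(S_0,\dots,S_0)$. For the step, fix $t\ge1$ and assume $b^{t-1}_{(i)}(\wt{N})\le b^{t-1}_{(i)}(N)$ for all $i$. Define, for a nondecreasing list $\beta$ of length $\varphi_t$ and with $p:=\varphi_t\wedge\varphi_{t+1}$,
\beao
M(\beta)\ :=\ \Big(\beta_{(1)}\wedge S_t,\ \dots,\ \beta_{(p)}\wedge S_t,\ \underbrace{S_t,\dots,S_t}_{(\varphi_{t+1}-\varphi_t)^+\ \text{entries}}\Big),
\eeao
which has length $\varphi_{t+1}$ and is automatically nondecreasing. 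Then I would check three facts. (a) Under $\wt{N}$ a held vintage $s$ satisfies $S_s\le S_u$ for all $u\in[s,t]$ (else it would have been wash-sold), so older vintages carry smaller bases; consequently $\wt{N}$'s rule ``sell the youngest vintages first, then wash-sell the under-water survivors'' turns $b^{t-1}(\wt{N})$ into exactly $M\big(b^{t-1}(\wt{N})\big)$. (b) An arbitrary admissible one-period move of $N$ keeps some $j$-element sub-multiset of $b^{t-1}(N)$ with $0\le j\le p$ and adjoins $\varphi_{t+1}-j$ copies of $S_t$; a short level-counting estimate — for every $c$, the count of resulting bases $\le c$ is largest when one keeps the smallest bases lying at or below $S_t$ — shows that the sorted list of $N$'s new bases pointwise dominates $M\big(b^{t-1}(N)\big)$. (c) $\beta\mapsto M(\beta)$ is monotone for the pointwise order because $x\mapsto x\wedge S_t$ is nondecreasing. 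Combining (a), (c) and the induction hypothesis with (b),
\beao
b^t(\wt{N})\ =\ M\big(b^{t-1}(\wt{N})\big)\ \le\ M\big(b^{t-1}(N)\big)\ \le\ b^t(N)\qquad\text{pointwise},
\eeao
which closes the induction. Summing over $i$ and using (\ref{plan_identity}) gives $\wt{\Pi}_t-\Pi_t=\alpha\big(\sum_i b^t_{(i)}(\wt{N})-\sum_i b^t_{(i)}(N)\big)\le0$ for every $t$.

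The arguments around (\ref{plan_identity}) are routine; the genuine work — and the main obstacle — is the inductive step. Two points need care: translating the recursion (\ref{WS_strategy})/(\ref{WS_strategy_self}) for $\wt{N}$ into the clean multiset update $M(\cdot)$, which hinges on verifying that under $\wt{N}$ the order of purchase times coincides with the order of bases (so that ``latest purchased sold first'' really realises the smallest gains); and the exchange/level-counting argument in (b) that $M\big(b^{t-1}(N)\big)$ is the pointwise-smallest state $N$ can reach in one period. Bookkeeping of wash-sale rebuys — which under both $N$ and $\wt{N}$ create fresh vintage-$t$ shares of basis $S_t$ and hence enter $M(\cdot)$ just like ordinary purchases — is where the argument is most error-prone. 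Note also that, as (\ref{plan_identity}) makes transparent, the conclusion (\ref{TP_majorant}) itself does not invoke the interest rate; that hypothesis only enters when this theorem is turned into a wealth comparison in Subsection~\ref{3.10.2014.1}.
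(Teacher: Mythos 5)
Your plan is correct and is essentially the paper's own proof in different bookkeeping: your conservation identity is the integrated form of Lemma~\ref{lem:rewrite}, your step (a) is Lemma~\ref{27.9.2014.2} (the exact one-period update of the holdings under $\wt{N}$), your level-counting step (b) is precisely inequality (\ref{27.9.2014.1}) for an arbitrary admissible $N$, and (c) combined with your induction is the paper's concluding induction. The only real difference is notational: you compare sorted multisets of cost bases, while the paper compares the equivalent quantile/book-profit function $F(t,x)=S_t-(\mbox{cost basis})$, which also covers non-integer share quantities (your counts then become Lebesgue measures).
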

 
>From Theorem~\ref{theorem:main111}, it follows, as in Section~\ref{section_comparison}, that the wealth process of $\wt{N}$ dominates the wealth process of $N$ 
if the riskless interest rate 
is nonnegative. Namely, for both strategies, trading gains before taxes are given by $\vp\mal S_T:=\sum_{u=1}^T \vp_u(S_u-S_{u-1})$, 
but $\wt{N}$ defers tax payments to a larger extent.\\

Throughout the section, for $t$ and $\omega$ fixed, $(k_0,k_1,\ldots,k_t)$ is a permutation of $(0,1,\ldots,t)$ s.t.
\beam\label{24.9.2014.2}
S_{k_0}\ge S_{k_1} \ge \ldots \ge S_{k_t}\quad\mbox{and}\quad S_{k_i}>S_t\quad\forall i<j,\quad\mbox{where\ }k_j=t. 
\eeam
Then, for an arbitrary strategy~$N$, the book profit function is defined as
\begin{align}\label{def:bpf_general}
 F(t,x):=\sum_{i=0}^t (S_t-S_{k_i})1_{\left(\sum_{l=0}^{i-1} N_{k_l,t},\sum_{l=0}^{i} N_{k_l,t}\right]}(x).
\end{align}
On $(0,\vp_{t+1}]$, $F(t,\cdot)$ is obviously nondecreasing. Note that $F(t,\cdot)$ from (\ref{def:bpf_general}) already contains the portfolio regroupings that take place at price~$S_t$,
i.e., it consists of $\vp_{t+1}$ shares (see Subsection~\ref{3.10.2014.1} for the relation to the book profit function from Section~\ref{main_results}). 
To prove Theorem~\ref{theorem:main111}, we need the following lemmas.
\begin{Lemma}\label{lem:rewrite}
For every strategy $N$ with corresponding number of stocks~$\varphi$ and book profit function~$F$, one has
\begin{align*}
 \Pi_t=\alpha\varphi\mal S_t-\alpha\int_0^{\vp_{t+1}} F(t,x)dx,\quad t=0,\ldots,T
\end{align*}
(cf. Proposition~\ref{prop1}).
\end{Lemma}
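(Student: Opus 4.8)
The plan is to turn the claimed identity into a purely combinatorial statement about finite sums. There are two ingredients: a closed form for $\int_0^{\varphi_{t+1}} F(t,x)\,dx$, and a summation-by-parts rearrangement of the double sum (\ref{18.9.2014.1}) defining $\Pi_t$. Everything is elementary; the one thing that has to be watched throughout is the bookkeeping of summation ranges (in particular that $N_{s,r}$ only makes sense for $r\ge s$, and that $\sum_{s=0}^r N_{s,r}=\varphi_{r+1}$, not $\varphi_r$).

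For the integral: the intervals appearing in (\ref{def:bpf_general}) are pairwise disjoint, the $i$-th one has length $N_{k_i,t}$, and together they exhaust $(0,\varphi_{t+1}]$ because $\sum_{l=0}^t N_{k_l,t}=\sum_{s=0}^t N_{s,t}=\varphi_{t+1}$ by (\ref{18.9.2014.2}). Hence
\[ \int_0^{\varphi_{t+1}} F(t,x)\,dx \;=\; \sum_{i=0}^t (S_t-S_{k_i})\,N_{k_i,t}\;=\;\sum_{s=0}^{t} (S_t-S_s)\,N_{s,t}\;=\;\sum_{s=0}^{t-1}(S_t-S_s)\,N_{s,t}, \]
the second equality being just a relabeling along the permutation from (\ref{24.9.2014.2}) (whose precise form is irrelevant here, only its being a permutation matters), and the last one using that the $s=t$ summand vanishes.

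For $\Pi_t$: writing (\ref{18.9.2014.1}) as $\Pi_t/\alpha=\sum_{s=0}^{t-1}\sum_{r=s+1}^{t}(N_{s,r-1}-N_{s,r})(S_r-S_s)$ and applying Abel summation in $r$ for each fixed $s$ — using $(S_{r+1}-S_s)-(S_r-S_s)=S_{r+1}-S_r$, and $S_s-S_s=0$ to fold the boundary term at $r=s$ into the sum — gives
\[ \frac{\Pi_t}{\alpha}\;=\;\sum_{s=0}^{t-1}\Bigl(\,\sum_{r=s}^{t-1} N_{s,r}\,(S_{r+1}-S_r)\;-\;N_{s,t}\,(S_t-S_s)\Bigr). \]
Interchanging the order of summation in the first term yields $\sum_{r=0}^{t-1}(S_{r+1}-S_r)\sum_{s=0}^{r}N_{s,r}=\sum_{r=0}^{t-1}(S_{r+1}-S_r)\varphi_{r+1}=\varphi\mal S_t$ by (\ref{18.9.2014.2}), while the second term is exactly the integral computed in the previous paragraph. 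Combining the two displays gives $\Pi_t=\alpha\,\varphi\mal S_t-\alpha\int_0^{\varphi_{t+1}}F(t,x)\,dx$, which is the assertion and is the discrete-time counterpart of Proposition~\ref{prop1}. Alternatively one can argue by induction on $t$: the base case $t=0$ is trivial since both sides vanish, and the inductive step reduces, after the cancellations described above, to $\sum_{s=0}^{t-1}N_{s,t-1}(S_t-S_{t-1})=\varphi_t(S_t-S_{t-1})$, which is again (\ref{18.9.2014.2}). The main — and essentially only — obstacle is keeping the index ranges straight; there is no analytic difficulty.
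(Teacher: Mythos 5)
Your proposal is correct. Your primary route (compute $\int_0^{\varphi_{t+1}}F(t,x)\,dx=\sum_{s=0}^{t-1}N_{s,t}(S_t-S_s)$, then Abel-sum the double sum for $\Pi_t$ in $r$ and interchange the order of summation to produce $\varphi\mal S_t$) is a global, one-shot version of what the paper does locally: the paper simply verifies that the increments from $t-1$ to $t$ of both sides agree, i.e.\ $\alpha\varphi_t(S_t-S_{t-1})-\alpha\bigl(\int_0^{\varphi_{t+1}}F(t,x)dx-\int_0^{\varphi_t}F(t-1,x)dx\bigr)=\alpha\sum_{s=0}^{t-1}(N_{s,t-1}-N_{s,t})(S_t-S_s)=\Pi_t-\Pi_{t-1}$, which telescopes to the claim; this is exactly the induction alternative you sketch at the end. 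The two arguments are the same elementary summation-by-parts computation organized differently — yours gives the closed-form rearrangement in one pass, the paper's increment form is shorter to write and is what gets reused when comparing strategies step by step — and your index bookkeeping ($\sum_{s=0}^{r}N_{s,r}=\varphi_{r+1}$, the vanishing $s=t$ term, $\varphi\mal S_t=\sum_{u=1}^t\varphi_u(S_u-S_{u-1})$) checks out.
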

\begin{proof}
We have
\begin{align*}
 &\alpha\varphi_{t} (S_t-S_{t-1}) -\alpha\left(\int_0^{\varphi_{t+1}} F(t,x)dx-\int_0^{\varphi_{t}} F(t-1,x)dx\right)\\
=&\alpha\left(\sum_{i=0}^{t-1}N_{i,t-1}(S_t-S_{t-1})-\sum_{i=0}^{t-1} \left(N_{i,t}(S_t-S_i)-N_{i,t-1}(S_{t-1}-S_i)\right)\right)\\
=&\alpha \sum_{i=0}^{t-1} \left(N_{i,t-1}-N_{i,t}\right)(S_t-S_i)\\
=& \Pi_t - \Pi_{t-1}.
\end{align*}
\end{proof}
\begin{Lemma}\label{27.9.2014.2}
Let $\wt{F}$ be the book profit function of the strategy~$\wt{N}$ from (\ref{WS_strategy}). Then, one has
\beam\label{3.10.2014.2}
\wt{F}(t,x)=1_{\left((\Delta \vp_{t+1})^+, \vp_{t+1}\right]}(x)\left(\wt{F}(t-1,x-\Delta\varphi_{t+1})+S_t-S_{t-1}\right)\vee 0.
\eeam
\end{Lemma}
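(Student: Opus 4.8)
The plan is to reduce (\ref{3.10.2014.2}) to a one-step transformation rule for the \emph{multiset of book profits} carried by the portfolio, and then read off the stated identity for the two quantile functions $\wt{F}(t-1,\cdot)$ and $\wt{F}(t,\cdot)$. Fix $\omega$ and $t\in\{1,\dots,T\}$, and write $\Delta:=S_t-S_{t-1}$ and $d:=\Delta\vp_{t+1}=\vp_{t+1}-\vp_t$. First I record two monotonicity facts that follow directly from the factor $1_{\{S_u\ge S_s\}}$ in (\ref{WS_strategy}) together with the fact that $u\mapsto\wt{N}_{s,u}$ is nonincreasing: for every $u$, $\wt{N}_{s,u}>0$ forces $S_s\le S_u$; and for $s_1<s_2\le u$ with $\wt{N}_{s_1,u}>0$ and $\wt{N}_{s_2,u}>0$ one has $S_{s_1}\le S_{s_2}$. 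Consequently, in (\ref{def:bpf_general}) applied to $\wt{N}$, listing the surviving purchase times by ascending book profit $S_u-S_s$ is the same as listing them by descending $s$ (equal prices affect $\wt{F}(u,\cdot)$ only on its jump set, which is harmless once everything is phrased through the nondecreasing left-continuous quantile function). Thus, by (\ref{def:bpf_general}), $\wt{F}(u,\cdot)$ on $(0,\vp_{u+1}]$ is precisely the function listing the book profits $S_u-S_s$, each with multiplicity $\wt{N}_{s,u}$, in nondecreasing order; equivalently, the left-continuous quantile function of the book-profit measure $\nu_u:=\sum_{s=0}^{u}\wt{N}_{s,u}\,\delta_{S_u-S_s}$, all of whose atoms lie in $[0,\infty)$.

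The core step is the claim that $\nu_t$ arises from $\nu_{t-1}$ by: (a) translating every atom by $\Delta$; (b) applying $b\mapsto b\vee0$; and (c) if $d\ge0$, adjoining the atom $d\,\delta_0$, while if $d<0$, deleting the lowest $(-d)$ units of mass. For (a)--(b), split the time-$t$ index set into the ``wash'' part $\{s<t:\ S_t<S_s\}$, where $\wt{N}_{s,t}=0$, and the ``surviving'' part $\{s<t:\ S_t\ge S_s\}$, where $S_t-S_s=(S_{t-1}-S_s)+\Delta\ge0$ so that $b\vee0$ acts trivially; the wash shares, whose translated book profits are negative, pile up at $0$, and a short count using $\wt{N}_{t,t}=d+\sum_{s<t}(\wt{N}_{s,t-1}-\wt{N}_{s,t})$ and $\sum_{s\le t}\wt{N}_{s,t}=\vp_{t+1}=\vp_t+d$ shows that, after the wash shares are merged with the $(\Delta\vp_{t+1})^+$ freshly purchased shares (also at book profit $0$), the mass sitting at $0$ is exactly what (b)--(c) predict. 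For the deletion in case (c), the key observation is that the amount subtracted in (\ref{WS_strategy}) at purchase time $s$ is the function $b\mapsto((\Delta\vp_{t+1})^- - b)^+$ of the cumulative count $b=\sum_{j=s+1}^{t-1}\wt{N}_{j,t-1}$, which by the order coincidence equals the cumulative $\nu_{t-1}$-mass strictly below the book-profit level of the $s$-shares; since the wash shares occupy the very bottom (book profit $0$) of the translated-clipped collection and are charged first in this cascade, the cascade removes precisely the lowest $(\Delta\vp_{t+1})^-$ units of the translated-and-clipped measure, regardless of the (immaterial) bookkeeping of which unit is ``washed'' and which is ``sold to reduce the position''.

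It remains to translate (a)--(c) into the identity (\ref{3.10.2014.2}). Because $b\mapsto(b+\Delta)\vee0$ is nondecreasing, the quantile function of the translated-and-clipped measure is $x\mapsto(\wt{F}(t-1,x)+\Delta)\vee0$ on $(0,\vp_t]$. If $d\ge0$, adjoining $d\,\delta_0$ shifts this graph right by $d$ and makes it vanish on $(0,d]$, giving $\wt{F}(t,x)=1_{(d,\vp_{t+1}]}(x)\,\big(\wt{F}(t-1,x-d)+\Delta\big)\vee0$; since then $(\Delta\vp_{t+1})^+=d$ and the right-hand side of (\ref{3.10.2014.2}) also vanishes on $\{x\le d\}$, this is (\ref{3.10.2014.2}). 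If $d<0$, deleting the lowest $(-d)$ units shifts the graph left by $(-d)$, giving $\wt{F}(t,x)=\big(\wt{F}(t-1,x-d)+\Delta\big)\vee0$ on all of $(0,\vp_{t+1}]$; since then $(\Delta\vp_{t+1})^+=0$, the indicator in (\ref{3.10.2014.2}) is identically $1$ on $(0,\vp_{t+1}]$ and the two expressions again coincide (the case $d=0$ being contained in either).

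I expect the main obstacle to be the deletion claim in step (c): one must verify carefully that the reduction cascade of (\ref{WS_strategy}) -- which is organised by purchase-time label and whose counter $\sum_{j=s+1}^{t-1}\wt{N}_{j,t-1}$ uses the time-$(t-1)$ counts even for labels that the wash-sale indicator simultaneously annihilates -- nonetheless removes exactly the bottom $(\Delta\vp_{t+1})^-$ units of the translated-clipped book-profit measure. The allocation of individual units to ``washed and rebought'' versus ``sold for position reduction'' is genuinely ambiguous in (\ref{WS_strategy}), and one has to argue that this ambiguity does not reach $\nu_t$; the boundary case $S_s=S_t$ and, more generally, book-profit ties should be handled through the quantile-function formalism rather than pointwise.
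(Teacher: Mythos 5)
Your argument is correct, and it rests on the same two pillars as the paper's proof: the observation that under $\wt{N}$ a share is eliminated as soon as the price falls below its purchase price, so that among shares with $\wt{N}_{s,t-1}>0$ the purchase prices are nondecreasing in $s$ (the paper's (\ref{24.9.2014.1})), and the reading of (\ref{def:bpf_general}) as the nondecreasing (quantile) listing of the book profits. Where you differ is in the packaging and the case split. The paper works directly with the sorted permutation $(k_0,\ldots,k_t)$ and verifies (\ref{3.10.2014.2}) by computing the cumulative counts $\sum_{l\le i}\wt{N}_{k_l,t}$ explicitly, splitting according to whether the position reduction $(\Delta\vp_{t+1})^-$ is absorbed by the wash-sold shares or spills over into shares with nonnegative book profits (formulas (\ref{24.9.2014.3}) and (\ref{eq:noWS_strategy})). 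You instead encode the portfolio as the book-profit measure $\nu_u$, prove a one-step transformation rule (translate by $S_t-S_{t-1}$, clip at $0$, add mass $\Delta\vp_{t+1}$ at $0$ or delete the lowest $(\Delta\vp_{t+1})^-$ units), and split on the sign of $\Delta\vp_{t+1}$; the identity (\ref{3.10.2014.2}) then falls out of the behaviour of quantile functions under these operations, with ties handled automatically at the measure level. The ``deletion claim'' you flag as the main obstacle is exactly where the paper's two-case cumulative-sum computation does its work, and your own justification already contains the decisive point: by the order coincidence the wash candidates occupy the latest purchase labels, so the cascade in (\ref{WS_strategy}) charges them first and hence strips precisely the bottom $(\Delta\vp_{t+1})^-$ units of the translated-and-clipped measure, while the residual washed shares re-enter at level $0$ through $\wt{N}_{t,t}$; a fully written-out version would still have to carry out the short counting argument (your ``short count'' for the mass at $0$ and the telescoping of the cascade), which is what the paper's explicit formulas supply. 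Your formulation has the advantage of making the irrelevance of the washed-versus-sold bookkeeping transparent, whereas the paper's computation is more elementary and self-contained.
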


\begin{proof}
If the stock price falls below the purchasing price of a particular share, then, following (\ref{WS_strategy}), this share is definitely sold. 
Consequently, one has that
\beam\label{24.9.2014.1}
S_{s_1}\le S_{s_2}\quad\mbox{for all\ }s_1<s_2\le t-1\ \mbox{with\ }\wt{N}_{s_1,t-1}>0,
\eeam 
i.e., book profits are nondecreasing in the residence time~$t-s$. Put differently, if one only considers points $s\in\{0,\ldots,t-1\}$ with $\wt{N}_{s,t-1}>0$, 
the stock price is nondecreasing. Thus, in (\ref{WS_strategy}),
the investor sells the shares with the lowest book profits (because they have the shortest residence times), 
and the strategy $\wt{N}$ given by (\ref{WS_strategy})/(\ref{WS_strategy_self}) 
reads: $\wt{N}_{0,0}=\varphi_1$ and
\allowdisplaybreaks{\begin{align*}
\wt{N}_{k_i,t}=&0,\quad\mbox{for}~i\in\{0,\ldots,j-1\},\\
\wt{N}_{k_i,t}=&\left(\wt{N}_{k_i,t-1}-\left(\left(\Delta\varphi_{t+1}\right)^--\sum_{\stackrel{l=0}{l\not=j}}^{i-1}\wt{N}_{k_l,{t-1}}\right)^+\right)^+,\quad i\in\{j+1,
\ldots,t\},\nonumber\\
\wt{N}_{k_j,t}=\wt{N}_{t,t}=&\Delta \vp_{t+1} + \sum_{\stackrel{l=0}{l\not=j}}^{t-1}(\wt{N}_{k_l,{t-1}}-\wt{N}_{k_l,t}),
\end{align*}}
for $t\in\{1,\ldots, T\}$, where $j$ and the permutation~$(k_0,\ldots,k_t)$ are given in (\ref{24.9.2014.2}).\\

{\em Case 1:} $\left(\Delta \varphi_{t+1}\right)^-\leq\sum_{l=0}^{j-1}\wt{N}_{k_l,t-1}$ (note that this includes the case $\Delta \varphi_{t+1}>0$).

One has $\wt{N}_{k_l,t}=\wt{N}_{k_l,t-1}$ for $l\ge j+1$ and arrives at
 
\beam\label{24.9.2014.3}
     \sum_{l=0}^i \wt{N}_{k_l,t}=\left\{\begin{array}{ll} 0, & i\in\{-1,0,\ldots,j-1\}\\
         \sum_{\begin{subarray}{l}l=0\\l\not=j \end{subarray}}^{i}\wt{N}_{k_l,t-1}+\Delta \varphi_{t+1}, & i\in\{j,\ldots,t\}\\ \end{array}\right. .
\eeam

For $x\in(0,\vp_{t+1}]$, one obtains 
\beao
 \wt{F}(t,x) & = & \sum_{i=0}^t (S_t-S_{k_i})1_{\left(\sum_{l=0}^{i-1}\wt{N}_{k_l,t},\sum_{l=0}^{i}\wt{N}_{k_l,t}\right]}(x)\\
        & \stackrel{(\ref{24.9.2014.3})}{=} & \sum_{i=j+1}^t(S_{t-1}-S_{k_i} + S_t - S_{t-1})1_{\left(\sum_{\stackrel{l=0}{l\not=j}}^{i-1}\wt{N}_{k_l,t-1}+\Delta \varphi_{t+1},
	\sum_{\stackrel{l=0}{l\not=j}}^{i-1}\wt{N}_{k_l,t-1}+\Delta \varphi_{t+1}\right]}(x)\\
        & = & 1_{\left((\Delta \vp_{t+1})^+, \vp_{t+1}\right]}(x)\left(\wt{F}(t-1,x-\Delta\varphi_{t+1})+(S_t-S_{t-1})\right)\vee0,
\eeao
where the last equality holds by $\left(\Delta \varphi_{t+1}\right)^+\leq\sum_{l=0}^{j-1}\wt{N}_{k_l,t-1} + \Delta \varphi_{t+1}$,
$S_t-S_{k_i}\le 0$ for $i\le j$, and $S_t-S_{k_i}\ge 0$ for $i\ge j+1$.\\

{\em Case 2:} $\left(\Delta \varphi_{t+1}\right)^->\sum_{l=0}^{j-1}\wt{N}_{k_l,t-1}$ (i.e., after the reduction of the position by $\left(\Delta \varphi_{t+1}\right)^-$, 
there are no shares with negative book profits). Define
  $$\wh{m}:=\min\left\{i\ |\ \left(\Delta\varphi_{t+1}\right)^- \le \sum_{\stackrel{l=0}{l\not=j}}^{i}\wt{N}_{k_l,{t-1}}\right\}.$$
We have $\wh{m}\ge j+1$ and arrive at
\begin{align}\label{eq:noWS_strategy}
     \sum_{l=0}^i\wt{N}_{k_l,t}=\left\{\begin{array}{ll} 0, & i\in \{-1,0,\ldots,\wh{m}-1\}\\
         \sum_{\stackrel{l=0}{l\not=j}}^{i}\wt{N}_{k_l,t-1}+\Delta \varphi_{t+1}, &
     i\in\{\wh{m},\ldots,t\}\end{array}\right. .
\end{align}

For $x\in(0,\varphi_{t+1}]$, one obtains
\beao
 \wt{F}(t,x) &=&\sum_{i=0}^t (S_t-S_{k_i})1_{\left(\sum_{l=0}^{i-1}\wt{N}_{k_l,t},\sum_{l=0}^{i}\wt{N}_{k_l,t}\right]}(x)\\
& \stackrel{(\ref{eq:noWS_strategy})}{=}&\sum_{i=\wh{m}}^t(S_{t-1}-S_{k_i}+S_t-S_{t-1})1_{\left(\sum_{\stackrel{l=0}{l\not=j}}^{i-1}\wt{N}_{k_l,t-1}+\Delta \varphi_{t+1},\sum_{\stackrel{l=0}{l\not=j}}^{i}\wt{N}_{k_l,t-1}+\Delta \varphi_{t+1}\right]}(x)\\
       &=&\wt{F}(t-1,x-\Delta\varphi_{t+1})-(S_t-S_{t-1}),
\eeao
where the last equality holds by $x-\Delta\varphi_{t+1}= x + (\Delta \vp_{t+1})^- >\sum_{\stackrel{l=0}{l\not=j}}^{\wh{m}-1}\wt{N}_{k_l,{t-1}}$ for $x>0$.

\end{proof}

\begin{proof}[Proof of Theorem \ref{theorem:main111}]
Let $N$ and $\wt{N}$ be as in the theorem with corresponding book profit functions $F$ resp. $\wt{F}$ as defined in (\ref{def:bpf_general}). Let us first show that
\beam\label{27.9.2014.1}
F(t,x)\le 1_{\left((\Delta \vp_{t+1})^+, \vp_{t+1}\right]}(x)\left(F(t-1,x-\Delta\varphi_{t+1})+S_t-S_{t-1}\right)\vee 0,\quad x\in(0,\vp_{t+1}].
\eeam
For $x\in(0,\vp_{t+1}]$, let $i\in\{0,\ldots,t\}$ s.t. $x\in\left(\sum_{l=0}^{i-1}N_{k_l,t},\sum_{l=0}^i N_{k_l,t}\right]$. If $i\le j$ (cf. (\ref{24.9.2014.2})), 
one has $F(t,x)=S_t-S_{k_i}\le 0$, and (\ref{27.9.2014.1}) holds. Thus, it remains to consider the case $i\ge j+1$. For this, we have 
\beao
x > \sum_{l=0}^{i-1}N_{k_l,t} = N_{t,t} + \sum_{\stackrel{l=0}{l\not=j}}^{i-1}N_{k_l,t} & = & \sum_{\stackrel{l=0}{l\not=j}}^{i-1}N_{k_l,{t-1}} + N_{t,t} + 
\sum_{\stackrel{l=0}{l\not=j}}^{i-1}(N_{k_l,t}-N_{k_l,{t-1}})\\
& \stackrel{(\ref{18.9.2014.3})}{\ge} & \sum_{\stackrel{l=0}{l\not=j}}^{i-1}N_{k_l,{t-1}} + N_{t,t} + \sum_{\stackrel{l=0}{l\not=j}}^t(N_{k_l,t}-N_{k_l,{t-1}})\\
& \stackrel{(\ref{18.9.2014.2})}{=} & \sum_{\stackrel{l=0}{l\not=j}}^{i-1}N_{k_l,{t-1}} +\Delta \varphi_{t+1}.
\eeao
By monotonicity of $F$, this implies $F(t-1,x-\Delta \varphi_{t+1})\ge S_{t-1}-S_{k_i} = F(t,x) - (S_t-S_{t-1})$
and together with $x>N_{t,t}\ge (\Delta \vp_{t+1})^+$, we arrive at (\ref{27.9.2014.1}).

With Lemma~\ref{27.9.2014.2} and (\ref{27.9.2014.1}), it follows by induction in $t$ that $F(t,x)\le \wt{F}(t,x)$ ,\ for all $t=1,\ldots,T$ and 
$x\in(0,\vp_{t+1}]$ (note that $F(0,x)=\wt{F}(0,x)=0$).
By Lemma \ref{lem:rewrite}, the assertion follows.
\end{proof}

\subsection{Relation to the model from Section~\ref{main_results}}\label{3.10.2014.1}

It remains to prove that the discrete time version of our model introducted in Section~\ref{main_results} does indeed coincide with the model of Dybvig/Koo with $N=\wt{N}$.  
Let $(\vp_t)_{t=1,\ldots,T+1}$ be a discrete time predictable process, i.e., $\vp_t$ is $\mathcal{F}_{t-1}$-measurable. By $\wt{F}$, we denote the corresponding discrete time 
book profit function in the sense of (\ref{def:bpf_general}) for $N=\wt{N}$. By $F$, we denote the continuous time book profit function in the sense of (\ref{ftx1})
for the piecewise constant strategy~$\sum_{n=1}^T \varphi_n1_{(n-1,n]}\in \bbl$ and the stock price process~$S=\sum_{n=0}^T S_n1_{[n,n+1)}$. 
This is the standard embedding of a discrete time market model into a continuous time framework. 
Let us show that
\beao
\wt{F}(t,x) = F(t+,x),\quad t=0,1,\ldots,T-1.
\eeao
This means that $\wt{F}(t,\cdot)$ already contains the portfolio regroupings that take place at price~$S_t$ (note that in a discrete time model, there can only be {\em one} 
change at time~$t$, whereas in continuous time, there can be a change between $t-$ and $t$ and between $t$ and $t+$).

For the piecewise constant process~$\sum_{n=1}^T \varphi_n1_{(n-1,n]}$, the right limit of the purchasing time~(\ref{13.8.2014.1}) reads
\beao
\tau_{t+,x}=\lim_{s>t,\ s\to t}\tau_{s,x}=\max\{u\in\{0,1,\ldots,t\}\ |\ \vp_u\le \vp_{t+1} - x\},\quad x\in[0,\vp_{t+1}],
\eeao 
with the convention from Section~\ref{main_results}  that $\vp_0=0$ 
(note that the increment $\vp_{u+1}-\vp_u$ is purchased at price~$S_u$). One has the implications $\tau_{t+,x}<t\quad\Rightarrow\quad \tau_{(t-1)+,x-\Delta \vp_{t+1}} = \tau_{t+,x}$
and $\tau_{t+,x}=t\quad\Leftrightarrow\quad x\le (\Delta \vp_{t+1})^+$. This implies
\beao
S_t - \min_{\tau_{t+,x}\le u\le t} S_u & = & \left(S_t-S_{t-1}+S_{t-1}-\min_{\tau_{t+,x}\le u\le t-1} S_u\right)\vee 0\\
& = & 1_{\left((\Delta \vp_{t+1})^+, \vp_{t+1}\right]}(x)\left(S_t-S_{t-1}+S_{t-1}-\min_{\tau_{(t-1)+,x-\Delta \vp_{t+1}}\le u\le t-1} S_u\right)\vee 0
\eeao
(with $\min \emptyset := \infty$), i.e., $F(t+,x)=S_t - \min_{\tau_{t+,x}\le u\le t} S_u$ satisfies the recursion~(\ref{3.10.2014.2}), and thus, it coincides 
with $\wt{F}(t,x)$. By Lemma~\ref{lem:rewrite} and Proposition~\ref{prop1}, this implies that the tax payment process defined in (\ref{18.9.2014.1}), with $N=\wt{N}$, 
coincides with the right limit of the tax payment process from Definition~\ref{Pi_elementar}.

\subsection{Proof of Proposition~\ref{30.1.2015.1}}\label{1.2.1015.1}

{\em Step 1:} Let us first prove the assertion for the  model of Dybvig/Koo with $N=\wt{N}$ for any given nonnegative $\vp$. By Theorem~\ref{theorem:main111}, we have that 
\beam\label{30.1.2015.2}
\Pi(\wt{N})=\inf_N \Pi(N),
\eeam
where $\Pi$ is defined in (\ref{18.9.2014.1}), and the infimum is taken over all nonnegative $N$ that lead to the total number~$\vp$ of shares. 
$\Pi$ is obviously linear in $N$ which already implies  positive homogeneity of (\ref{30.1.2015.2}) in $\vp$.
In addition, for any nonnegative $N^1$ leading to total number 
$\vp^1$ and any nonnegative $N^2$ leading to total number $\vp^2$, the sum $N^1+N^2\ge 0$ leads to the total position $\vp^1+\vp^2$.  
Thus, (\ref{30.1.2015.2}) is subadditive in $\vp$.\\

To see that (\ref{30.1.2015.2}) is in general not additive in $\vp$, consider $\vp^1=1_{(0,1]}$ and $\vp^2=1_{(1,2]}$, i.e., $\vp^1+\vp^2 = 1_{(0,2]}$.
For $\vp^1+\vp^2$, the tax payments at time~$1$ are $\alpha\left(S_1-\max(S_0,S_1)\right)$, but $\alpha\left(S_1-S_0\right)$ for $\vp^1$ and
zero for $\vp^2$. This already shows the non-additivity. 
Also note that for $\wt{N}$ associated to $\vp^1+\vp^2$, we have that
$\wt{N}_{0,1}=1$ if $S_1>S_0$. I.e., there is one share that is bought at time~0 and kept in the portfolio beyond time~1. On the other hand,
one has $N_{0,1}=0$ for all $N$ that lead to the total position $\vp^1$ or $\vp^2$.\\

{\em Step 2:} By Subsection~\ref{3.10.2014.1}, the properties carry over to the continuous time tax processes from Section~\ref{main_results}.
Namely, in the continuous time setting, we first fix finitely many stopping times at which $\vp$ can change its value and replace the infima 
in (\ref{16.10.2014.1}) by corresponding infima along a finite grid. 
Then, the modified tax processes coincide with the tax processes from Step~1 for an appropriate chosen discrete time market model. As the stock price is c\`adl\`ag, the infima
along the grid points converge pathwise to their continuous time counterparts when the mesh of the grid tends to zero. Thus, subadditivity and positive homogeneity
is proven for the tax processes from (\ref{16.10.2014.1}) for elementary strategies, and by Theorem~\ref{main_theorem}, they carry over to all strategies.

\addcontentsline{toc}{section}{Literaturverzeichnis}

\bibliography{bibliography.kuehn.ulbricht}

\end{document}